\documentclass[11pt]{article} 
\usepackage{graphics}
\usepackage{amsfonts}
\usepackage{amsmath}
\usepackage{amssymb}
\usepackage{mathrsfs}
\usepackage{graphicx}
\usepackage{amsthm}
\setcounter{tocdepth}{3}

\usepackage{soul} 

\sloppy
\hyphenpenalty 1000000000

\renewcommand \appendix{
\setcounter{section}{0}
\setcounter{subsection}{0}
\setcounter{figure}{0}
\setcounter{table}{0}
\setcounter{equation}{0}
\renewcommand \thesection{\Alph{section}}
\renewcommand \theequation{\Alph{section}\arabic{equation}}
\renewcommand \thefigure{\Alph{section}\arabic{figure}}
\renewcommand \thetable{\Alph{section}\arabic{table}}
}

\newtheorem{theorem}{Theorem}[section]
\newtheorem{corollary}{Corollary}[theorem]
\newtheorem{proposition}{Proposition}[theorem]
\newtheorem{lemma}[theorem]{Lemma}
\newtheorem{definition}{Definition}[section]

\begin{document}

\title{{\Large \bf Isochrony in 3D radial potentials}\\
{\small \bf From Michel H{\'e}non ideas to isochrone relativity:}\\
{\small \bf classification, interpretation and applications}\\
~}

\author{Alicia Simon-Petit$^1$\\
J\'{e}r\^{o}me Perez$^1$\\
 Guillaume Duval $^2$\\
{\small $^1$ Ensta ParisTech, Paris Saclay University, Applied
Mathematics Laboratory, France }\\ 
{\small $^2$ INSA Rouen, Mathematics \& Informatics Laboratory, France}}

\maketitle

\begin{abstract}
Revisiting and extending an old idea of Michel H\'{e}non, we geometrically
and algebraically characterize the whole set of isochrone potentials. Such potentials
are fundamental in potential theory. They appear in spherically
symmetrical systems formed by a large amount of charges (electrical or
gravitational) of the same type considered in mean-field theory. Such potentials are defined by the fact that
the radial period of a test charge in such potentials, provided that it exists, depends only on its energy and not on its angular momentum. Our characterization of 
the isochrone set is based on the action of a real affine subgroup 
on isochrone potentials related to parabolas in the $\mathbb{R}^2$ plane. Furthermore, any isochrone orbits are mapped onto associated Keplerian 
elliptic ones by a generalization of the Bohlin transformation. This mapping allows us to understand the isochrony property of a given potential 
as relative to the reference frame in which its parabola is represented. We detail this isochrone relativity in the special relativity formalism.
We eventually exploit the completeness of our characterization and the relativity of isochrony to propose a deeper
understanding of general symmetries such as Kepler's Third Law and
Bertrand's theorem. 
\end{abstract}


\tableofcontents

\section{Introduction}

Macroscopic properties of self-gravitating systems can be derived from the
orbits of their components, e.g. stars. These orbits are designed by the
potential -- density pair ($\psi$ -- $\rho$) involved in Poisson's
equation $\Delta \psi=4\pi G\rho$. This pair forms a steady-state model for
such astrophysical systems and there are essentially two ways to produce a physically
relevant model --- one depending on empirical input, the other on theoretical input.

By compiling observational data, one can look for the emergence of an
\emph{empirical} model. For example, consider 
de Vaucouleur's law for elliptical
galaxies in the middle of the twentieth century \cite{r1/4}. In that paper,
the author remarks that the projection $I\left(  R\right)  \ $of the
luminosity onto the plane of the sky of elliptical galaxies varies as a
function of an apparent distance $R$ from the center as $I\left(  R\right)
\propto \exp \left(  -R^{1/4}\right)  $. From $I\left(  R\right)  $, assuming a
given mass--to--light ratio, one can build the mass density $\rho$ of the
system and, solving Poisson's equation, obtain a gravitational potential for
elliptical galaxies. This problem is generally ill-posed: as a matter of
fact, after the projection, a lot of ``good" potentials (Jaffe \cite{jaffe},
Hernquist \cite{hernquist}, Dehnen \cite{Dehnen} or NFW \cite{NFW})
produce $R^{1/4}$--compatible luminosity profiles. Apart from this empirical
property all these famous models are poorly justified physically.

The reverse approach is much less investigated. The mass density is a 
marginal velocity law of the one-particle distribution function $f$ associated with a
self-gravitating system. This function $f\left(  t,\mathbf{r},\mathbf{p}%
\right)$ describes the statistical properties of a test particle of mass $m$,
position $\mathbf{r}$ and momentum $\mathbf{p}$ in the mean field
gravitational potential $\psi \left(  t,\mathbf{r}\right)$. These two
functions satisfy the Collisionless-Boltzmann and Poisson system%
\[
\left \{
\begin{array}
[c]{l}%
\frac{\partial f}{\partial t}+\left \{  f,E\right \}  =0,\\
\Delta \psi=4\pi G\rho=4\pi mG\int fd\mathbf{p},%
\end{array}
\right.
\]
where $E=\frac{\mathbf{p}^{2}}{2m}+m\psi$ is the total energy of the test
particle and $\left \{  ,\right \}  $ stands for the Poisson bracket. Using basic
properties of these brackets, one can see that the simplest steady states are
described by $f\left(  E\right)$: this is the simplest case of Jeans'
theorem (see e.g. \cite{BT08}). Involving Gidas-Ni-Nirenberg theorem
\cite{GNN}, one can show (\cite{PA} sect. 2) that, if their total mass is finite, the corresponding
self-gravitating systems are spherical and isotropic and thus their gravitational
potentials are radial, $\psi=\psi \left(  r\right)  $ with $r=\left \vert
\mathbf{r}\right \vert $. Stability analysis can restrict possible steady
states to decreasing and positive $f$ but nothing general can be said
anymore about the choice of an equilibrium in this context. Adding thermodynamic
considerations, Lynden-Bell \cite{VioRel} has initiated a long debate. Based on the fact that
in three spatial dimensions there is no regular isothermal steady states with finite mass,
this debate is often summarized by the fact that isolated self-gravitating systems could settle
down in a truncated isothermal state with a core-halo density distribution. The size
of the core and the slope of the halo depend on structural dissipation
which can occur in the system. This point will be discussed in a forthcoming
paper.

In a singular and seminal paper in French, H\'{e}non \cite{Henon58} (for an
English translation see \cite{BinneyHEnon}) followed  
another way to address
this problem. Radial potentials confer to any of their confined test particles
the property to have a periodic radial distance from the center of the system.
This radial period $\tau_{r}\ $depends generically on the two physical
parameters of this test particle: its energy $E$ and the modulus $L^{2}$ of
its squared angular momentum. H\'{e}non remarks that orbits confined around
the center of the system (which evolve generically in a harmonic potential)
and orbits confined to the outer parts (which evolve generically in a
Keplerian potential) have a radial period that depends only on $E$. He then
proposed looking for a general potential which could be characterized by this
property. He succeeded by finding his famous isochrone potential.
Although his potential gives a mass density in pretty good accordance with
some of the observed globular clusters at the time, history has decided to
follow another direction. In his conclusion, Michel H{\'e}non proposed a mechanism 
based on resonances that could lead to the formation of an isochrone. This mechanism
needed to be considered more accurately and proved substantially~\cite{Henon58,BinneyHEnon}, but it has not been further investigated.
In addition to Lynden-Bell's work on violent relaxation and the
above-mentioned debate that followed, the observational data refinement and the
development of numerical simulations revealed a great variety of profiles for
self-gravitating systems and H\'{e}non's isochrone became one among them.
Recent works (in a paper in preparation by Simon-Petit, A., Perez, J, and Plum, G.) 
reveal that, as suggested by H\'{e}non \cite{Henon58}
in his conclusion, isochrony could in fact be inherited from the formation
process of isolated self-gravitating systems. Hence there could be a
fundamental initial state from which, after the initial collapse, the observed
diversity could arise.

For all of these reasons, we have decided to revisit in detail
isochrony in radial potential-governed systems. Inspecting 
H\'{e}\-non ideas we have found that his work is far from exhaustive in a
mathematical sense even if the potential he has found might be one of the most
important for physical applications. We propose in this paper to characterize
the whole set of isochrone potentials in a rigorous way. This characterization
will help for a global understanding of the importance of the isochrone
property and will clarify some important physical symmetries occurring in
gravitation like Kepler's Third law or Bertrand's theorem.

The paper is organized as follows. In the spirit of Michel H\'{e}non, section
\ref{sec2} is devoted to geometry. In sec.~\ref{subsecHenonparabola} we first
recall the basics of the problem of potential isochrony, general definitions
and the H\'{e}non link between isochrony and parabolas. In addition, we call
for a rigorous proof of this parabola property which is given in
appendix~\ref{appendixa}. Taking into account very general physical properties, we
introduce in sec. \ref{subsec:GalPropIsoParabolas} appropriate
transformations 
and prove three lemmas
which allow us to restrict the study to parabolas passing through the origin
with a vertical or a horizontal tangent. As these transformations 
leave invariant vertical lines, these three lemmas will
make clear the decomposition of parabolas into four families: ones with a
vertical symmetry axis (straight parabolas) and others (tilted parabolas) that
are classified in three different types depending on the parabola orientation and vertical tangent position.
These transformations and hence this distinction were not identified by
H\'{e}non who also missed some elements of the isochrone set. Thanks to this
geometric decomposition, we deduce in
sec. \ref{subsec:isochroneclassification} the whole set of isochrone potentials. It is a
modulus space in which each point is a potential from one of the four classes 
of equivalence of parabolas under the action of the previous
transformations. 
This algebraic representation classifies the four isochrone potential types but 
separates them in a partition of four equivalence classes. 
However, isochrone potentials are unified, linking orbits together.

In section~\ref{sec3}, we focus on the isochrone orbits. Based on the fundamental differential orbital equation, we present in 
sec.~\ref{subsec_LBidea} the most general transformation which preserves isochrony and angular momentum 
when applied to a given orbit. This linear application is then identified as a generalization of the well-known 
Bohlin transformation (\cite{Bohlin,ArnoldBol}) as well as the brilliant idea of Donald Lynden-Bell~\cite{LyndenBell}. 
It continuously maps isochrone orbits onto their Keplerian associates. 
This Keplerian character of a given isochrone orbit is developed in sec.~\ref{subsec:isochronerelativity}.
Adapting the time, energy and angular momentum of a given isochrone orbit 
 in an isochrone potential, it is shown in detail how to map this orbit onto its associated Keplerian one in the appropriate frame.

The last section is devoted to physical applications of this isochrony classification and interpretation. 
We first present in sec.~\ref{subsec_physapp} the physical
properties of systems associated with isochrone potentials. In particular, we
give in table \ref{table_tau_n} the explicit formulation of $\tau_{r}\left(
E\right)  $ and $n_{\varphi}\left(  L^{2}\right)  $ for all isochrone
potentials. The properties of $\tau_{r}\left(  E\right)  $ allow us to give a
generalization of Kepler's third law in sec.~\ref{subsec_kep_third_law}.
Eventually we show in sec.~\ref{BTh} that the famous Bertrand's theorem about closed orbits in
radial potentials is just a corollary of a general property of isochrone orbits.
 
Four appendices detail important results for isochrony used in the paper. 

\section{The isochrone geometry\label{sec2}}

\subsection{H\'{e}non's parabola\label{subsecHenonparabola}}

We consider a stellar system described by a gravitational potential
$\psi \left(  \mathbf{r}\right)  =\psi \left(  r\right)$, where $\mathbf{r}$ is
the position vector of a test particle of mass $m$ confined in this system. The orbit of
this test particle is contained in a plane. In this plane, the two parameters
of this orbit are its energy $E=m\xi$ and the norm of its angular momentum
$L=m\Lambda$. Both these two parameters contribute to the definition of the
gravitational potential of the cluster $\psi \left(  r\right)  $ and to the
computation of the distance $r$ between the star and the center of mass of the
cluster at each time $t$. This contribution is summarized in the definition of
the energy of the star,%
\begin{equation}
\xi=\frac{1}{2}\left(  \frac{dr}{dt}\right)  ^{2}+\frac{\Lambda^{2}}{2r^{2}%
}+\psi \left(  r\right)  =\mathrm{cst.} \label{defE}%
\end{equation}

We are interested in increasing\footnote{This restriction characterizes the
gravitational interaction for which Gauss' theorem in spherical symmetry
indicates that $\frac{d\psi}{dr}=\frac{GM\left(  r\right)  }{r^{2}}>0$.}
potentials $\psi \left(  r\right)  \ $for which\ the \textsc{ode} \eqref{defE}
admits periodic solutions, named hereafter Periodic Radial Orbits
(\textsc{pro}s). The effective potential$\  \psi_{e}(r)=\frac{\Lambda^{2}%
}{2r^{2}}+\psi \left(  r\right)  $ then reaches a global minimum and diverges to
$+\infty$ when $r\rightarrow0$ as shown in figure \ref{poteff2}. When
they exist, the apoastron at distance $r_{a}$ and periastron at $r_{p}$ of a \textsc{pro}
are given by the two intersections of the graph of $\psi_{e}$ with constant
$\xi-$lines. For a given energy $\xi_{c}$ corresponding to the minimum of $\psi_e$, the distance $r_a=r_p$ and the orbit is circular.

\begin{figure}[ptb]
\centering
\resizebox{0.7\textwidth}{!}{\includegraphics{./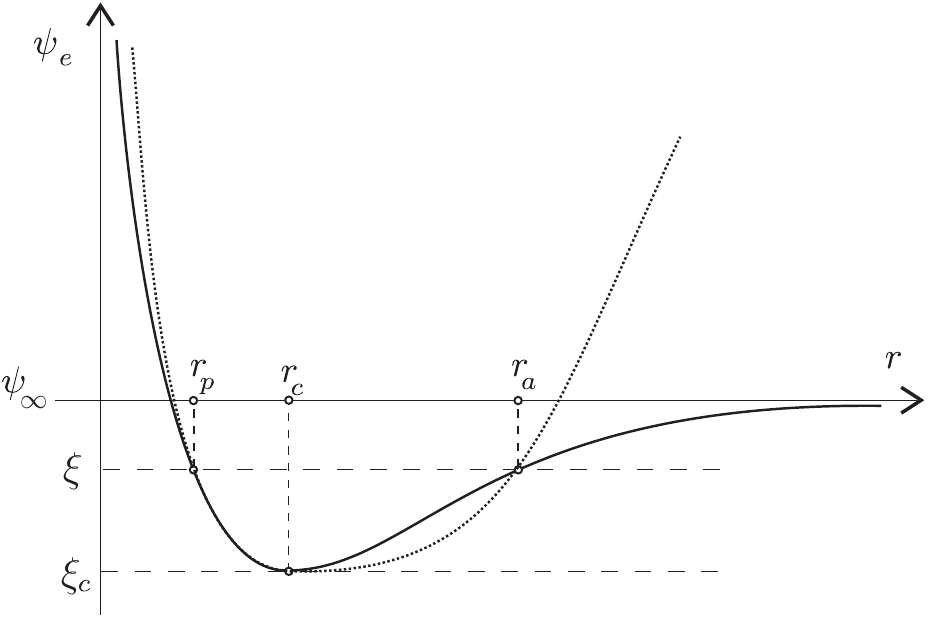}}
\centering
\caption{Physical effective potentials that allow Periodic Radial Orbits. The
solid curve corresponds to a finite $\psi_{\infty}$ and the dashed curve
corresponds to an infinite $\psi_{\infty}$.}%
\label{poteff2}%
\end{figure}

\bigskip In order to clarify the vocabulary we will use, let us define two
fundamental potentials in this context.

\begin{definition}\label{def:kepetharm}
The harmonic potential is defined by $\psi_{\mathrm{ha}}\left(
r\right)  =\frac{1}{2}\omega^{2}r^{2}$ with $\omega \neq0$. We call
the potential $\psi_{\mathrm{ke}}\left(  r\right)  =-\dfrac{\mu}{r} $ with
$\mu>0$ a Keplerian potential.
\end{definition}

To get the existence of a global minimum of the effective potential $\psi_e$ and hence of
\textsc{pro}'s, we have to specify the behavior of the potential $\psi$ when
$r\rightarrow0$. This is the objective of the following lemma.

\begin{lemma}
\label{lemme0}If for some $\Lambda \geq0$, the effective potential $\psi
_{e}\left(  r\right)  \rightarrow+\infty$ when $r\rightarrow0$, then
$\lim \limits_{r\rightarrow0}r^{2}\psi \left(  r\right)  =\ell<\infty$.
Conversely, if $\lim \limits_{r\rightarrow0}r^{2}\psi \left(  r\right)
=\ell<\infty$, then for any $\Lambda \geq0$ the effective potential $\psi
_{e}\left(  r\right)  \rightarrow+\infty$ when $r\rightarrow0$ provided that
$\ell>-\Lambda^{2}$.
\end{lemma}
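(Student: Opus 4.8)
The plan is to prove the two implications separately, analyzing the behaviour of $\psi_e(r)=\frac{\Lambda^2}{2r^2}+\psi(r)$ as $r\to 0$ by isolating the term $r^2\psi(r)$.

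For the forward implication, assume $\psi_e(r)\to+\infty$ as $r\to 0$ for some fixed $\Lambda\ge 0$. Multiplying through by $r^2$ gives $r^2\psi_e(r)=\frac{\Lambda^2}{2}+r^2\psi(r)$, and since $r^2\psi_e(r)\to 0\cdot(+\infty)$ is not immediately conclusive, I would instead argue more carefully: from $\psi_e(r)\to+\infty$ there is $r_0>0$ with $\psi_e(r)\ge 0$ on $(0,r_0)$, so $r^2\psi(r)\ge -\frac{\Lambda^2}{2}$ there, giving a lower bound. For the upper bound I would use that $\psi$ is increasing (a standing hypothesis in this section, justified by Gauss' theorem in the footnote): on $(0,r_0)$ we have $\psi(r)\le\psi(r_0)$, hence $r^2\psi(r)\le r^2\psi(r_0)\to 0$. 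Thus $r^2\psi(r)$ is bounded near $0$; to upgrade boundedness to the existence of a genuine limit $\ell$, I would note that $g(r):=r^2\psi(r)$ need not be monotone, so the cleanest route is to observe that $\liminf$ and $\limsup$ of $g$ at $0$ are finite, and then show they coincide. Here the monotonicity of $\psi$ is again the key lever: for $0<r<s<r_0$, $g(r)=r^2\psi(r)\le r^2\psi(s)\le \left(\tfrac{r}{s}\right)^2 g(s)\cdot\frac{1}{1}$ — I would make this comparison precise to squeeze the two limits together, concluding $\lim_{r\to0}r^2\psi(r)=\ell$ exists and is finite (indeed $\ell\ge-\Lambda^2/2$, a fortiori $\ell>-\Lambda^2$ is compatible with equality being excluded, which I would track).

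For the converse, assume $\lim_{r\to 0}r^2\psi(r)=\ell<\infty$ with $\ell>-\Lambda^2$. Then $r^2\psi_e(r)=\frac{\Lambda^2}{2}+r^2\psi(r)\to\frac{\Lambda^2}{2}+\ell=:c$. If $\Lambda>0$ then $c=\frac{\Lambda^2}{2}+\ell>\frac{\Lambda^2}{2}-\Lambda^2=-\frac{\Lambda^2}{2}$; I want $c>0$, which follows from $\ell>-\Lambda^2/2$ — so I should double-check the paper's intended hypothesis is $\ell>-\Lambda^2/2$ rather than $\ell>-\Lambda^2$, or else handle the borderline $\ell\in(-\Lambda^2,-\Lambda^2/2]$ using the increasing nature of $\psi$ to still force divergence. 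Granting $c>0$, we get $\psi_e(r)=\frac{c+o(1)}{r^2}\to+\infty$. The case $\Lambda=0$ forces $\ell\ge 0$ (since $\psi$ increasing and $r^2\psi(r)$ would otherwise be eventually negative and bounded away from $0$ only if $\psi<0$, which is allowed; here $\ell>-\Lambda^2=0$ gives $\ell>0$ directly), so $\psi_e(r)=\psi(r)\ge \ell/r^2\cdot(1+o(1))\to+\infty$ as well — though for $\Lambda=0$ one should note $\psi_e=\psi$ and $\psi$ increasing already forces $\psi(r)\to+\infty$ only if $\lim\psi$ is $+\infty$; I would instead derive it directly from $r^2\psi(r)\to\ell>0$.

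The main obstacle is the forward direction: boundedness of $r^2\psi(r)$ near $0$ is immediate from the increasing hypothesis, but promoting it to the \emph{existence} of the limit $\ell$ requires genuinely using monotonicity of $\psi$ in a quantitative comparison (the inequality $r^2\psi(r)\le (r/s)^2 s^2\psi(s)$ for $r<s$), and one must be careful about the sign of $\psi(s)$ when making that scaling step. A secondary subtlety is bookkeeping the strictness: matching the paper's condition $\ell>-\Lambda^2$ against what the effective-potential blow-up actually yields (which naively looks like $\ell\ge-\Lambda^2/2$), so I would state clearly in which direction each strict inequality is needed and invoke the increasing property of $\psi$ to close the gap in the delicate regime.
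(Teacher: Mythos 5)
Your converse direction is essentially the paper's: multiply $\psi_e$ by $r^2$ and read off the sign of the limit. Your suspicion about the constant is well founded --- with $\psi_e=\frac{\Lambda^2}{2r^2}+\psi$ one gets $r^2\psi_e\to\frac{\Lambda^2}{2}+\ell$, so the clean sufficient condition is $\ell>-\Lambda^2/2$, not $\ell>-\Lambda^2$ (the paper's own line ``$\lim r^2\psi_e=\Lambda^2+\ell$'' drops the factor $\tfrac12$). But do not hope to rescue the regime $\ell\in(-\Lambda^2,-\Lambda^2/2)$ by monotonicity: $\psi(r)=-c/r^2$ is increasing for $c>0$, has $\ell=-c$, and gives $\psi_e\to-\infty$ whenever $c>\Lambda^2/2$. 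The gap there is in the statement, not something a proof can close.

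The genuine gap in your plan is the central step of the forward direction: upgrading boundedness of $g(r)=r^2\psi(r)$ to existence of the limit. Your two bounds ($g\ge-\Lambda^2/2$ from $\psi_e\ge0$, and $\limsup g\le0$ from monotonicity) are correct, but the proposed squeeze $g(r)\le(r/s)^2g(s)$ for $r<s$ only reproduces $\limsup_{r\to0}g(r)\le0$; it gives a one-sided comparison and cannot force $\liminf=\limsup$. Indeed the limit need not exist under your hypotheses: take $0<b<a<\Lambda^2/2$, set $h(r)=\frac{a+b}{2}+\frac{a-b}{2}\sin\bigl(\ln\ln(1/r)\bigr)$ so that $|h'(r)|=o(1/r)$, and put $\psi(r)=-h(r)/r^2$. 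Then $\psi'=(2h-rh')/r^3>0$ near $0$ so $\psi$ is increasing, $\psi_e=(\Lambda^2/2-h(r))/r^2\to+\infty$ since $\Lambda^2/2-h\ge\Lambda^2/2-a>0$, yet $r^2\psi(r)=-h(r)$ oscillates between $-a$ and $-b$ with no limit. So the strongest true conclusion is boundedness of $r^2\psi$ (equivalently, $r^2\psi\not\to-\infty$), which your first two estimates already deliver; the limit-existence step should be abandoned rather than ``made precise.'' The paper sidesteps this by arguing the contrapositive --- it assumes $r^2\psi(r)\to\pm\infty$, uses monotonicity to pin the sign to $-\infty$, and deduces $\psi_e<-\Lambda^2/(2r^2)\to-\infty$ --- implicitly treating ``not finite'' as ``the limit is infinite'' and never confronting the oscillatory case.
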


\begin{proof}
\bigskip The converse claim is obvious since $\lim \limits_{r\rightarrow0}%
r^{2}\psi_{e}\left(  r\right)  =\Lambda^{2}+\ell>0$ if $\ell>-\Lambda^{2}$.
For the first claim, let us assume that $\lim \limits_{r\rightarrow0}r^{2}%
\psi \left(  r\right)  $ is infinite. Then $\lim \limits_{r\rightarrow0}%
\psi \left(  r\right)  $ is also infinite. But since $r\mapsto \psi \left(
r\right)  $ is increasing we must have $\lim \limits_{r\rightarrow0}\psi \left(
r\right)  =-\infty$. So for any $\Lambda>0$, by choosing $r$ close enough to
$0$, we would get
\[
r^{2}\psi \left(  r\right)  <-\Lambda^{2}\implies \psi \left(  r\right)
<-\frac{\Lambda^{2}}{r^{2}}\implies \psi_{e}\left(  r\right)  <-\frac
{\Lambda^{2}}{2r^{2}}%
\]
which implies $\lim \limits_{r\rightarrow0}\psi_{e}\left(  r\right)  =-\infty$. The
claim follows by contraposition. \qed
\end{proof}

These restrictions allow a \textsc{pro} provided that $\xi \in \left[  \xi
_{c},\psi_{\infty}\right)$, where $\psi_{\infty}=\lim \limits_{r\rightarrow
+\infty}\psi \left(  r\right)  $ may be infinite. The total and/or the central
mass of such systems could be infinite but the radial period
\begin{equation}
\tau_{r}={2\int_{r_{p}}^{r_{a}}}\frac{dr}{\  \sqrt{2\left[  \xi-\psi \left(
r\right)  \right]  -\dfrac{\Lambda^{2}}{r^{2}}}} \label{radialperiod}%
\end{equation}
is always finite. This period corresponds to the total duration of the transfer from
$r_{a}$ to $r_{p}$ and back, and it is also related to the $\xi-$derivative
of the radial action $\mathcal{A}_{r}$, which gives the radial pulsation (see
for example \cite{BT08} p. 221)
\begin{equation}
\Omega_{r}^{-1}=\frac{\tau_{r}}{2\pi}=\frac{\partial \mathcal{A}_{r}}{\partial \xi
}\text{ with }\mathcal{A}_{r}=\frac{1}{\pi}{\int_{r_{p}}^{r_{a}}}%
\sqrt{2\left[  \xi-\psi \left(  r\right)  \right]  -\dfrac{\Lambda^{2}}{r^{2}}%
}dr\text{.} \label{djde}%
\end{equation}

This radial action also generates the increment of the azimuthal angle
$\Delta \varphi$ during the transfer from $r_{a}$ to $r_{p}$ and back given by
\begin{equation}
\frac{\Delta \varphi}{2\pi}=n_{\varphi}=-\frac{\partial \mathcal{A}_{r}%
}{\partial \Lambda}. \label{djdl}%
\end{equation}

Both $\tau_{r}$ and $\Delta \varphi$ are clearly two functions of the two
variables $\xi$ and $\Lambda$. In May 1958, Michel H\'{e}non pointed out that
two fundamental potentials, i.e. the Keplerian and harmonic ones, have $\tau_{r}$
which only depends on $\xi$. One year later, in a seminal article in French \cite{Henon58} (for
an English translation see \cite{BinneyHEnon}), he found a family of physical
potentials for which this property remains valid. We propose to complete this
characterization of isochrony by an equivalent property on the azimuthal angle: 
$\Delta \varphi$ only depends on $\Lambda$, see theorem~\ref{thm:characisopot} 
in appendix~\ref{appendix:isocharac}. This family is known as
H\'{e}non's Isochrone. We propose now to follow his steps to recover his
result and eventually extend it by exhibiting the whole set of possible isochrones.

Introducing H\'{e}non's variables,%
\begin{equation}
x=2r^{2}\text{ \ and }Y\left(  x\right)  =x\psi \left(  \sqrt{x/2}\right),
\label{defxY}%
\end{equation}
one can see that the corresponding $x-$values of the periastron and apoastron,
namely $x_{a}$ and $x_{p}$, are the roots of the equation $Y\left(  x\right)
=\xi x-\Lambda^{2}$. As it is detailed in figure~\ref{parabola}, for a fixed
value $\xi$ of the energy, the set of all points $\left(  P_{a,i}%
;P_{p,i}\right)  $ on the lines $y_{i}(x)=$ $\xi x-\Lambda_{i}^{2}$ with
corresponding abscissa $x_{a,i}$ and $x_{p,i}$ form the graph of $Y$. Using a
clever analysis Michel H\'{e}non shows that $\tau_{r}$ only depends on $\xi$
if and only if $P_{0}I$ is proportional to $\left(  x_{p,1}-x_{a\,1}\right)
^{2}$ when $\Lambda^2$ is varying. After a much more involved analysis H\'{e}non was able to prove that
this property characterizes parabolas. This original proof is very technical
and we give a new version of it in theorem \ref{newhenontheo} of 
appendix~\ref{appendixa} highlighting the analytical property of the potentials.

\begin{figure}[h]
\begin{center}
\resizebox{0.8\textwidth}{!}{\includegraphics{./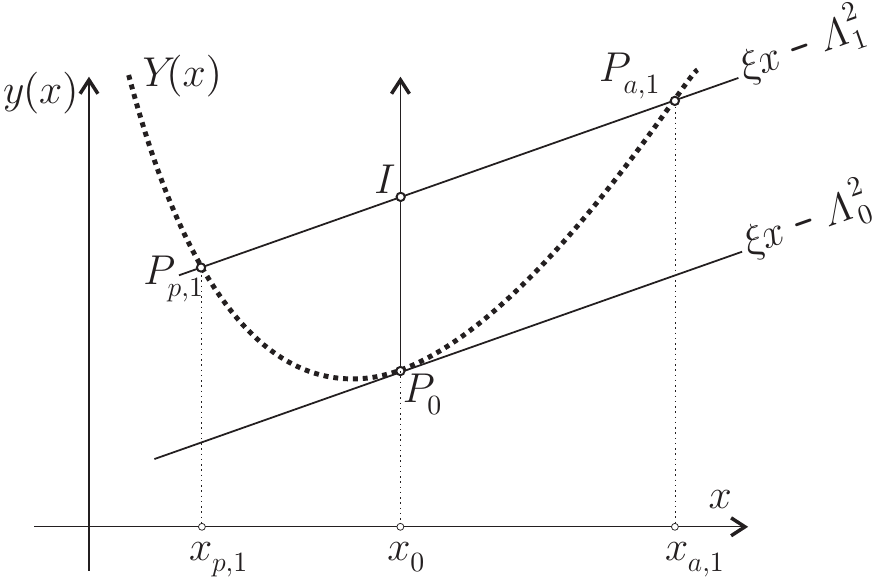}}
\end{center}
\caption{Geometric view of H\'{e}non's variables.}%
\label{parabola}%
\end{figure}

\subsection{General properties of isochrone
parabolas\label{subsec:GalPropIsoParabolas}}

The general equation for a parabola in H\'{e}non's variables is written as%
\begin{equation}
\left(  ax+bY\right)  ^{2}+cx+dY+e=0. \label{parabolaequation}%
\end{equation}
The expressions of the constants $a$, $b$, $c$, $d$ and $e$ in terms of the
problem pa\-ra\-me\-ters are given in the original H\'{e}non paper \cite{Henon58},
where $a$ and $b$ cannot simultaneously vanish. From now on, the
function defined by $\psi : x \mapsto Y(x)/x$ represents an isochrone
potential according to the previous result. Two remarks allow us to simplify
the parabola equation. First, any potential is defined up to a constant
$\epsilon$ which enables us to map $\psi \rightarrow \psi+\epsilon$ or $Y\rightarrow
Y+\epsilon x$ without loss of generality. This transformation is named an
$\epsilon-$\emph{transvection} $\left(  x,Y\right)  \rightarrow \left(x,Y+\epsilon x\right)$.
 Second, by inspection of equation
\eqref{defE}, one can see that if $\psi$ is isochrone then the potential
$\psi^{\ast}\left(  r\right)  =\psi \left(  r\right)  +j_{\lambda}\left(
r\right)  $ where $j_{\lambda}\left(  r\right)  =\frac{\lambda}{2r^{2}}$ is also
isochrone with a new value of the angular momentum $\Lambda^{\prime2}%
=\Lambda^{2}+\lambda>0$. In terms of $Y$ this allows the transformation
$Y\rightarrow Y^{\ast}=Y+\lambda$. Let us call this translation of the parabola
a $\lambda-$\emph{gauge} transformation of an isochrone potential. The action of
a $\lambda-$gauge or $\epsilon-$transvection could be synthesized in an affine
transformation which is denoted as
\[
J_{\epsilon,\lambda}:%
\begin{array}
[c]{ccc}%
\mathbb{R}^{2} & \rightarrow & \mathbb{R}^{2}\\
(x,Y) & \mapsto & (x,Y+\epsilon x+\lambda) .
\end{array}
\]
If we denote by $\mathbb{A}$ the set of these affine transformations $J_{\epsilon,\lambda}$ and
by observing that $J_{\epsilon,\lambda}\circ J_{\epsilon^\prime,\lambda^\prime} = J_{\epsilon+\epsilon^\prime,\lambda+\lambda^\prime}$,
we see that it is a subgroup of affine transformations of the real plane, isomorphic to $(\mathbb{R}^2,+)$. Affine
transformations in H\'{e}non's variables correspond to physical
transformations which preserve the isochrone property. From this, we arrive at three short lemmas to organize the discussion.

\begin{lemma}
\label{lem1} With a vertical translation $J_{0,\lambda}:\ Y\rightarrow Y^{\ast
}=Y+\lambda$, the H{\'e}non Parabola can be reduced to a non-degenerate parabola
passing through the origin of the $\left(  x,y\right)$-axis.
\end{lemma}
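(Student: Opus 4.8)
The plan is to produce the required $\lambda$ explicitly and then check the two parts of the statement. Write the H\'enon parabola $\mathcal{P}$ in the normal form \eqref{parabolaequation}, $(ax+bY)^{2}+cx+dY+e=0$ with $(a,b)\neq(0,0)$, and recall from H\'enon's parabola theorem (theorem~\ref{newhenontheo}) that the graph $\Gamma=\{(x,Y(x)):x>0\}$ of the isochrone potential $\psi$ lies on $\mathcal{P}$; in particular $\mathcal{P}$ is a genuine, non-degenerate parabola (the degenerate alternatives would force $Y$ to be affine on $(0,\infty)$, i.e. $\psi(r)=p+q/(2r^{2})$, whose effective potential is monotone and hence supports no \textsc{pro}).

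First I would locate a point of $\mathcal{P}$ lying on the vertical axis $\{x=0\}$. Since $\psi$ admits \textsc{pro}s, its effective potential diverges at $r=0$, so the first claim of Lemma~\ref{lemme0} gives $\ell:=\lim_{r\to0}r^{2}\psi(r)<\infty$. In H\'enon's variables $Y(x)=x\,\psi(\sqrt{x/2})=2r^{2}\psi(r)$ with $x=2r^{2}$, hence $Y(x)\to2\ell$ as $x\to0^{+}$. Therefore $(0,2\ell)$ is a limit point of $\Gamma$, and since $\mathcal{P}$ is closed (it is the zero set of a polynomial) we conclude $(0,2\ell)\in\mathcal{P}$. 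Equivalently, putting $x=0$ in \eqref{parabolaequation}, the quadratic $b^{2}Y^{2}+dY+e=0$ has $Y=2\ell$ as a real root, and in the straight case $b=0$ it has the root $Y=-e/d$.

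Now set $\lambda:=-2\ell$ and apply $J_{0,\lambda}$. The image $\mathcal{P}^{\ast}=J_{0,\lambda}(\mathcal{P})$ contains $J_{0,\lambda}(0,2\ell)=(0,0)$, so $\mathcal{P}^{\ast}$ passes through the origin of the $(x,y)$-axis, and the associated potential $\psi^{\ast}=\psi+\lambda/(2r^{2})$ is again isochrone with $\lim_{r\to0}r^{2}\psi^{\ast}(r)=0$, so by Lemma~\ref{lemme0} it still supports \textsc{pro}s. For non-degeneracy, note that $J_{0,\lambda}$ is an affine automorphism of $\mathbb{R}^{2}$, hence carries a non-degenerate conic to a non-degenerate conic of the same type; concretely, substituting $Y\mapsto Y-\lambda$ in \eqref{parabolaequation} replaces $(c,d)$ by $(c,d)-2b\lambda(a,b)$, which is collinear with the repeated direction $(a,b)$ if and only if $(c,d)$ already was, so the non-degeneracy criterion $ad-bc\neq0$ is untouched. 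This produces the asserted non-degenerate parabola through the origin, realised by a $\lambda$-gauge.

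The only step that is not pure bookkeeping is the second paragraph: a parabola whose axis is not vertical meets some vertical lines twice and others not at all, so the existence of a vertical translate through the origin is not automatic — it rests on the fact, supplied by Lemma~\ref{lemme0}, that $Y$ has a finite limit at $x=0$, i.e. that $\mathcal{P}$ actually crosses the $y$-axis. Once that is secured, the remaining steps (the choice of $\lambda$, the fact that vertical translations realise $\lambda$-gauges, and preservation of non-degeneracy) are immediate.
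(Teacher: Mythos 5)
Your proof is correct and follows essentially the same route as the paper: both invoke Lemma~\ref{lemme0} to get $\ell=\lim_{r\to0}r^{2}\psi(r)$ finite, identify the point $(0,2\ell)$ on the parabola, and take $\lambda=-2\ell$ so the gauge transformation sends it to the origin. Your extra checks (closedness of the zero set to place $(0,2\ell)$ on $\mathcal{P}$, and the explicit verification that $ad-bc$ is invariant under $J_{0,\lambda}$) are sound refinements of steps the paper treats more briefly.
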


\begin{proof}
According to lemma \ref{lemme0}, $\ell=\lim \limits_{r\rightarrow0}r^{2}%
\psi \left(  r\right)  $ is a real number. By plugging the potential in
equation \eqref{parabolaequation} with the H{\'e}non change of variables, we
get
\[
\left[  2ar^{2}+2br^{2}\psi \left(  r\right)  \right]  ^{2}+2cr^{2}
+2dr^{2}\psi \left(  r\right)  +e=0.
\]
Taking the limit as $r\rightarrow0$ we get%
\begin{equation}
4b^{2}\ell^{2}+2d\ell+e=0. \label{eq:origin}%
\end{equation}
Now, the $\lambda-$translation $Y\rightarrow Y^{\ast}=Y+\lambda$ changes the
potential to $\psi^{\ast}\left(  r\right)  =\psi \left(  r\right)  +\frac
{\lambda}{2r^{2}}$ and hence $\ell^{\ast}=\lim \limits_{r\rightarrow0}r^{2}\psi
^{\ast}\left(  r\right)  =\ell+\frac{\lambda}{2}$. So by taking $\lambda=-2\ell$
we have $\ell^{\ast}=0$. Therefore, according to \eqref{eq:origin}, we have
$e^{\ast}=0$ for the new parabola. In other words, the translated parabola
passes through the origin of the $\left(  x, y\right)$-axis. The degenerate
cases of parabolas, where $a/b$ (resp. $b/a$) is proportional to $c/d$ (resp.
$d/c$) or $d=c=0$ or $a=b=0$, are not of interest in our study since they lead
to constant potentials up to a gauge.\qed

\end{proof}

Considering the result of lemma \ref{lem1}, it is now possible to consider the
asymptotic behavior of the isochrone potential $\psi$ associated with $Y$, which
is given by the relation
\begin{equation}
(A+B\psi)^{2}=\frac{C\psi+D}{2r^{2}}. \label{eq5}%
\end{equation}
Let $\mathcal{D}_{\psi}\subset\mathbb{R}^+$ be the domain on which the potential is defined physically. Then, let us introduce 
\begin{equation}
\mathcal{R}=\displaystyle \sup_{\bar{\mathbb{R}}}\left[  \mathcal{D}_{\psi}\right], \label{definitionR}
\end{equation}
where a priori $\mathcal{R}$ is finite and positive if $\mathcal{D}_{\psi}$ is bounded or $\mathcal{R}=+\infty$ if not. We additionally define $\psi_{\infty}=\displaystyle \lim_{r\rightarrow
\mathcal{R}}\psi(r)$. We now have the following lemma:

\begin{lemma}
\label{lem2} In the context of the above reduction given by \eqref{eq5} we
have the following equivalences: $\psi_{\infty}$ is infinite if and only if
$B=0$ if and only if $\psi$ is harmonic up to an additive constant.
\end{lemma}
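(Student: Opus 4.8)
\textbf{Proof proposal for Lemma \ref{lem2}.}

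The plan is to prove the chain of equivalences by working directly with the reduced relation \eqref{eq5}, namely $(A+B\psi)^2 = \dfrac{C\psi+D}{2r^2}$, treating it as an algebraic constraint that $\psi(r)$ must satisfy, and reading off the asymptotic behaviour of $\psi$ as $r \to \mathcal{R}$. First I would dispose of the implication $B=0 \Rightarrow \psi$ harmonic up to a constant: if $B=0$ then \eqref{eq5} becomes $A^2 = \dfrac{C\psi+D}{2r^2}$, so $\psi(r) = \dfrac{2A^2 r^2 - D}{C}$ (note $C \neq 0$, since otherwise the parabola degenerates, which Lemma \ref{lem1} has excluded), and this is exactly a harmonic potential $\tfrac12\omega^2 r^2$ plus a constant, with $\omega^2 = 4A^2/C > 0$ forced by the requirement that $\psi$ be increasing. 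In particular $\psi_\infty = \lim_{r\to\mathcal{R}}\psi(r)$ is then infinite, because a nonconstant harmonic potential is unbounded and the relevant domain of a harmonic \textsc{pro} extends to $r \to +\infty$, so $\mathcal{R} = +\infty$. This already gives $B=0 \Rightarrow \psi$ harmonic $\Rightarrow \psi_\infty$ infinite.

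Next I would close the cycle by showing $\psi_\infty$ infinite $\Rightarrow B=0$. Suppose $B \neq 0$. Then I can solve \eqref{eq5} for $r^2$ in terms of $\psi$: away from the zero of $A+B\psi$ we get $2r^2 = \dfrac{C\psi+D}{(A+B\psi)^2}$. As $r \to \mathcal{R}$ we have $\psi \to \psi_\infty$ (allowing $\psi_\infty = +\infty$ a priori). If $\psi_\infty$ were infinite, then the right-hand side $\dfrac{C\psi+D}{(A+B\psi)^2}$ would tend to $0$ as $\psi \to \infty$ (the numerator is linear, the denominator quadratic with $B \neq 0$), forcing $r^2 \to 0$, i.e. $\mathcal{R}=0$; but $\mathcal{D}_\psi$ is a nonempty interval in $\mathbb{R}^+$ on which \textsc{pro}s exist, so $\mathcal{R} > 0$, a contradiction. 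Hence $\psi_\infty$ infinite forces $B=0$. Combined with the previous paragraph and the trivial remark that "$\psi$ harmonic up to a constant $\Rightarrow \psi_\infty$ infinite" (already noted), all three statements are equivalent.

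The main obstacle I anticipate is book-keeping about which relation \eqref{eq5} really encodes and what $\mathcal{R}$ is in each case — in particular, making rigorous the step "$\psi \to \psi_\infty$ as $r \to \mathcal{R}$ and the algebraic relation transfers the limit", and checking that the sign conditions ($C\neq 0$, $\omega^2>0$, the potential increasing) are all consistent with the hypotheses inherited from Lemma \ref{lem1} and Lemma \ref{lemme0}. One subtlety is that if $B \neq 0$ the function $A+B\psi$ may vanish somewhere in $\mathcal{D}_\psi$, which would seem to blow up $r^2$; I would argue this simply cannot happen on the physical domain (it would force $C\psi+D = 0$ there too, i.e. the parabola degenerates, again excluded), so the manipulation $2r^2 = (C\psi+D)/(A+B\psi)^2$ is legitimate on $\mathcal{D}_\psi$. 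Once these consistency checks are in place, the argument is essentially the elementary observation that a linear-over-quadratic rational function vanishes at infinity whereas a constant-over-(linear-squared) inverted relation gives a quadratic $\psi(r)$.
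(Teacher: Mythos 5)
Your proof is correct and follows essentially the same route as the paper: the $B=0$ case is solved identically, and your contrapositive step (with $B\neq 0$, the relation $2r^2=(C\psi+D)/(A+B\psi)^2$ forces $r^2\to 0$ as $\psi\to\infty$, contradicting $\mathcal{R}>0$) is just a repackaging of the paper's argument, which divides \eqref{eq5} by $\psi$ and compares the infinite limit of the left-hand side with the finite limit $C/(2\mathcal{R}^2)$ of the right-hand side. Your side remarks on the non-vanishing of $A+B\psi$ and the sign conditions are consistent with what Lemma \ref{lem1} excludes, so no gap remains.
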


\begin{proof}
\begin{itemize}
\item If $B=0$ then, according to lemma \ref{lem1}, $C\neq0$ and from
\eqref{eq5} we get $\psi \left(  r\right)  =2\frac{A^{2}}{C}r^{2}-\frac{D}{C}$.
As we are only interested in increasing potentials, $C$ is positive and
$\psi \left(  r\right)  =$ $\psi_{\mathrm{ha}}\left(  r\right)  =\frac{1}{2}\omega
^{2}r^{2}$ with $\omega \neq0$ --- up to an additive constant. This potential is
defined on $\left[  0,+\infty \right) $ so $\mathcal{R}=+\infty$ and $\psi_{\infty
}=+\infty$.

\item Let us assume conversely that $\psi_{\infty}$ is infinite. As the
potential is increasing, there exists an $r_0$ in the neighborhood of $\mathcal{R}$ such
that for all $r>r_{0}$, $\psi(r)>0$. By dividing equation (\ref{eq5}) by $\psi$ for
$r>r_{0}$, we get%
\[
\frac{1}{\psi}(A+B\psi)^{2}=\frac{1}{2r^{2}}\left(  C+\frac{D}{\psi}\right).
\]
The right hand side of this equality tends to the finite limit $\frac{C}{2\mathcal{R}^{2}}$ when
$r\rightarrow \mathcal{R}$ (that is to zero if $\mathcal{R}=+\infty$). If $B\neq0$, since
$\psi_{\infty}=+\infty$, the left hand side tends to $+\infty$ when
$r\rightarrow \mathcal{R}$. Therefore, $\psi_{\infty}$ infinite implies that $B=0$ and
$\psi$ is harmonic by the first part of this analysis.
\qed

\end{itemize}
\end{proof}

The quantity $\psi_{\infty}$ indicates the asymptotic direction of the
parabola. When $\psi_{\infty}=+\infty$, then
the symmetry axis of the parabola is parallel to $(Oy)$. We do not consider
the case $\psi_{\infty}=-\infty$ because it corresponds to bottom-oriented
parabolas which are always associated with decreasing harmonic potentials
$\psi_{\mathrm{ha}}^{-}\left(  r\right)  =-\frac{1}{2}\omega^{2}r^{2}$. In this case the
effective potential never has global nor local minima and no orbit 
could ever be periodic.

Before exhibiting the isochrone potentials we can say a little more about the tangent to the parabola at the origin.

\begin{lemma}
\label{lem3} For a potential given by \eqref{eq5}, two cases may happen
concerning the tangent at the origin of the isochrone parabola:

\begin{enumerate}
\item It is vertical and the reduced potential is Keplerian up to an additive
constant. This corresponds to $C=0$ in \eqref{eq5}.

\item It is not vertical and modulo a transvection we can manage to get a
horizontal tangent corresponding to $D=0$ in the transvected version of \eqref{eq5}.
\end{enumerate}
\end{lemma}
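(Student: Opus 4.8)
The plan is to read the tangent to the parabola at the origin directly off \eqref{eq5}, split the discussion according to whether that tangent is vertical, and in the non-vertical case remove the obstruction with a transvection from $\mathbb{A}$. First I would put \eqref{eq5} back into H\'enon's $(x,Y)$ variables: multiplying $(A+B\psi)^{2}=(C\psi+D)/(2r^{2})$ by $x^{2}=4r^{4}$ and using $\psi=Y/x$ gives the conic $(Ax+BY)^{2}=Dx+CY$, a parabola through the origin in agreement with Lemma \ref{lem1}. With $G(x,Y)=(Ax+BY)^{2}-Dx-CY$ one has $\nabla G(0,0)=(-D,-C)$, so the tangent at the origin is the line $Dx+CY=0$. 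If $(C,D)=(0,0)$ the conic degenerates to the double line $Ax+BY=0$, a case already set aside in Lemma \ref{lem1}; hence $(C,D)\neq(0,0)$, the tangent is a genuine line, and it is vertical exactly when $C=0$.

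Case $C=0$. Then $D\neq0$ and the tangent is $Dx=0$, i.e. the vertical axis $x=0$, which is the first alternative. Here \eqref{eq5} reduces to $(A+B\psi(r))^{2}=D/(2r^{2})$. One must have $B\neq0$, since $B=0$ would force $A^{2}=D/(2r^{2})$ on a set of positive radii, impossible unless $A=D=0$, contradicting $D\neq0$; and $D>0$ because the left-hand side is a square. Taking square roots, using the continuity of $\psi$ to fix the branch, and invoking that $\psi$ is increasing to choose the sign, one gets $\psi(r)=-A/B-\mu/r$ with $\mu>0$, so $\psi$ is Keplerian up to the additive constant $-A/B$.

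Case $C\neq0$. The tangent $Dx+CY=0$ has finite slope $-D/C$, hence is not vertical. I would then apply the transvection $J_{\epsilon,0}\in\mathbb{A}$, which sends $\psi$ to $\psi^{\ast}=\psi+\epsilon$ and, via the substitution $\psi=\psi^{\ast}-\epsilon$ in \eqref{eq5}, acts on the coefficients by $(A,B,C,D)\mapsto(A-B\epsilon,\,B,\,C,\,D-C\epsilon)$. The choice $\epsilon=D/C$, legitimate because $C\neq0$, yields $D^{\ast}=0$; the transvected equation is then $(A^{\ast}+B\psi^{\ast})^{2}=C\psi^{\ast}/(2r^{2})$, and its parabola $(A^{\ast}x+BY)^{2}=CY$ has tangent $CY=0$, i.e. the horizontal axis $Y=0$, at the origin. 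Since $J_{\epsilon,0}$ preserves isochrony, this gives the second alternative.

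The only genuinely delicate point is the sign bookkeeping in the case $C=0$: the increasing-potential hypothesis has to be used twice, once to guarantee $D>0$ through the square on the left-hand side and once to select the decreasing $1/r$ branch, so that the reduced potential is really $-\mu/r$ with $\mu>0$ and not a spurious $+\mu/r$. Everything else is routine algebra together with the elementary exclusion of the degenerate conics already discarded in Lemma \ref{lem1}.
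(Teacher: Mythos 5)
Your proposal is correct and follows essentially the same route as the paper: read the tangent at the origin off the gradient of $(Ax+BY)^{2}-Dx-CY$, split on $C=0$ versus $C\neq0$, and in the latter case kill $D$ with a transvection (your sign convention for the coefficient transformation differs from the paper's but is equivalent). The only cosmetic difference is that in the Keplerian case the paper applies a second transvection to cancel $A$ and obtain the pure form $B^{2}Y^{2}=Dx$, whereas you solve $(A+B\psi)^{2}=D/(2r^{2})$ directly and absorb $-A/B$ into the additive constant, which is exactly what the lemma asserts.
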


\begin{proof}
With a gauge transformation we may write the isochrone parabola equation as
$(Ax+BY)^{2}=CY+Dx$. Let us apply to it a transvection with a parameter
$\epsilon$. The new equation is then
\begin{equation}
\left(  A^{\prime}x+B^{\prime}Y\right)  ^{2}=C^{\prime}Y+D^{\prime}x\text{
\ with\  \ }\left \{
\begin{array}
[c]{l}%
A^{\prime}=A+B\epsilon \text{, }C^{\prime}=C\\
B^{\prime}=B\text{ and }D^{\prime}=D+C\epsilon.
\end{array}
\right.  \label{eq:paraPrime}%
\end{equation}
By considering the gradient of the function $f(x,y)=\left(  A^{\prime}%
x+B^{\prime}y\right)  ^{2}-C^{\prime}y-D^{\prime}x$ at the origin, we get the
equation of the tangent to the parabola at the origin, $D^{\prime}%
x+C^{\prime}y=0$. Depending on its direction, two cases may be distinguished:

\begin{enumerate}
\item When $C=0$, the parabola passes through the origin with a vertical
tangent. One may further simplify
the parabolic equation choosing $\epsilon$ to cancel $A^{\prime}$ since $B$
should be non-zero according to lemma~\ref{lem1}. We eventually obtain
$(B^{\prime})^{2}Y^{2}=D^{\prime}x$. This equation implies that $D^{\prime}%
>0$. Making explicit H{\'e}non variables with \eqref{defxY}, we get $\psi \left(
r\right)  =\psi_\mathrm{ke} \left(r\right)  =-\frac{\mu}{r}$ where $\mu=\sqrt{\frac{D^{\prime}}{2B^{\prime2}}}$
is a positive constant since $r\mapsto \psi \left(  r\right)  $ is increasing.

\item When $C\neq0$, it is possible to choose the parameter $\epsilon$ of the
transvection to cancel $D^{\prime}$ so that the parabola passes through the
origin with a horizontal tangent. In other words, we choose the arbitrary
constant of the potential to simplify the study of its corresponding parabola,
which may be described by $\left(  A^{\prime}x+B^{\prime}Y\right)
^{2}=C^{\prime}Y$ with $A^{\prime}\neq0$. $A^{\prime}$ cannot vanish unless
$\epsilon=-\frac{A}{B}=-\frac{D}{C}$ which is forbidden by lemma~\ref{lem1}.
\qed

\end{enumerate}
\end{proof}

Let us summarize the situation at this point (see figure~\ref{fig:redpara}).

Any parabola in the plane $\left(  x,y\right)  $ is associated with an isochrone
potential. Combining lemmas \ref{lem1}, \ref{lem2} and \ref{lem3} we can only
study the family of parabolas passing through the origin and belonging to one
of the two following classes:

\begin{itemize}
\item Straight parabolas, which possess a vertical symmetry axis and thus never
admit any vertical tangent. As we have explained before we are only interested
by straight up-oriented parabolas. Using affine transformations, straight
parabolas could be moved in such a manner that their apices are the origin of
the $\left(  x,y\right)$-plane. They correspond to harmonic potentials.

\item Tilted parabolas, whose symmetry axes are inclined from the vertical ones
and possess a horizontal or vertical tangent at the origin. This tilted parabola family is composed of three categories:

\begin{itemize}
\item Laid parabolas, with a vertical tangent at the origin corresponding to Kepler potentials;\label{laidpara}

\item Right-oriented parabolas, with a horizontal tangent at the origin;

\item Left-oriented parabolas, with a horizontal tangent at the origin.

\end{itemize}
\end{itemize}

In figure \ref{fig:redpara}, we have plotted in the $(xOy)$ plane the four \emph{reduced} classes of parabolas. A precise definition of the corresponding potentials is given in definition~\ref{def:redphysgau} p.\pageref{def:redphysgau}.

\begin{figure}[ptb]
\centering
\resizebox{0.9\textwidth}{!}{\includegraphics{./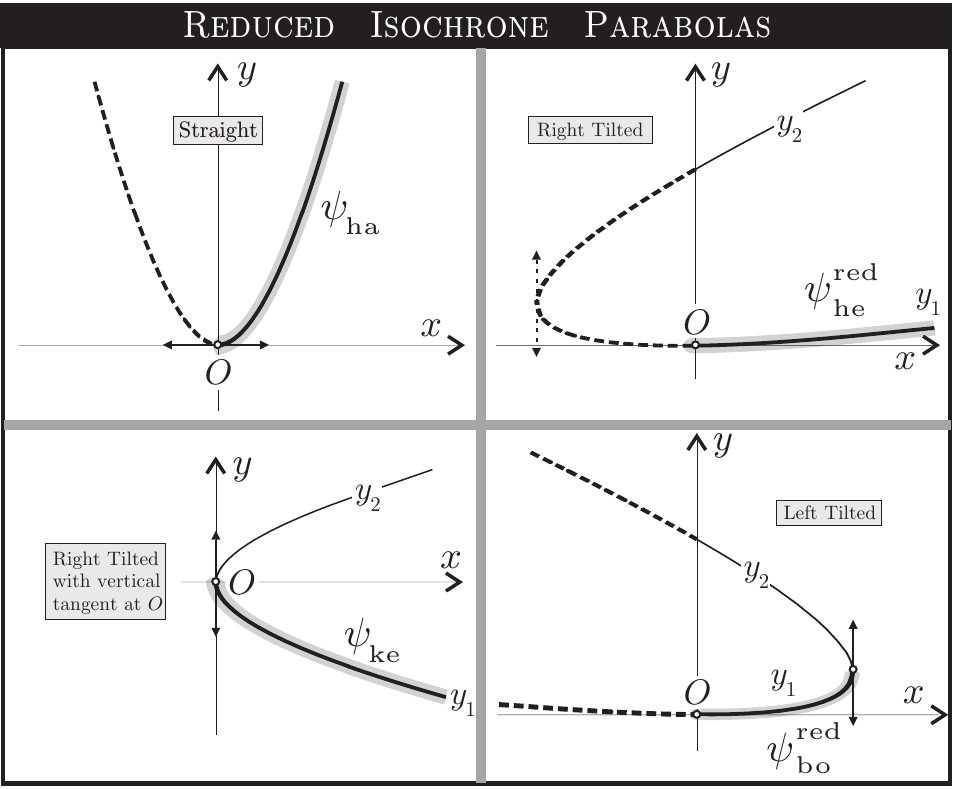}}
\centering
\caption{The four classes of reduced parabolas corresponding to the reduced
isochrone potentials. The part of the parabola associated with the increasing
potential is highlighted ($y_{1}$). The dashed part of the parabola corresponds
to potentials with an imaginary distance argument ($x<0$). The unhighlighted
solid line part of the parabola ($y_{2}$) in the $x>0$ half-plane corresponds
to decreasing potentials.}%
\label{fig:redpara}%
\end{figure}

The reduced isochrone potential contained in each reduced parabola is
emphasized in this figure and it corresponds to a limited part of the parabola.
As a matter of fact, the variable of the potential is the radial distance,
a positive real number. Each isochrone potential is then included in
the $x$--positive right plane. This remark excludes left-oriented laid
parabolas. For any non-straight parabolas there are two functions $x\mapsto
y_{1}\left(  x\right)  $ and $x\mapsto y_{2}\left(  x\right)$ into which the
$x $--positive part of the graph of the parabola could be decomposed (see
figure \ref{fig:redpara}). The slope of the chord between the origin and a point
$M$ of abscissa $x>0$ on the graph of $y_{1}$ or $y_{2}$ is given by the ratio
$\frac{y_{1}\left(  x\right)  }{x}$ or $\frac{y_{2}\left(  x\right)  }{x}$
which is precisely the definition of the potential $\psi$.
This remark shows that $\psi$ is an increasing (resp. decreasing) function if
the graph of $y$ is convex (resp. concave), i.e. the chord between two points
is above (resp. below) the function. As we look for increasing potentials in
order to have \textsc{pro}'s, we have to consider the convex part of the
parabola graph. This part is named $y_{1}$ in figure \ref{fig:redpara}.

Tilted parabolas have a symmetry axis with a finite slope. Any $\epsilon-$
transvection adds $\epsilon$ to this slope, modifying the orientation of these
parabolas. Nevertheless, we cannot jump from a left-oriented parabola to a
right-oriented one using an affine transformation. However, according to lemma
\ref{lem3} and conserving its orientation, we can morph any tilted parabola
with a horizontal tangent or a vertical tangent at the origin. In the latter
case, the symmetry axis is parallel to $\left(  Ox\right)  $. The morphing
from the reduced parabolas to the whole set of isochrone ones is detailed in
figure \ref{allfour} following our analysis of the concerned potentials in the
next section.

Our reduction to four families of parabolas and their corresponding potentials
enables us to obtain the whole set of isochrone potentials. In his historical
study, Michel H\'{e}non did not remark on the crucial role of these affine
transformations. He dismissed out-of-origin parabolas and forgot
left-oriented tilted ones.

Let us now determine explicitly the isochrone potentials of the reduced families.

\subsection{Classification of isochrone potentials
\label{subsec:isochroneclassification}}

From the previous analysis we will now state and prove the following
classification result.

\begin{theorem}
\label{theo1}The isochrone potentials are classified by these two properties:
\begin{enumerate}
	\item There are essentially four types of reduced isochrone potentials:
	\begin{itemize}
		\item The Keplerian potential $\psi_{\mathrm{ke}}$ for which the reduced parabola
					has a horizontal symmetry axis and a vertical tangent at the origin.
		\item The harmonic potential $\psi_{\mathrm{ha}}$ for which the reduced parabola is
					straight with a horizontal tangent at the origin.
		\item Two other potentials $\psi_{\mathrm{he}}^{\mathrm{red}}$ and $\psi_{\mathrm{bo}}^{\mathrm{red}}$
					for which the reduced parabolas have horizontal tangents at the origin and are
					respectively right and left oriented. They are given by the formulae
					\[
					\psi_{\mathrm{he}}^{\mathrm{red}}:= 
					\frac{\mu}{2b} -\frac{\mu}{b+\sqrt{b^{2}+r^{2}}},
					\hspace{1em}\psi_{\mathrm{bo}}^{\mathrm{red}}:=
					-\frac{\mu}{2b} +\frac{\mu}{b+\sqrt{b^{2}-r^{2}}},
					\]
					where $\mu$ and $b$ are positive constants.
		\end{itemize}
	\item Any isochrone potential $\psi$ is equivalent under an affine
	transformation to one of the previous types. That is to say there exist
	two constants $\epsilon$ and $\lambda$ and some reduced potential $\psi^{\mathrm{red}}
	\in \{ \psi_{\mathrm{ke}},\psi_{\mathrm{ha}},\psi_{\mathrm{he}}^{\mathrm{red}}%
	,\psi_{\mathrm{bo}}^{\mathrm{red}}\}$ such that $\psi \left(  r\right)  =%
	\psi^{\mathrm{red}}\left(  r\right)  +\epsilon+\frac{\lambda}{2r^{2}}$.
	\end{enumerate}
\end{theorem}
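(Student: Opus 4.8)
The plan is to run the classification by dispatching over the reduced parabola shapes established in Lemmas~\ref{lem1}, \ref{lem2} and \ref{lem3}. By those lemmas, after a $\lambda$-gauge and an $\epsilon$-transvection, any isochrone potential arises from a parabola through the origin in one of exactly four reduced forms, so it suffices to (i) read off the potential attached to each reduced form and (ii) observe that the reduction maps used are precisely affine transformations $J_{\epsilon,\lambda}$, which gives the second assertion for free once the first is settled. Concretely, I would organize the proof into the four cases below, in this order: straight parabola; laid parabola (vertical tangent, horizontal axis); right-oriented tilted parabola; left-oriented tilted parabola.

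\emph{Straight parabola.} By Lemma~\ref{lem2} this is the case $B=0$ in \eqref{eq5}, and the proof of that lemma already exhibits $\psi(r)=2\frac{A^2}{C}r^2-\frac{D}{C}$; choosing the gauge so the apex is at the origin kills the constant and yields $\psi_{\mathrm{ha}}(r)=\tfrac12\omega^2 r^2$. \emph{Laid parabola.} By Lemma~\ref{lem3}(1) this is the case $C=0$, and the proof of that lemma already gives $\psi(r)=\psi_{\mathrm{ke}}(r)=-\mu/r$ with $\mu=\sqrt{D'/2B'^2}>0$, up to the additive constant absorbed by the gauge.

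\emph{The two tilted cases with horizontal tangent.} By Lemma~\ref{lem3}(2) we may assume the parabola is $(A'x+B'Y)^2=C'Y$ with $A'\neq0$, $C'\neq0$; rescaling we can normalize the leading coefficient and write the equation as $Y = \alpha (A'x+B'Y)^2$ for a suitable constant, i.e.\ a quadratic relation between $Y$ and $x$. Now substitute the H\'enon variables $x=2r^2$, $Y=x\psi(\sqrt{x/2})=2r^2\psi(r)$ from \eqref{defxY}: this turns the parabola equation into a quadratic equation in $\psi(r)$ with coefficients that are affine in $r^2$. Solving that quadratic and selecting the branch giving an increasing potential (the convex branch $y_1$ discussed after Lemma~\ref{lem3}), I expect to land on an expression of the shape $\psi(r)=\text{const}\pm \mu/(b+\sqrt{b^2\pm r^2})$; the sign inside the square root is governed by whether the parabola is right- or left-oriented (equivalently, the sign of the discriminant $b^2\pm r^2$, which must stay nonnegative on $\mathcal{D}_\psi$), and the outer sign is fixed by requiring $\psi$ increasing and by matching the apex-at-origin normalization. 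Writing the two branches out gives exactly $\psi_{\mathrm{he}}^{\mathrm{red}}$ and $\psi_{\mathrm{bo}}^{\mathrm{red}}$ with $\mu,b>0$; one also checks these are genuinely non-harmonic and non-Keplerian (e.g.\ $B\neq0$ rules out harmonic by Lemma~\ref{lem2}, and $C\neq0$ rules out Keplerian by Lemma~\ref{lem3}). Finally, for assertion~2, I would note that every reduction step above was an application of some $J_{\epsilon,\lambda}\in\mathbb{A}$ and that $\mathbb{A}$ is a group (closed under composition and inverses, as recorded before Lemma~\ref{lem1}), so the inverse of the composite reduction is again of the form $\psi(r)=\psi^{\mathrm{red}}(r)+\epsilon+\tfrac{\lambda}{2r^2}$, which is the claimed normal form.

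The routine parts are the two explicit single-variable algebraic solves; the genuine content — and the main obstacle — is making the case split \emph{exhaustive and non-overlapping}: one must be careful that Lemmas~\ref{lem1}--\ref{lem3} really do cover every non-degenerate parabola (in particular that the degenerate loci excluded in Lemma~\ref{lem1} correspond only to constant potentials, and that the left-oriented laid parabola is correctly excluded because it would force $x<0$, as noted after Figure~\ref{fig:redpara}), and that within the horizontal-tangent tilted family the right- and left-oriented subcases are not affinely equivalent (orientation is preserved by $\mathbb{A}$, since a transvection only shifts the axis slope and cannot flip the parabola). I would therefore devote the bulk of the write-up to pinning down, case by case, which branch $y_1$ of the $x>0$ part is convex and hence yields an admissible increasing potential, and to verifying the parameter ranges $\mu>0$, $b>0$ are exactly what admissibility (monotonicity plus existence of \textsc{pro}'s) demands.
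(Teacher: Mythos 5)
Your proposal is correct and follows essentially the same route as the paper: reduce via Lemmas~\ref{lem1}--\ref{lem3} to the four parabola shapes, read off $\psi_{\mathrm{ha}}$ and $\psi_{\mathrm{ke}}$ directly from the proofs of Lemmas~\ref{lem2} and~\ref{lem3}, solve the quadratic in $\psi$ produced by the H\'enon substitution in the two tilted horizontal-tangent cases (the sign of the paper's constant $\kappa$ separating right- from left-oriented), and invoke the group structure of $\mathbb{A}$ for the second assertion. The only point your sketch underplays is that each orientation yields \emph{two} increasing-potential branches (the paper's $V^{+}$ and $V^{-}$, i.e.\ the up- and down-opening parabolas), of which only one is literally $\psi_{\mathrm{he}}^{\mathrm{red}}$ or $\psi_{\mathrm{bo}}^{\mathrm{red}}$ while the other differs by a nontrivial $\lambda$-gauge as in \eqref{psibo+-} and \eqref{psihe+-} --- precisely the identification the word ``essentially'' in the statement is covering.
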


The potential $\psi_{\mathrm{he}}$ is the original potential discovered by
Michel H\'{e}non. From our knowledge, the potential $\psi_{\mathrm{bo}}$
is a new one. We call it the \emph{bounded} potential for reasons appearing in sec. \ref{thenewsection24}.
\begin{proof}
Let $\mathcal{P}$ be the parabola associated with an isochrone potential $\psi$
which is neither Keplerian nor harmonic. According to lemmas \ref{lem1},
\ref{lem2} and \ref{lem3} we are left to consider the case where $\mathcal{P}$
passes through the origin, has a horizontal tangent and has a symmetry axis which
is not vertical. According to \eqref{eq5} and the previous lemmas, this
corresponds to having an equation of the form
\[
2r^{2}=\frac{n\psi}{(\psi-m)^{2}},
\]
for some constant $m$ and $n$ both non zero. As a consequence, we see here
that the potential $\psi \ $will depend on two constants. Normalizing
the potential by setting $\psi=mV$, we are led to the functional equation
\[
q=q(V)=\frac{V}{(V-1)^{2}},\hspace{1em}\text{{with}}\hspace{1em}q=\kappa
x=2\kappa r^{2},
\]
where $\kappa=m/n$ is another non zero constant. The inversion of the
function $q$ gives two solutions $V(q)$ of the quadratic equation
\begin{equation}
qV^{2}-(2q+1)V+q=0 . \label{eq13ps}%
\end{equation}
They are of the form
\begin{equation}
\left \{
\begin{array}
[c]{ll}%
V^{+}(q) & :=\frac{2q+1-\sqrt{4q+1}}{2q}=1-\frac{2}{1+\sqrt{4q+1}},\\
V^{-}(q) & :=\frac{2q+1+\sqrt{4q+1}}{2q}=1-\frac{2}{1-\sqrt{4q+1}}.
\end{array}
\right.  \label{VplusVmoins}%
\end{equation}

\begin{figure}[h]
\centering
\resizebox{0.75\textwidth}{!}{\includegraphics{./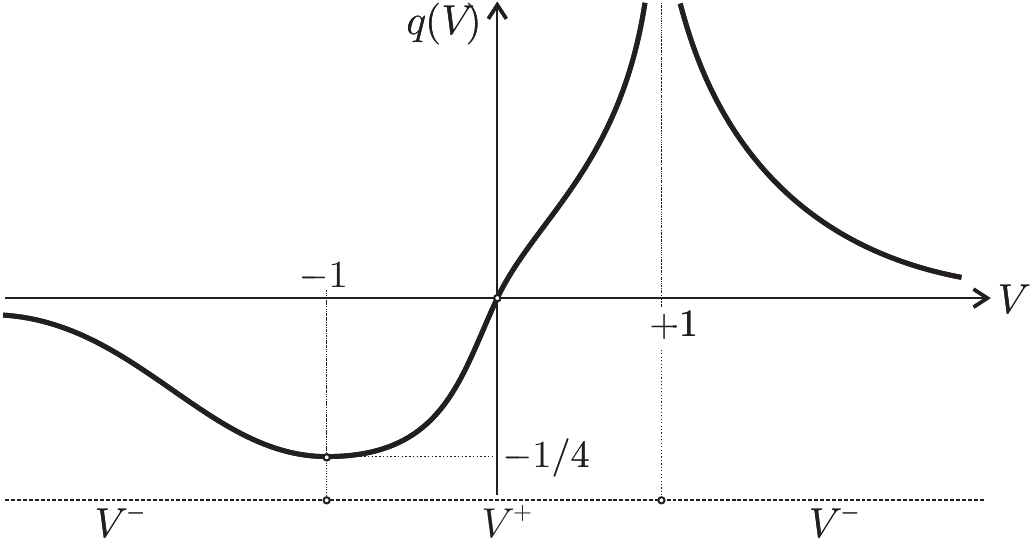}}
\centering
\caption{The potential $V^+$ and $V^-$ as functions of $q$.}%
\label{vdeq}%
\end{figure}

These two functions $q\mapsto V(q)$ are defined on the real
interval $q\geqslant-1/4$. As shown in figure \ref{vdeq}, the $\pm$ signs of
$V$ are chosen in such a way that $q\mapsto V^{+}(q)$ is increasing on
$[-1/4,+\infty )$ and $q\mapsto V^{-}(q)$ is decreasing on both $[-1/4,0)$ and
$]0,+\infty )$. From the quadratic equation \eqref{eq13ps} we have 
\begin{equation}
V^{+}(q)+V^{-}(q)=2+\frac{1}{q}\  \text{\ and}\hspace{1em}V^{+}(q)V^{-}(q)=1.
\label{eq:somme}%
\end{equation}
Now we compute the potential. From the expression $q=\kappa x=2\kappa r^{2}$, we will classify the potentials by the sign of the constant $\kappa$.

\begin{enumerate}
\item {\bf When $\mathbf{\kappa<0}$}, then $q$ is necessarily negative. Therefore
 $-1/4\leqslant
q\leqslant0$ which implies
\[
 r^{2}\leqslant \frac{1}%
{8|\kappa|}.%
\]
Setting a new constant $b:=\frac{1}{\sqrt{8|\kappa|}}$ in such a way
that the previous inequality becomes $r\leqslant b$, we have $q=\frac{-r^{2}%
}{4b^{2}}$. Therefore,
\[
\psi \left(  r\right)  =mV(q)=mV\left(  \tfrac{-r^{2}}{4b^{2}}\right)  .
\]
This gives us two possible potentials $\psi^{\pm}$. But to have a \textsc{pro},
the function $r\mapsto \psi(r)$ must be ultimately increasing. That is,
$$\frac{-mr}{2b^{2}}\frac{dV}{dq}\left(  \frac{-r^{2}}{4b^{2}}\right)  >0.$$
Since $q\mapsto V^{+}(q)$ is increasing we must choose $m=-\frac{\mu}{2b}$ for
some positive constant $\mu$. The factor $\frac{1}{2b}$ is just here for simplicity of the
final result. Similarly in the formula for $\psi^{-}$ we must
choose $m=\frac{\mu}{2b} >0$. This leads to the two potentials
\[
\left \{
\begin{array}
[c]{lll}%
\psi_{\mathrm{bo}}^{+}(r) & :=-\frac{\mu}{2b} V^{+}\left(  \frac{-r^{2}%
}{4b^{2}}\right)  & =-\frac{\mu}{2b}+\frac{\mu}{b+\sqrt{b^{2}-r^{2}}},\\
\psi_{\mathrm{bo}}^{-}(r) & :=\frac{\mu}{2b} V^{-}\left(  \frac{-r^{2}%
}{4b^{2}}\right)  & =\frac{\mu}{2b}-\frac{\mu}{b-\sqrt{b^{2}-r^{2}}}.
\end{array}
\right.
\]
From \eqref{eq:somme}, we get that
\begin{equation}
\psi_{\mathrm{bo}}^{-}(r)-\psi_{\mathrm{bo}}^{+}(r)=\frac{\mu}%
{2b}\left[  V^{+}\left(  \tfrac{-r^{2}}{4b^{2}}\right)  +V^{-}\left(
\tfrac{-r^{2}}{4b^{2}}\right)  \right]  =\frac{\mu}{b} -\frac{2b\mu}{r^{2}}.
\label{psibo+-}%
\end{equation}
As a consequence, the left-oriented parabolas associated with $\psi_{\mathrm{bo}}^{+}$ and
$\psi_{\mathrm{bo}}^{-}$ are exchanged by an affine transformation. This is
the meaning of the word \emph{essentially} in the statement of the theorem
since the group orbits of $\psi_{\mathrm{bo}}^\mathrm{red}:=\psi_{\mathrm{bo}}^{+}$ and of
$\psi_{\mathrm{bo}}^{-}$ under the action of the affine group are the same.

\item {\bf When $\mathbf{\kappa>0}$}, setting $b:=1/\sqrt{8\kappa}$ again, we similarly get
$\psi=mV(q)=mV\left(  \frac{r^{2}}{4b^{2}}\right)  $. And by setting again
$\frac{\mu}{2b} :=|m|$ we get two new isochrone potentials
\[
\left \{
\begin{array}
[c]{lll}%
\psi_{\mathrm{he}}^{+}(r) & :=\frac{\mu}{2b}V^{+}\left(  \frac{r^{2}%
}{4b^{2}}\right)  & =\frac{\mu}{2b}-\frac{\mu}{b+\sqrt{b^{2}+r^{2}}},\\
\psi_{\mathrm{he}}^{-}(r) & :=-\frac{\mu}{2b}V^{-}\left(  \frac{r^{2}%
}{4b^{2}}\right)  & =-\frac{\mu}{2b} +\frac{\mu}{b-\sqrt{b^{2}+r^{2}}}.
\end{array}
\right.
\]
Again from \eqref{eq:somme}, we have that
\begin{equation}
\psi_{\mathrm{he}}^{+}\left(  r\right)  -\psi_{\mathrm{he}}^{-}\left(
r\right)  =\frac{\mu}{2b} \left[  V^{+}\left(  \tfrac{r^{2}}{4b^{2}}\right)
+V^{-}\left(  \tfrac{r^{2}}{4b^{2}}\right)  \right]  =\frac{\mu}{b}
+\frac{2b\mu}{r^{2}}. \label{psihe+-}%
\end{equation}
Therefore, $\psi_{\mathrm{he}}^\mathrm{red}:=\psi_{\mathrm{he}}^{+}$ and
$\psi_{\mathrm{he}}^{-}$ are also exchanged by the affine group and their
respective group orbits under this group action will coincide. These potentials are
defined for all values of $r\in[0,+\infty)$ so that their parabolas are then right-oriented.
\end{enumerate}

To conclude the proof of the theorem we only have to observe that according to
lemmas \ref{lem1}, \ref{lem2} and \ref{lem3}, any isochrone is in the orbit of
a reduced one under the affine subgroup generated by the $J_{\epsilon,\lambda}$ tranformations.\qed
\end{proof}

These reduced potentials can be visualized in figure~\ref{fig:redpara}.

Using natural notations taken from the proof of theorem \ref{theo1},
from \eqref{psibo+-} and \eqref{psihe+-} we can write%
\begin{equation}
\  \  \text{\ }\left \{
\begin{array}
[c]{l}%
\psi_{\mathrm{bo}}^{-}=J_{+\epsilon,\lambda}\left(  \psi_{\mathrm{bo}}%
^{+}\right) \\
\\
\psi_{\mathrm{he}}^{-}=J_{-\epsilon,\lambda}\left(  \psi_{\mathrm{he}}%
^{+}\right)
\end{array}
\right.  \text{\  \ with\ }\epsilon=\frac{\mu}{b} \text{ and }\lambda
=-4b\mu . \label{property+-}%
\end{equation}

The tilted parabolas presented in figure~\ref{fig:redpara} are the ones associated
with $\psi_{\mathrm{he}}^{+}$ for the right (called $\mathcal{P}%
_{\mathrm{he}}^{+}$) parabola and with $\psi_{\mathrm{bo}}^{+}$ for the
left one (called $\mathcal{P}_{\mathrm{bo}}^{+}$). These two parabolas 
both open to the top, i.e. in the direction where $y$ increases. Using 
property \eqref{property+-} or by direct representation, one can verify that,
using natural notations, $\mathcal{P}_{\mathrm{he}}^{-}$(resp.
$\mathcal{P}_{\mathrm{bo}}^{-}$) is the image of $\mathcal{P}%
_{\mathrm{he}}^{+}$(resp. $\mathcal{P}_{\mathrm{bo}}^{+}$) by the
symmetry under the $\left(  O,x\right)$-axis. Thus, these two ``negative"
parabolas both open to the bottom.

\subsection{Some physical meaning of this classification\label{thenewsection24}}

The potential $\psi_{\mathrm{he}}^{\mathrm{red}}$ defined by $$\psi_{\mathrm{he}}^{\mathrm{red}}(r):=\frac{\mu}{2b}-\frac{\mu}{b+\sqrt{b^2+r^2}}$$
is the original isochrone potential discovered by Michel H\'{e}non.
Similarly, the potential $$\psi_{\mathrm{bo}}^{\mathrm{red}}:=-\frac{\mu}{2b}+\frac{\mu}{b+\sqrt{b^2-r^2}}$$ defines another type of isochrone potential.
The index \emph{bo} means \emph{bounded potential}. Indeed, from the above formula the mappings $r\mapsto \psi_{\mathrm{bo}}^{\mathrm{red}}(r)+\epsilon$ 
are only defined for bounded values of 
\begin{equation}
r\in\mathcal{D}_{\psi_{\mathrm{bo}}}=[0,b]. \label{newdefrbo}
\end{equation}

The fact that such potentials are associated with a \emph{bounded} system 
give them special features which are very different from the three other types of isochrone potentials. Up to our knowledge, such bounded potentials do not seem to have appeared in the literature before.

The potential of Michel H\'{e}non is equivalent to a Keplerian one when $r\to\infty$.
Using relation \eqref{VplusVmoins}, we can readily see that $V^{+}(q)\sim
q$ when $q\rightarrow0$. The roots product in \eqref{eq:somme} then
implies $V^{-}(q)\sim q^{-1}$ in the same limit. Then both $\psi_{\mathrm{bo}}^{\mathrm{red}}$ and
$\psi_{\mathrm{he}}^{\mathrm{red}}$ come from $V^{+}$. Hence we can assert that they
are harmonic near their center: $\psi_{\mathrm{bo}}^{\mathrm{red}}\sim \psi_{\mathrm{he}}^{\mathrm{red}}
\sim \psi_{\mathrm{ha}}$ when $r\rightarrow0$. 
\\

From a physical point of view, $\epsilon$--transvections $J_{\epsilon,0}:\psi\to\psi+\epsilon$ add a constant to 
the potential, hence they do not change anything for the dynamics in the considered potential, changing only
the value of the energy attributed to the trajectories.

When the $\lambda$--gauge  $J_{0,\lambda}:\psi\to\psi+\frac{\lambda}{2r^2}$ is applied to a reduced potential, it makes it divergent as
$r^{-2}$ when $r\rightarrow0$. As we said at the beginning of sec.
\ref{subsec:GalPropIsoParabolas}, such transformations correspond to a change
of the value of the angular momentum in the corresponding isochrone orbit.

Geometrically, when the physical convex part of the parabola starts from the origin, then,
when $r\rightarrow0$, the corresponding potential is finite (if it is
$\psi_{\mathrm{bo}}$, $\psi_{\mathrm{he}}$ or $\psi_{\mathrm{ha}} $)
or diverges like $r^{-1}$ (if it is Keplerian). This behavior is not perturbed
by $\epsilon-$transvections. In all other cases isochrone potentials diverge
like $r^{-2}$ when $r\rightarrow0$; but, using a $\lambda-$translation, we can
manage this physical problem. 

These remarks enable us to define three classes of
isochrone potentials. They are classes of equivalence under the action
of $J_{\epsilon,\lambda}$ affine transformations as detailed in sec.~\ref{subsec:affinegroupaction}. 
Definition \ref{def:redphysgau} sets their names in addition to
the name of the four isochrone potential types.

\begin{definition} 

\label{def:redphysgau}
\begin{enumerate}
\item We call the four isochrone potentials
\[%
\begin{array}
[c]{ll}%
\psi_{\mathrm{ke}}\left(  r\right)  =-\dfrac{\mu}{r}, & \phantom{and\ } \psi_{\mathrm{ha}%
}\left(  r\right)  =\frac{1}{2}\omega^{2}r^{2},\\
& \\
\psi_{\mathrm{he}}\left(  r\right)  =-\dfrac{\mu}%
{b+\sqrt{b^{2}+r^{2}}}, \ \ & and \ \psi_{\mathrm{bo}}(r)=\dfrac{\mu
}{b+\sqrt{b^{2}-r^{2}}},%
\end{array}
\text{ }%
\]
the Kepler, the harmonic, the H{\'e}non and the bounded potential, respectively. 
\item We call \emph{reduced isochrone} potentials
$\psi_{\mathrm{iso}}^{\mathrm{red}}$ one of the four potentials
$$\psi_{\mathrm{ke}}, \  
\psi_{\mathrm{ha}}, \  
\psi_{\mathrm{he}}^\mathrm{red}=J_{\frac{\mu}{2b},0}\left( \psi_{\mathrm{he}} \right)=\frac{\mu}{2b}+\psi_{\mathrm{he}}
\text{ or }  
\psi_{\mathrm{bo}}^\mathrm{red}=J_{-\frac{\mu}{2b},0}\left( \psi_{\mathrm{bo}} \right)=-\frac{\mu}{2b}+\psi_{\mathrm{bo}}.$$
\item We call \emph{physical isochrone} potentials $\psi_{\mathrm{iso}%
}^{\mathrm{phy}}$ the result of a transvection applied to a reduced isochrone: $\psi_{\mathrm{iso}}^{\mathrm{phy}}=J_{\epsilon,0}\left(  \psi_{\mathrm{iso}%
}^{\mathrm{red}}\right) = \psi_{\mathrm{iso}%
}^{\mathrm{red}} + \epsilon $. 
\item We call \emph{gauged isochrone}
potentials $\psi_{\mathrm{iso}}^{\mathrm{gau}}$ the result of a vertical
translation applied to a physical isochrone: $\psi_{\mathrm{iso}%
}^{\mathrm{gau}}=J_{0,\lambda}\left(  \psi_{\mathrm{iso}}^{\mathrm{phys}}\right)
=\psi_{\mathrm{iso}}^{\mathrm{phys}}+\frac{\lambda}{2r^2}
$.
\end{enumerate} 
\end{definition}

Physical isochrones possess interesting physical properties which are presented and studied in section \ref{sec4}. They all confine a finite mass in a finite radius $r<\mathcal{R}$ (see equation \eqref{definitionR}, page \pageref{definitionR}). Reduced isochrones are special cases of physical ones: their parabolas pass through the origin with a horizontal or a vertical tangent. 

Due to their $r^{-2}$ divergence in the potential, when $\lambda\ne0$, gauged isochrone potentials have an infinite mass at their center and thus possess poor physical meaning. However, they are essential to the completeness of the isochrone set. 

\subsection{The affine group action on the Isochrone set}
\label{subsec:affinegroupaction}

Let us denote respectively $\mathbb{I}_{\mathrm{pot}}$ and $\mathbb{I}_{\mathrm{par}}$ the set of isochrone potentials and parabolas. These two sets are in bijection and theorem \ref{theo1} states that, from a mathematical point of view, they are four-dimensional manifolds. As a matter of fact, each isochrone potential is uniquely determined by four real parameters $(\mu, b,\epsilon,\lambda)$ with $\mu>0$, $b\geq0$ and $(\epsilon,\lambda)\in\mathbb{R}^2$ --- n.b. for $\psi_\mathrm{bo}$,  $b>0$.

We have also seen that the two-dimensional affine group $\mathbb{A}\simeq(\mathbb{R}^2,+)$, generated by the affine
transformations $J_{\epsilon,\lambda}$ with $(\epsilon,\lambda)\in\mathbb{R}^2$, acts on both sets, either on potentials or on the cor\-res\-pon\-ding parabolas. Since the dimension of $\mathbb{A}$ is less than the dimension of $\mathbb{I}_{\mathrm{pot}}$ and $\mathbb{I}_{\mathrm{par}}$ ($2<4$), the action is not transitive and each group orbit $\mathbb{A}\cdot\psi$ or $\mathbb{A}\cdot\mathcal{P}$ for corresponding potential $\psi$ or parabola $\mathcal{P}$ is a two-dimensional sub-manifold of $\mathbb{I}_{\mathrm{pot}}$ or $\mathbb{I}_{\mathrm{par}}$. 
This translates the second part of 
theorem~\ref{theo1}: we have four types of group orbits under the action of $\mathbb{A}$, one for each type of isochrone potential.
\\

Let us now see more precisely this action of the affine group and its corresponding group orbits.

\begin{figure}[h]
\centering \resizebox{0.97\textwidth}{!}{\includegraphics{./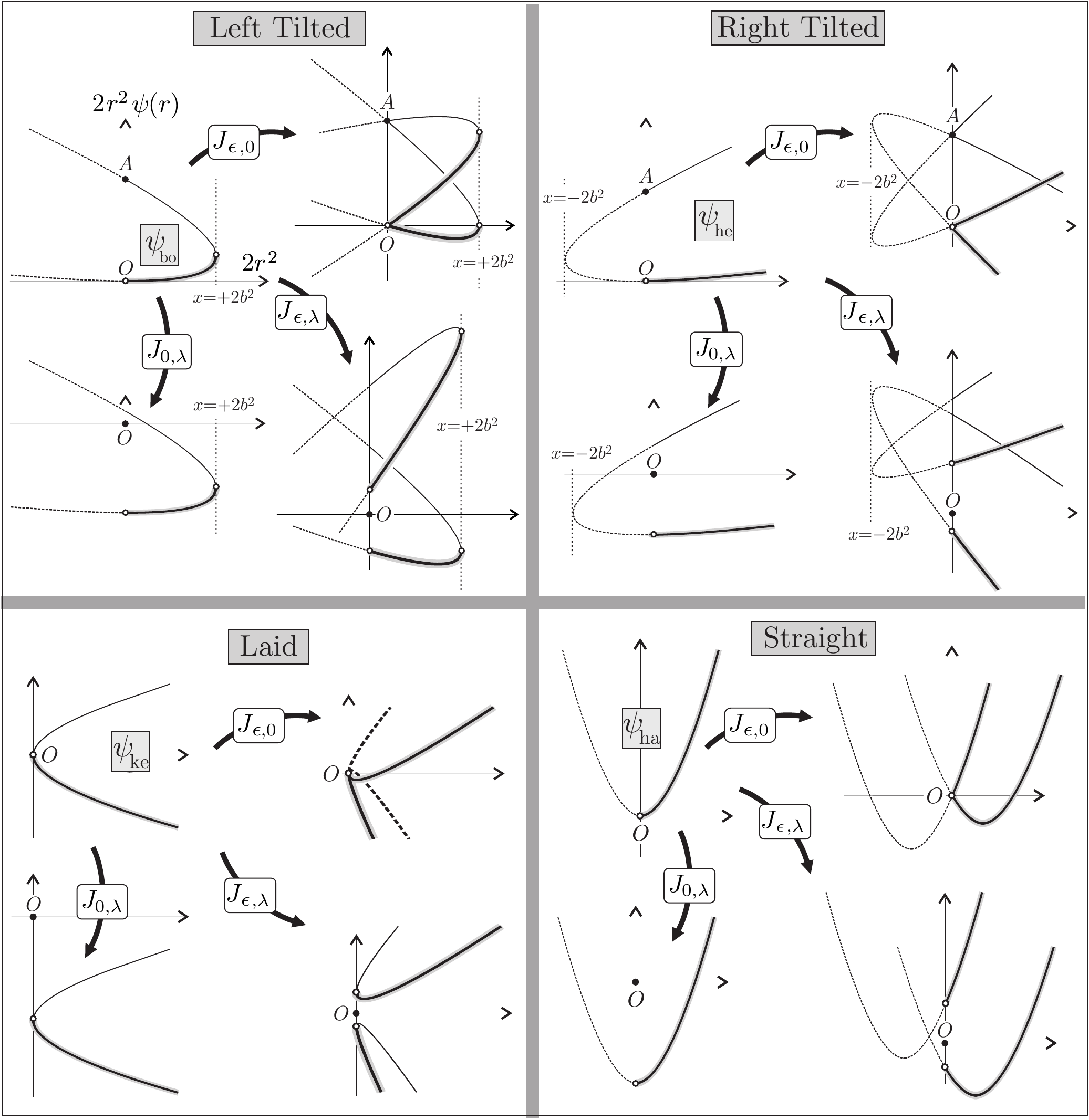}}
\centering
\caption{The action of affine transformations on reduced parabolas and their
corresponding potentials.}%
\label{allfour}%
\end{figure}

Each parabola is associated with an isochrone potential and vice-versa. Each
isochrone parabola belongs to one of the four classes of reduced parabolas we
have presented in figure~\ref{fig:redpara} and is associated with one of the four reduced isochrone
potentials made explicit in theorem~\ref{theo1}. In order to geometrically understand the morphing of parabolas
associated with the action of affine transformations, we propose the general
picture of figure \ref{allfour}.
 
We do not represent in this figure either the bottom-oriented straight
parabolas or the left-oriented laid ones because they respectively correspond to
decreasing and non-physical potentials. We specify that it is always possible to
have $\left(  Oy\right)$-axis crossing the parabola: this corresponds to a
horizontal translation of the parabola associated with a good choice of the
origin of the physical referential.

Transvections correspond to $J_{\epsilon,0}$. They are associated with a
swivel combined with a deformation of the parabola: the points of the parabola
lying on the $\left(  Oy\right)$-axis are invariant as is the abscissa of the
vertical tangent. 

General affine transformations $J_{\epsilon,\lambda}$ swivel, deform and translate
a reduced parabola. They affect both the energy and the angular momentum of
the considered isochrone orbit. Any parabola obtained from the action of
$J_{\epsilon,\lambda}$ on a reduced one corresponds to an isochrone potential
in the same group orbit of the reduced potential under the action of the
Affine Group. In this sense we can claim that there are only four different
isochrone potentials up to an affinity on its parabola.  
\\

We note that relations exist between the isochrone potentials. As a
matter of fact, $\psi_{\mathrm{ke}}$ and $\psi_{\mathrm{ha}}$ come from
$\psi_{\mathrm{he}}$ when $b\rightarrow0$ and $b\rightarrow+\infty$,
respectively. \label{remark:isochronelink}
Furthermore, known relations exist between $\psi_{\mathrm{ke}}$ 
and $\psi_{\mathrm{ha}}$, such as the Bohlin
transformation~(\cite{Bohlin}, \cite{ArnoldBol}, \cite{LB2}, see also the footnote~\ref{footnote:bohlin} p.\pageref{footnote:bohlin}) which maps the
harmonic orbits onto Keplerian ones and vice versa. All these relations are not
in the scope of the affine group action and do not affect the parameters
$(\mu,b)$ or $\omega$ of the concerned potentials.

Nevertheless, making use of rotations $R_{\theta}$ of an angle $\theta$ in the
$\left(  x,y\right)$-frame and starting, for instance, from the laid Kepler
parabola, we can obtain a new parabola with an arbitrarily oriented axis of
symmetry. Then, acting with $J_{\epsilon,\lambda}$, we can recover the corresponding
reduced parabola in one of the four families. This operation is graphically
illustrated in figure~\ref{kepler2henon} in the case of the morphing from the
Kepler isochrone to the H\'{e}non isochrone.

\begin{figure}[h]
\centering
\resizebox{0.9\textwidth}{!}{\includegraphics{./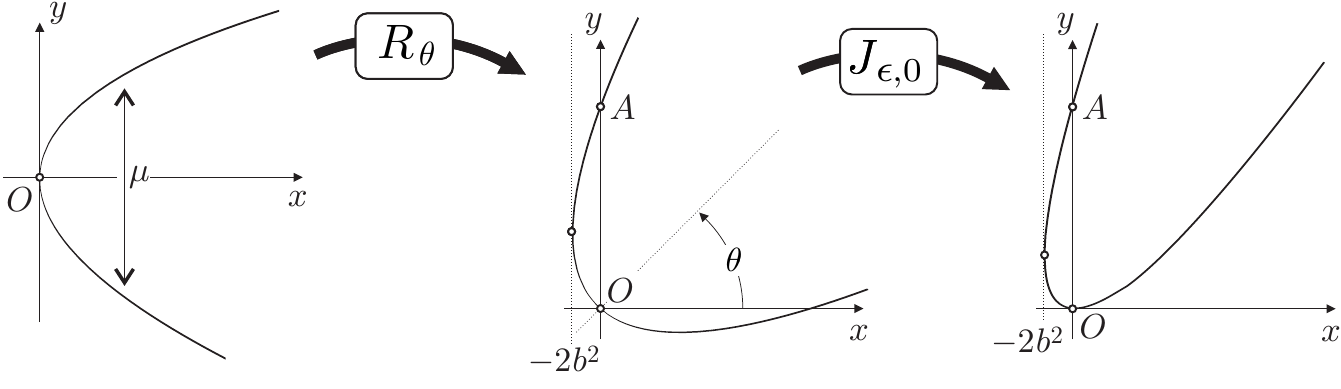}}
\centering
\caption{Rotation and transvection of the Kepler parabola to the H{\'e}non one.}%
\label{kepler2henon}%
\end{figure}

Varying the unique parameter $\mu$ of the Kepler potential, written as
\begin{equation}
y_{\mathrm{ke}}\left(  x\right)  =-\mu \sqrt{2x} \label{y_x_kepler}
\end{equation}
 in H\'{e}non's variables, the aperture of its laid parabola varies and produces a
variation of the $b$ parameter of the H\'{e}non potential corresponding to the
negative part of the rotated parabola. Using this process one can easily
understand that when $\theta \in \left(  -\frac{\pi}{2},+\frac{\pi}{2}\right)$, we recover a H{\'e}non potential $\psi_{\mathrm{he}}$; when $\theta
\in \left(  \frac{\pi}{2},+\frac{3\pi}{2}\right)$, we recover the bounded
potential $\psi_{\mathrm{bo}}$; and for $\theta=+\frac{\pi}{2}$, we obtain the
harmonic $\psi_{\mathrm{ha}}$ from the Kepler potential $\psi_{\mathrm{ke}}$.
More generally, any isochrone potential is contained
in the group orbit of a Kepler potential under the action of the group $\textsc{SO}(2)\ltimes \mathbb{A}$.

As we have completely classified $\mathbb{I}_{\mathrm{pot}}$ and $\mathbb{I}_{\mathrm{par}}$, we can now
return to the study of isochrone \textsc{pro}'s. We will see that relevant isochrone rotations are not Euclidian but hyperbolic.

\section{Isochrone orbits and isochrone transformations\label{sec3}\qquad}

\subsection{Introduction and motivation\label{subsec_LBidea}}

From the geometrical classification of the isochrone potentials
established through the action of the Affine Group in section~\ref{sec2}, we
propose now to investigate isochrone orbits\footnote{When not specified, orbit refers to the trajectory of the considered 
test particle in the considered potential and no more to the group orbit of a potential or parabola under a group action.}.

For this purpose, we generalize a transformation that originates in the work of Newton, Karl Bohlin~\cite{Bohlin} and Donald Lynden-Bell~\cite{LyndenBell} who recently passed away and to whom we dedicate this work.
He explored a remarkable property of Michel H{\'e}non's isochrone, namely $\psi_\mathrm{he}$ is equivalent to a harmonic potential 
at small distances and to a Keplerian potential at larger ones, see section~\ref{sec4}. In those two potentials, the orbits are closed ellipses.
Newton showed, in the later edition of the Principia, how to map a Keplerian elliptical orbit onto a harmonic one and vice versa. His methods relied on a
total exchange of energy and potential between a Kepler and a Hooke system. Pointing out a freedom that involves 
partial exchange of energy and potential, Donald Lynden-Bell derived H{\'e}non's isochrone as a convex interpolation of Kepler and Hooke potentials. 
Let us detail now their mathematical analysis and generalize it to isochrone orbits transformations.

\subsubsection{Isochrone orbits transformations}

A \textsc{p}eriodic \textsc{r}adial \textsc{o}rbit (\textsc{pro}) $r_{0}%
(t_{0})$ in a radial potential $\psi_{0}$ is governed by the ordinary
differential equation
\[
\frac{1}{2}\left(  \frac{dr_{0}}{dt_{0}}\right)  ^{2}+\frac{\Lambda_{0}^{2}%
}{2r_{0}^{2}}=\xi_{0}-\psi_{0}\left(  r_{0}\right)  \text{.}%
\]
In H\'{e}non variables $x_{0}=2r_0^2$ and $y_{0}=x_0\psi_0(x_0)$, it can be written as
\begin{equation}
\displaystyle { \frac{1}{16}\left(  \frac{dx_{0}}{dt_{0}}\right)  ^{2}+\Lambda_{0}^{2}%
=x_{0}\xi_{0}-y_{0}\left(  x_{0}\right)  \text{.} } \label{fundamentalode}%
\end{equation}
Since the force derived from a radial potential is radial, the motion of a test particle takes place in a fixed plane and this particle is described by its polar coordinates $(r_0,\varphi_0)$ in this plane. 
 
When the potential is isochrone, $y_{0}$ is a parabola. This property is
preserved by linear transformations of parabolas (see lemma
\ref{lemme_lineaire} in appendix~\ref{appendixlemma}) and consequently for the
orbits they contain. Placing the origin of the $(x,y)$-plane at the center of
the system described by $\psi_{0}$,
linear transformations relate isochrone orbits together. There exists then a
change of variables $(r_{0},t_{0})\mapsto(r_{1},t_{1})$ mapping an isochrone
orbit onto another one that satisfies an orbital equation in the new potential
$y_{1}$, i.e.
\begin{equation}
\frac{1}{16}\left(  \frac{dx_{1}}{dt_{1}}\right)  ^{2}+\Lambda_{1}^{2}%
=x_{1}\xi_{1}-y_{1}\left(  x_{1}\right)  \text{.} \label{fundamentalodeimg}%
\end{equation}
As Donald Lynden-Bell explained (\cite{LB2} sect. 3 or \cite{LyndenBell} sect. 2), it
is convenient to study orbits of identical angular momentum%
\begin{equation}
\Lambda_{1}=\Lambda_{0}=\Lambda \text{.} \label{lambdaconservation}%
\end{equation}
This hypothesis allows one to get the same Kepler's area law for both orbits, in their
respective radial potentials.

At this point, no constraints specify how each of the three remaining terms  
$\left(  \frac{dx_{0}%
}{dt_{0}}\right)  ^{2}$, $x_{0}\xi_{0}$, and
$y_{0}$ in~\eqref{fundamentalode}
is transformed in the
mapping. For instance, the Bohlin transformation (see \cite{Bohlin} for the
original reference or \cite{ArnoldBol} for a modern presentation) consists of
a full exchange between energy and potential terms. As underlined by
Lynden-Bell, the exchange can also be partial: only part of the potential term
$y_{0}$ is then mapped onto the energy $\xi_{1}$ and vice versa. We thus propose
to conserve
\begin{equation}
x_{1}\xi_{1}-y_{1}(x_{1})=x_{0}\xi_{0}-y_{0}(x_{0}). \label{eq:energycone}%
\end{equation}
The two conditions \eqref{lambdaconservation} and \eqref{eq:energycone} imply
\begin{equation}
\frac{dx_{1}}{dt_{1}}=\frac{dx_{0}}{dt_{0}}\  \text{ and }\ 2\Lambda=x_{0}%
\frac{d\varphi_{0}}{dt_{0}}=x_{1}\frac{d\varphi_{1}}{dt_{1}}
\label{eq:condchgvar}%
\end{equation}
for the radial and angular velocities of the orbits in the mapping.

The more general linear transformation of $\vec{w}=\left(  \xi x,y\right)
^{\top}$ satisfying the constraint \eqref{eq:energycone} is given by
\begin{equation}
\label{eq:Babexpr}\vec{w}_{1}=B_{\alpha,\beta}\left(  \vec{w}_{0}\right)
\text{ with\ }B_{\alpha,\beta}=\left[
\begin{array}
[c]{cc}%
\alpha & \beta \\
\alpha-1 & \beta+1
\end{array}
\right]  ,\  \left(  \alpha,\beta \right)  \in \mathbb{R}^{2}.
\end{equation}
Lynden-Bell transformation only depends on one parameter with $\beta=1-\alpha$.
From now on, we will assume $\text{det}\left(B_{\alpha,\beta}\right) =
\alpha+\beta \neq0$ because the corresponding singular transformation leads to constant
potentials or not well-defined image orbits. As a consequence, $B_{\alpha
,\beta}$ will be invertible and can be used to change the reference frame. In
this case we call $B_{\alpha,\beta}$ a \emph{bolst} in the general case or an \emph{ibolst} when it is symmetric. Reasons for these names
will become clear later.

\subsubsection{The bolst as the generalized Bohlin transformation}

A bolst $B_{\alpha,\beta}$ maps two orbits in two isochrone potentials. It induces a change of time which can be made explicit: using~\eqref{eq:condchgvar} and~\eqref{eq:Babexpr} we get
\begin{equation}
\frac{dt_{1}}{dt_{0}}=\frac{dx_{1}}{dx_{0}}=\frac{\alpha \xi_{0}}{\xi_{1}%
}+\frac{\beta}{\xi_{1}}\frac{dy_{0}}{dx_{0}} \label{causalite}.%
\end{equation}
We assume $\xi
_{1}\neq0$ since associated orbits are not well-defined in the coordinates of
$\vec{w}$. To deal with $\xi_{1}=0$ one may apply first a transvection
$J_{\epsilon,0}$ to $\vec{w}$, then study the orbit with $\xi
_{1}+\epsilon \neq0$. 

In order to ensure a bijective time transformation
$t_{0}\rightarrow t_{1}(t_{0})$, we need to impose a fixed sign on
$\frac{dt_{1}}{dt_{0}}$. For instance, we assume it to be positive. Combining
its expression~\eqref{causalite} with the second condition 
of~\eqref{eq:condchgvar}, the time evolution can be expressed in terms of 
the polar angles of the two orbits in their respective planes of motion. 
They are linked through
\begin{equation}
\left[  \frac{\alpha \xi_{0}}{\xi_{1}}+\frac{\beta}{\xi_{1}}\frac{y_{0}}{x_{0}%
}\right]  \frac{d\varphi_{1}}{d\varphi_{0}}=\frac{\alpha \xi_{0}}{\xi_{1}%
}+\frac{\beta}{\xi_{1}}\frac{dy_{0}}{dx_{0}} > 0. \label{odeboosttheta}%
\end{equation}
As we will see below, this \textsc{ode} gives $\varphi_{1}$ as a function of $\varphi_{0}$, i.e. $\varphi_{1}(\varphi_{0})$, when
$y_{0}\left(  x_{0}\right)$ is specified. When it is solved, the orbit can
be plotted in polar coordinates $\left(  x_{1},\varphi_{1}\right)$. In the
next proposition we solve this equation when a bolst is applied to a Keplerian orbit. 
In theorem~\ref{prop:Bohlin}, we call system a potential - orbit couple.

\begin{theorem}
\label{prop:Bohlin} Only the harmonic and Keplerian potentials can exchange their radial 
orbits with a linear change of polar angle. The transformation of a Kepler system into a scaled
Kepler system is given by $B_{\alpha,0}$. On the other hand, $B_{0,\beta}$ maps a Kepler system
onto a harmonic one by fully exchanging the energy and potential. This is the classical Bohlin transformation\footnote{\label{footnote:bohlin}This transformation
is also known as the transformation of Levi-Civita~\cite{LeviCivita} and was already introduced by C. MacLaurin in~\cite{ML} and then E. Goursat in~\cite{Goursat}
as excellently remarked by Alain Albouy and Niccol{\`o} Guicciardini.}.

Otherwise, when $\alpha\beta\neq0$, the image of a Keplerian $\textsc{pro}$ by
$B_{\alpha,\beta}$ is an isochrone orbit. Its azimuthal angle is given by
\begin{equation}
\varphi_{1}\left(  \varphi_{0}\right)  =\frac{\varphi_{0}}{2}+\tfrac{\chi
}{\sqrt{\left(  1+\chi \right)  ^{2}-e^{2}}}\arctan \left[  \sqrt{\tfrac
{1+\chi-e}{1+\chi+e}}\tan \left(  \frac{\varphi_{0}}{2}\right)  \right]
\  \  \  \text{with }\chi=\frac{p\alpha \left \vert \xi_{0}\right \vert }{\mu \beta},
\label{formulaphi}%
\end{equation}
where $p$ and $e$ are respectively the semilatus rectum and excentricity of
the primary Keplerian orbit. The expression holds when $\alpha\rightarrow 0$
and for the neutral bolst $B_{\alpha,0}$ when
$\beta \rightarrow0$. The precession $\Delta \varphi_1$ of the transformed polar angle during the transfer from the periastron to the apoastron and back is given by
$$
\Delta \varphi_1=\pi
\left(
1+\frac{\chi}{\sqrt{(1+\chi)^2-e^2}} 
\right).
$$
\end{theorem}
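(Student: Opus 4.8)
The plan is to start from the master \textsc{ode} \eqref{odeboosttheta} relating $\varphi_1$ to $\varphi_0$ and specialize it to a Keplerian primary orbit. First I would write the Keplerian \textsc{pro} in H\'enon variables: using \eqref{y_x_kepler} we have $y_0(x_0) = -\mu\sqrt{2x_0}$, so with $x_0 = 2r_0^2$ this is $y_0(x_0) = -2\mu r_0$ and $\frac{dy_0}{dx_0} = -\mu/\sqrt{2x_0} = -\mu/(2r_0) = y_0/x_0$. Plugging this into \eqref{odeboosttheta}, the bracket on the left and the right-hand side both become $\frac{\alpha\xi_0}{\xi_1} + \frac{\beta}{\xi_1}\frac{y_0}{x_0}$, which would naively give $\frac{d\varphi_1}{d\varphi_0}=1$ — so the care is in tracking that $x_0$, hence $y_0/x_0 = \psi_{\mathrm{ke}}(r_0) = -\mu/r_0$, varies along the orbit as a function of $\varphi_0$. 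The standard Kepler conic relation $r_0 = \frac{p}{1+e\cos\varphi_0}$ (taking periastron at $\varphi_0 = 0$) then turns $\frac{d\varphi_1}{d\varphi_0}$ into an explicit rational function of $\cos\varphi_0$. Here I must be careful: the correct reading of \eqref{odeboosttheta} with $\frac{dy_0}{dx_0}$ evaluated as a function of $x_0$ along the orbit, combined with $x_1 = \alpha\xi_0 x_0 + \beta y_0$ (the first component of $B_{\alpha,\beta}\vec w_0$ divided appropriately) and the relation $2\Lambda = x_1\frac{d\varphi_1}{dt_1} = x_0\frac{d\varphi_0}{dt_0}$ from \eqref{eq:condchgvar}, gives $\frac{d\varphi_1}{d\varphi_0} = \frac{x_0}{x_1}$, and it is this ratio — not $1$ — that produces the nontrivial precession.

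Second, I would substitute the conic. With $q := y_0/x_0 = -\mu/r_0 = -\frac{\mu}{p}(1+e\cos\varphi_0)$ and $x_0\xi_0 = 2r_0^2\xi_0$, one computes $x_1 = x_0\bigl(\alpha\xi_0 + \beta q\bigr)$, so $\frac{d\varphi_1}{d\varphi_0} = \frac{x_0}{x_1} = \frac{1}{\alpha\xi_0 + \beta q}$. Using $\xi_0 = -\mu/(2p)\cdot(1-e^2)\cdot(\text{const})$ — more precisely the Keplerian identity $\xi_0 = -\frac{\mu}{2a} = -\frac{\mu(1-e^2)}{2p}$ and recalling for a bound orbit $\xi_0<0$ so $|\xi_0| = \frac{\mu(1-e^2)}{2p}$ — I substitute $q = -\frac{\mu}{p}(1+e\cos\varphi_0)$ to get a denominator of the form $A + Be\cos\varphi_0$. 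Collecting constants and introducing $\chi = \frac{p\alpha|\xi_0|}{\mu\beta}$ as in the statement, the identity should reduce to
\[
\frac{d\varphi_1}{d\varphi_0} = \frac{1}{2}\cdot\frac{1+\chi}{1+\chi+e\cos\varphi_0}\cdot(\text{normalization}),
\]
and then $\varphi_1(\varphi_0)$ follows by the classical integral $\int \frac{d\varphi_0}{1+k\cos\varphi_0} = \frac{2}{\sqrt{1-k^2}}\arctan\!\left(\sqrt{\tfrac{1-k}{1+k}}\tan\tfrac{\varphi_0}{2}\right)$ with $k = \frac{e}{1+\chi}$. Splitting off the $\tfrac{\varphi_0}{2}$ term requires writing $\frac{1+\chi}{1+\chi+e\cos\varphi_0} = 1 - \frac{e\cos\varphi_0}{1+\chi+e\cos\varphi_0}$ and massaging the remaining piece into the stated $\arctan$ form with coefficient $\frac{\chi}{\sqrt{(1+\chi)^2-e^2}}$ — this is the one genuinely fiddly algebraic step, and it is where I expect the bookkeeping of the constants $\mu$, $p$, $e$, $\xi_0$ to be most error-prone.

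Third, I would dispatch the boundary and limiting cases. For the precession, $\Delta\varphi_1 = \varphi_1(\pi) - \varphi_1(0)$: since $\tan(\varphi_0/2)\to+\infty$ as $\varphi_0\to\pi^-$, the $\arctan$ tends to $\pi/2$, giving $\Delta\varphi_1 = \frac{\pi}{2} + \frac{\chi}{\sqrt{(1+\chi)^2-e^2}}\cdot\frac{\pi}{2}\cdot 2 = \pi\left(1 + \frac{\chi}{\sqrt{(1+\chi)^2-e^2}}\right)$ after accounting for the full periastron-to-apoastron-and-back transfer (factor matching). For $\beta\to 0$ with $\alpha\neq 0$: then $\chi\to\infty$, $\frac{\chi}{\sqrt{(1+\chi)^2-e^2}}\to 1$, and formula \eqref{formulaphi} degenerates to $\varphi_1 = \frac{\varphi_0}{2} + \frac{1}{2}\cdot 2\arctan(\tan(\varphi_0/2)) = \varphi_0$ (up to branch), i.e. $B_{\alpha,0}$ merely rescales the Kepler orbit — consistent with the first sentence of the theorem. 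For $\alpha\to 0$: $\chi\to 0$, and \eqref{formulaphi} gives $\varphi_1 = \frac{\varphi_0}{2}$, the harmonic-to-Kepler halving characteristic of the Bohlin transformation, so $B_{0,\beta}$ is indeed the classical Bohlin map. Finally, the opening claim — that \emph{only} harmonic and Keplerian potentials exchange radial orbits by a \emph{linear} change of polar angle — follows by running \eqref{odeboosttheta} in reverse: demanding $\frac{d\varphi_1}{d\varphi_0} = \mathrm{const}$ forces $\frac{dy_0}{dx_0}$ to be an affine function of $y_0/x_0$ along every orbit, and by the classification (Theorem~\ref{theo1} together with the explicit reduced potentials) the only isochrone potentials whose parabola makes this hold for all energies are $\psi_{\mathrm{ke}}$ and $\psi_{\mathrm{ha}}$; I would note that the H\'enon and bounded potentials fail this because their $\frac{dy_0}{dx_0}$ picks up a genuinely non-affine dependence on $y_0/x_0$ (equivalently, $\chi\ne 0$ makes the $\arctan$ term present). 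The main obstacle, as flagged, is purely the constant-chasing in the second step; the structure of the argument is otherwise a direct integration.
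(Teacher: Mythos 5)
There is a genuine computational error at the heart of your argument. For the Kepler parabola $y_0(x_0)=-\mu\sqrt{2x_0}$ one has $\frac{dy_0}{dx_0}=-\frac{\mu}{2r_0}=\frac{1}{2}\frac{y_0}{x_0}$, \emph{not} $\frac{y_0}{x_0}=-\frac{\mu}{r_0}$ as you write; the two sides of \eqref{odeboosttheta} therefore do \emph{not} coincide, and this discrepancy (the degree-$\frac12$ homogeneity of the Kepler parabola) is precisely what generates the $\frac{\varphi_0}{2}$ term in \eqref{formulaphi}. Your attempted repair compounds the problem: from $2\Lambda=x_0\frac{d\varphi_0}{dt_0}=x_1\frac{d\varphi_1}{dt_1}$ together with $\frac{dt_1}{dt_0}=\frac{dx_1}{dx_0}$ (equation \eqref{causalite}) one gets $\frac{d\varphi_1}{d\varphi_0}=\frac{x_0}{x_1}\,\frac{dx_1}{dx_0}$, i.e. the ratio $N/D$ of \eqref{Nphi/dphi}, whereas you drop the Jacobian of the time change and take $\frac{d\varphi_1}{d\varphi_0}=\frac{x_0}{x_1}$. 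The resulting integrand $\bigl(\alpha\xi_0+\beta q\bigr)^{-1}$ fails the most basic sanity check: at $\alpha=0$ it is a nonconstant function of $\varphi_0$, while the theorem (and the classical Bohlin map) requires $\frac{d\varphi_1}{d\varphi_0}=\frac12$ there. The correct integrand is a ratio of two \emph{distinct} affine functions of $\cos\varphi_0$, namely $\frac{\eta+\cos\varphi}{\delta+2\cos\varphi}$; splitting off the ratio $\frac12$ of its leading coefficients yields the $\frac{\varphi_0}{2}$ term plus an integral of the form $\int\frac{d\varphi}{\delta+2\cos\varphi}$, which the substitution $u=\tan(\varphi/2)$ turns into the stated $\arctan$. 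Your proposed reduction to $\frac12\cdot\frac{1+\chi}{1+\chi+e\cos\varphi_0}\cdot(\text{normalization})$ is asserted rather than derived and cannot follow from your integrand: splitting $\frac{1+\chi}{1+\chi+e\cos\varphi_0}=1-\cdots$ would produce a $1\cdot\varphi_0$ term, not $\frac12\varphi_0$.

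The remaining pieces of your plan are sound and agree with the paper's route: the limits $\chi\to\infty$ ($\beta\to0$, recovering $\varphi_1=\varphi_0$) and $\chi\to0$ ($\alpha\to0$, recovering $\varphi_1=\varphi_0/2$), the evaluation of $\Delta\varphi_1$ at $\varphi_0=\pi$, and the claim that only the Kepler and harmonic potentials admit a linear exchange of polar angle, which you correctly reduce to the ansatz $\varphi_1=m\varphi_0$ forcing $y_0(x_0)=kx_0^{m}-\xi_0\frac{\alpha}{\beta}x_0$ (the paper's equation \eqref{eq:y0linearchgpolarangle}) and hence $m\in\{\frac12,2\}$ for a nondegenerate parabola. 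But the central derivation of the integrand must be redone with the correct $\frac{dy_0}{dx_0}$ and the correct chain rule $\frac{d\varphi_1}{d\varphi_0}=\frac{x_0}{x_1}\frac{dx_1}{dx_0}$ before the formula \eqref{formulaphi} can legitimately be claimed.
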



\begin{proof}
Assume potential $\psi_{0}$ to be $\psi_{\mathrm{ke}}$. If the primary orbit is a
\textsc{pro}, then the radial distance is known by
\[
\frac{1}{r_{0}}=\frac{1+e\cos \varphi_{0}}{p},%
\]
where $p$ and $e$ are respectively the semilatus rectum and the excentricity
of the Keplerian elliptic orbit of energy $\xi_{0}<0$ that we consider.
Moreover, from equation (\ref{y_x_kepler}), we have $y_0(x_0)=-\mu\sqrt{2x_0}$. Hence,
\[
\frac{y_{0}}{x_{0}}=\psi_{0}=-\frac{\mu}{r_{0}}\text{ and }\frac{dy_{0}%
}{dx_{0}}=-\frac{\mu}{\sqrt{2x_{0}}}=-\frac{\mu}{2r_{0}}.
\]
In this case, the \textsc{ode} \eqref{odeboosttheta} becomes%
\[
\left[  \alpha \xi_{0}-\frac{\mu \beta}{p}\left(  1+e\cos \varphi_{0}\right)
\right]  \frac{d\varphi_{1}}{d\varphi_{0}}=\alpha \xi_{0}-\frac{\mu \beta}%
{2p}\left(  1+e\cos \varphi_{0}\right)
\]
for $\xi_{1}\neq0$. Two cases appear to be trivial:

\begin{enumerate}
\item When $\alpha=0$ and $\beta \neq0$, then
\[
\frac{d\varphi_{1}}{d\varphi_{0}}=\frac{1}{2}.%
\]
The system \eqref{eq:Babexpr} can be directly inverted and gives%
\[
\left \{
\begin{array}
[c]{l}%
r_{1}^{2}=-\frac{\beta \mu}{\xi_{1}}r_{0},\\
\psi_{1}\left(  r_{1}\right)  =\beta_{1}+\frac{1}{2}\omega_{1}r_{1}%
^{2}, \quad \text{\ where\ }\omega_{1}^{2}=\frac{2\left \vert \xi_{0}\right \vert
\xi_{1}^{2}}{\mu^{2}\beta^{2}}\text{ and }\beta_{1}=\frac{
\beta+1 }{\beta}\xi_{1}.
\end{array}
\right.
\]
This duality between the harmonic and the Keplerian potentials is the same as
that described by a Bohlin transformation~\cite{Grandati:2010}. In order to
get a real $r_{1}$, the quantity $\frac{\beta}{\xi_{1}}$ must be negative. The angle  $\varphi_{0}$ of the Keplerian orbit is twice that
of the corresponding  $\varphi_{1}$ of the harmonic one, as
represented in figure~\ref{twospecialcases}. The focus $F$ of the
Keplerian ellipse is the center of the harmonic one. \newline

\item When $\alpha \neq0$ and $\beta=0$, then%
\[
\frac{d\varphi_{1}}{d\varphi_{0}}=1.
\]
The system can still be inverted as
\[
\left \{
\begin{array}
[c]{l}%
r_{1}^{2}=\frac{\alpha \xi_{0}}{\xi_{1}}r_{0}^{2},\\
\psi_{1}\left(  r_{1}\right)  =\alpha_{1}-\frac{\mu_{1}}{r_{1}}, \quad \text{ where
}\alpha_{1}=\frac{\left(  \alpha-1\right)  }{\alpha}\xi_{1}\text{ and }\mu
_{1}=\mu \sqrt{\frac{\xi_{1}}{\alpha \xi_{0}}}.
\end{array}
\right.
\]
The quantity $\frac{\xi_{1}}{\alpha \xi_{0}}$ must be positive when
$x_{0,1}=2r_{0,1}^{2}>0$. This transformation maps the primary Keplerian
ellipse onto a scaled confocal one. The two moving points are always aligned
with the common focus of the two ellipses. As $\xi_{1}$ needs to be negative
to ensure bounded bolsted orbit, this imposes $\alpha>0$.
\end{enumerate}

These two special cases are represented in figure \ref{twospecialcases}.
\begin{figure}[h]
\centering
\resizebox{0.95\textwidth}{!}{
\includegraphics{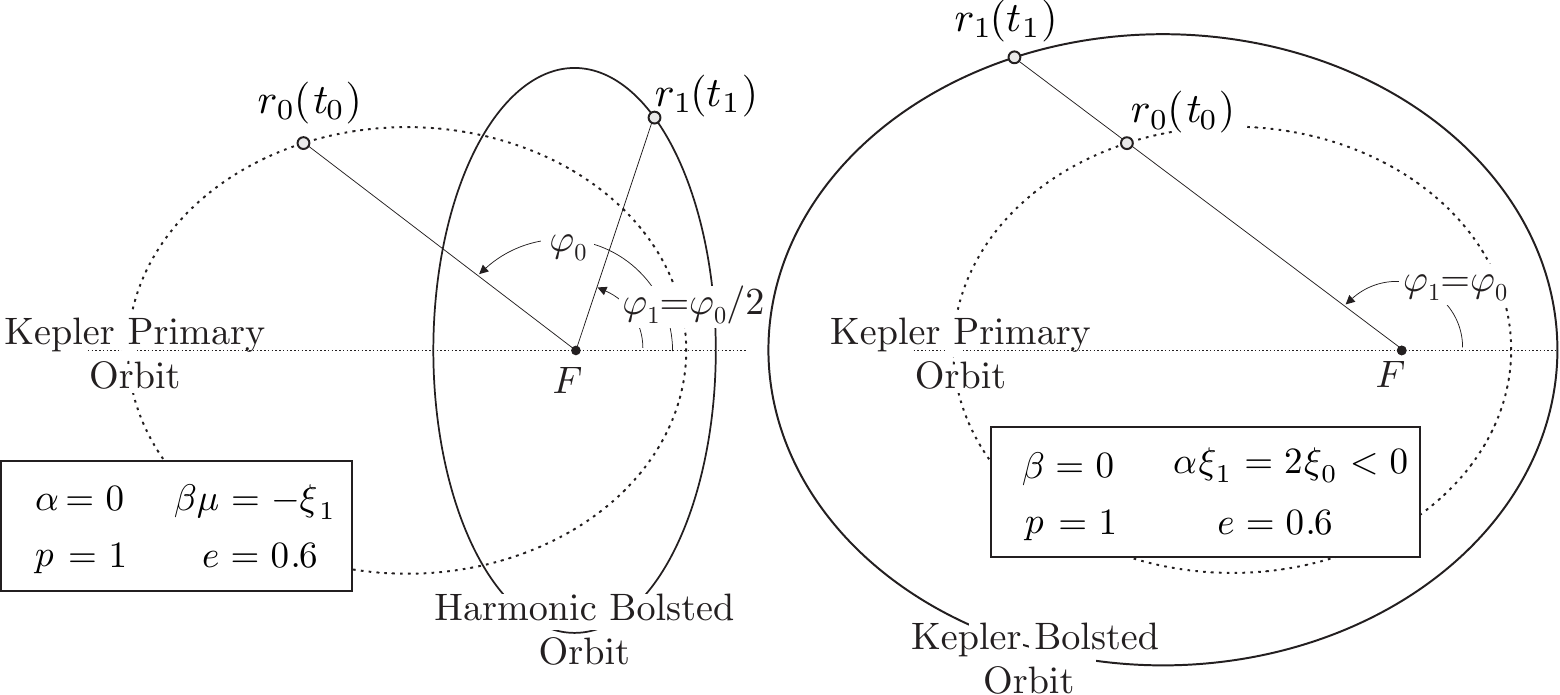} } \centering
\caption{The transformation $B_{0,\beta}$ of a Keplerian \textsc{pro} gives a
harmonic \textsc{pro} when $\beta \xi_{1}<0$, as represented on the left
panel. The transformation $B_{\alpha,0}$ of a Keplerian \textsc{pro} gives a
Keplerian \textsc{pro} when $\alpha>0$ and $\xi_{1}<0$, as represented on the
right panel. }%
\label{twospecialcases}%
\end{figure}

To show that only the harmonic and Keplerian potentials can
exchange their radial orbit with a linear change of polar angles, we assume that
\begin{equation}
\label{eq:linearchgpolarangle}\varphi_{1}(t_{1}) = m \varphi_{0}(t_{0})
\  \text{ with~} \ m = \text{cst}.
\end{equation}
Combining \eqref{eq:condchgvar} with the derivative
of~\eqref{eq:linearchgpolarangle} one can verify that $y_{0}$ satisfies the
\textsc{ode}
\begin{equation}
\frac{dy}{dx_{0}}-\frac{m}{x_{0}}y=\xi_{0}\frac{\alpha}{\beta}\left(
m-1\right),  \label{eq:y0linearchgpolarangle}%
\end{equation}
which holds for $\beta \neq0$. The solution of~\eqref{eq:y0linearchgpolarangle}
is given by
\[
y_{0}(x_{0})=kx_{0}^{m}-\xi_{0}\frac{\alpha}{\beta}x_{0}.
\]
But $y_{0}$ must describe a parabola, so either

\begin{itemize}
\item $m=1$ or $m=0$. Then the potential is constant or constant with a gauge,
and no \textsc{pro} exists.

\item $m=\frac{1}{2}$. Then $y_{0}$ represents a Keplerian potential up to a
constant. Inserting the solution $y_{0}$ in $(\xi_{1} x_{1}, y_{1})^{\top}$,
$y_{1}$ is a harmonic potential with a constant.

\item $m=2$. Then $y_{0}$ represents a harmonic potential up to a constant and
$y_{1}$ a transvected Keplerian potential.
\end{itemize}

Let us examine now the more general case when $\alpha \beta \neq0$. The \textsc{ode} for
phases is written as%
\begin{equation}
\frac{d\varphi_{1}}{d\varphi_{0}}=\frac{N\left(  \varphi_{0}\right)
}{D\left(  \varphi_{0}\right)  }\  \text{\ where }\left \{
\begin{array}
[c]{lcccl}%
N\left(  \varphi_0 \right)  &=&\dfrac{dx_{1}}{dx_{0}}&=&\dfrac{1}{\xi_{1}}\left[
\alpha \xi_{0}-\frac{\mu \beta}{2p}\left(  1+e\cos \varphi_{0}\right)  \right], \\
D\left(  \varphi_0 \right) & =& \dfrac{x_{1}}{x_{0}}&=&\dfrac{1}{\xi_{1}}\left[
\alpha \xi_{0}-\frac{\mu \beta}{p}\left(  1+e\cos \varphi_{0}\right)  \right].
\end{array}
\right.  \label{Nphi/dphi}%
\end{equation}
We first remark that the denominator function $\varphi \rightarrow D\left(
\varphi \right)  $ is strictly positive as both $x_{0}=2r_{0}^{2}$ and
$x_{1}=2r_{1}^{2}$ are positive functions. In \eqref{causalite} we have seen
that the sign of $N\left(  \varphi \right)$ cannot change; as a consequence
the function $\varphi_{0}\rightarrow \varphi_{1}\left(  \varphi_{0}\right)  $
is monotone. In our hypothesis where $N\left(  \varphi \right)  \geq0$,
$\varphi_{1}$ is an increasing function of $\varphi_{0}$. After a little
rearrangement, from \eqref{Nphi/dphi} we obtain
\[
\varphi_{1}={\displaystyle \int_{0}^{\varphi_{0}}}\frac{\eta+\cos \varphi
}{\delta+2\cos \varphi}d\varphi \quad  \text{where }\eta=\frac{\mu \beta
-2p\alpha \xi_{0}}{\mu \beta e}\geq1\text{ and }\delta=\frac{2\mu \beta
-2p\alpha \xi_{0}}{\mu \beta e}>2.
\]
We notice that the particular case when the primary Keplerian orbit is
circular, i.e. $e=0$, linearly links $\varphi_{0}$ and $\varphi_{1}$. The
integral for $\varphi_{1}$ can be made explicit: introducing $u=\tan \left(
\varphi/2\right)  $ we get$\  \cos \varphi=\frac{1-u^{2}}{1+u^{2}}$,
$d\varphi=\frac{2du}{1+u^{2}}$ and thus\label{Bioche}
\[
\  \text{ }\varphi_{1}=2{\displaystyle \int_{0}^{u_{0}}}\frac{\ell+2+\ell
u^{2}}{\left(  m+4+mu^{2}\right)  \left(  1+u^{2}\right)  }du\  \text{\ where}%
\left \{
\begin{array}
[c]{c}%
\ell=\eta-1\geq0,\\
m=\delta-2>0.
\end{array}
\right.
\]
A partial fraction decomposition gives
\[
\varphi_{1}={\displaystyle \int_{0}^{u_{0}}}\frac{1}{1+u^{2}}du+\frac{2\ell
-m}{\sqrt{m\left(  m+4\right)  }}{\displaystyle \int_{0}^{v_{0}}}\frac
{dv}{1+v^{2}},%
\]
where $v=\sqrt{\frac{m}{m+4}}u$. The integration leads to
\[
\varphi_{1}=\frac{\varphi_{0}}{2}+\frac{2\ell-m}{\sqrt{m\left(  m+4\right)  }%
}\arctan \left[  \sqrt{\frac{m}{m+4}}\tan \left(  \frac{\varphi_{0}}{2}\right)
\right],
\]
and so%
\[
\varphi_{1}=\frac{\varphi_{0}}{2}+\tfrac{\chi}{\sqrt{\left(  1+\chi \right)
^{2}-e^{2}}}\arctan \left[  \sqrt{\tfrac{1+\chi-e}{1+\chi+e}}\tan \left(
\frac{\varphi_{0}}{2}\right)  \right]  \  \  \  \text{with }\chi=\frac
{p\alpha \left \vert \xi_{0}\right \vert }{\mu \beta}.
\]
If $\alpha=0$ we would recover the relation $\varphi_{1}=\frac{\varphi_{0}}%
{2}$ previously mentioned. In the same way, when $\beta \rightarrow0$, then
$\varphi_{1}\rightarrow \varphi_{0}$. 
When the bolsted orbit is a \textsc{pro}, we can easily compute the increment of the azimuthal angle $\Delta \varphi$
during the transfer from $r_{a}$ to $r_{p}$ and back. In the Keplerian case, from figure \ref{fig:generalboostane0betane0}, we see that the transfer 
for $r_0:r_{0,p}\to r_{0,a}$ corresponds to $\varphi_0:0\to \pi$. Hence, using (\ref{formulaphi}) one gets
\[
\begin{array}[l]{ll}
\varphi_1:0 \to \dfrac{1}{2}\Delta \varphi_1 &= \dfrac{\pi}{2}+\dfrac{\chi}{\sqrt{(1+\chi)^2-e^2}}\mathrm{arctan}(\infty)\\
  & =\dfrac{\pi}{2}\left( 1+\frac{\chi}{\sqrt{(1+\chi)^2-e^2}} \right).
\end{array}
\]
Since
\[
p=\frac{\Lambda^{2}}{\mu}\text{ and }e=\sqrt{1+\frac{2\Lambda^{2}\xi_{0}}%
{\mu^{2}}},%
\]
we see that $\Delta \varphi_1$ depends on $\Lambda^{2}$ but not on $\xi_{1}$.
This is a characterization of isochrone orbits (see theorem~\ref{thm:characisopot} 
in appendix~\ref{appendix:isocharac}). Given a point $\left(
\varphi_{0},r_{0}\right)  $ on the primary Keplerian ellipse, its image on the
bolsted orbit has a polar angle $\varphi_{1}$ given by the formula
\eqref{formulaphi} and a distance $r_{1}$ given by the relation
\eqref{eq:Babexpr}, i.e.
\[
x_{1}=2r_{1}^{2}=2\alpha \frac{\xi_{0}}{\xi_{1}}r_{0}^{2}-2\mu \frac{\beta}%
{\xi_{1}}r_{0}\implies r_{1}^{2}=\frac{\alpha \xi_{0}r_{0}^{2}-\mu \beta r_{0}%
}{\xi_{1}}.
\]
When $\alpha$, $\beta$ and $\xi_{1}$ are such that $r_{1}^{2}>0$ for all
$r_{0}$ on the Keplerian orbit, this corresponds to an isochrone \textsc{pro}.
\qed

\end{proof}

Theorem \ref{prop:Bohlin} shows that any Keplerian \textsc{pro} can be
transformed into a particular isochrone one by a suitable bolst $B_{\alpha
,\beta}$. When $\alpha=0$, the bolst coincides with a Bohlin transformation.
In the other cases, it generalizes it; we have plotted an example of such a bolst in figure~\ref{fig:generalboostane0betane0}.

\begin{figure}[h]
\centering
\resizebox{0.9\textwidth}{!}{
\includegraphics{./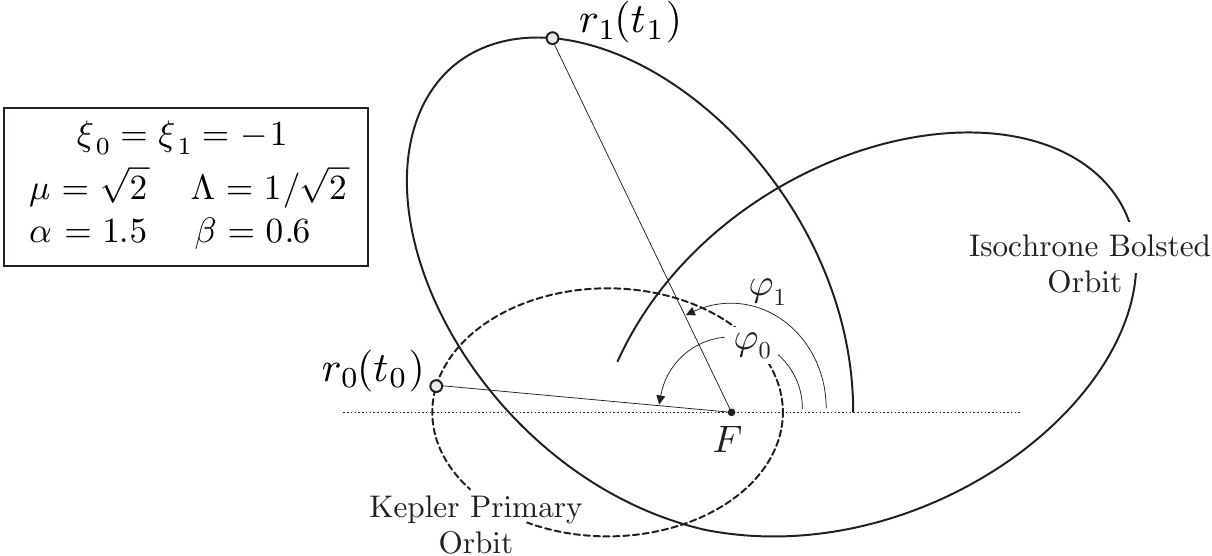} } \centering
\caption{The $\alpha=1.5$ and $\beta=0.6$ bolst of a Keplerian ellipse
($e\simeq0,7$; $p\simeq0.35$) which gives an isochrone orbit with the same
energy $\left(  \xi_{0}=\xi_{1}=-1\right)  $. }%
\label{fig:generalboostane0betane0}
\end{figure}

Reciprocally, we will see in sec. \ref{isoconstruct} that any
isochrone \textsc{pro} could be connected to a Keplerian ellipse.

\subsubsection{The bolst, a key to isochrony}


Geometrically, a Keplerian parabola in a frame $\mathcal{R}_{O}=\left(
O,\vec{\mathit{i}},\vec{\mathit{j}}\right)  $ is laid (see sec.
\ref{subsec:GalPropIsoParabolas}, p. \pageref{laidpara}), i.e. its tangent at the origin is
$\mathbb{R}\vec{\mathit{j}}$ and its axis of symmetry is $\mathbb{R}%
\vec{\mathit{i}}$. According to lemma  \ref{lemme_des_droites} in
appendix~\ref{appendixlemma}, its image by a bolst $B_{\alpha,\beta}$ remains laid, and
so Keplerian, in the image frame $B_{\alpha,\beta}\left(  \mathcal{R}%
_{O}\right)  $. But, in $\mathcal{R}_{O}$, the image parabola has two distinct
intersections with $\mathbb{R}\vec{\mathit{j}}$ and thus appears to be a non-Keplerian isochrone. Therefore, it appears that to be or not to be Keplerian
depends
on the choice of the reference frame. This is an aspect of the isochrone
relativity that we will discuss in what follows.

For this discussion, we will not consider the general case of any bolst
$B_{\alpha,\beta}$. With only technical restrictions, we will consider the case where the bolst is 
symmetric: according to (\ref{eq:Babexpr}), $B_{\alpha,\beta}$ is a symmetric matrix if and only if $\alpha-1=\beta$. Introducing the parameter 
$\gamma=\alpha+\beta$ which is the variable eigenvalue of $B_{\alpha,\beta}$, with the other $1$, the
general bolst $B_{\alpha,\beta}$ then becomes the symmetric $B_{\gamma}$ that
we call an \emph{ibolst} for which the isochrone relativity appears to be clear.

We have seen that bolsts generalize the Bohlin transformation, and we will see now that ibolsts are the boosts of the isochrone relativity. 
Names appear to be clarified: \emph{bolst} stands for \textbf{\ul{bo}}h\textbf{\ul{l}}in boo\textbf{\ul{st}} and \emph{ibolst} for symmetr\textbf{\ul{i}}c \textbf{\ul{bo}}h\textbf{\ul{l}}in boo\textbf{\ul{st}}.

\subsection{Isochrone relativity}

\label{subsec:isochronerelativity}

The special theory of relativity has two pillars:

\begin{enumerate}
\item The Einstein principle of special relativity imposes that the laws
of physics can be written in the same way in all Galilean frames;

\item The length of any space-time interval is conserved through
changes of Galilean frames, aka Lorentz frames.
\end{enumerate}

These principles make time and length relative to a given Galilean
frame. These two physical quantities are linearly exchanged during changes of
Galilean frames.

In the same way, the linear exchange between $\xi x$ and $y$ proposed in the
previous section conserves the ``\emph{isochrone interval}" $\xi x-y$ in equation
\eqref{eq:energycone}. This conservation is imposed by that of the
fundamental orbital law \eqref{fundamentalode} which renders the conservation
of the energy along the orbit. The linearity of the
transformation is associated with the isochrony preservation. The conservations
of the ``isochrone interval" and isochrone law are the two pillars of the
isochrone relativity.

For the sake of simplicity, we restrict our attention to symmetric exchanges
between $\xi x$ and $y$: the bolst $B_{\alpha,\beta}$ is then reduced to the
ibolst $B_{\gamma=\alpha+\beta}$, choosing $\alpha-1=\beta$,
\[
B_{\gamma}=\frac{1}{2}\left[
\begin{array}
[c]{cc}%
\gamma+1 & \gamma-1\\
\gamma-1 & \gamma+1
\end{array}
\right]  .
\]

\subsubsection{The ibolst Algebra}

Let $\mathcal{R}=\left(  \vec{\mathit{i}},\vec{\mathit{j}}\right)$ be the
canonical basis of $\mathbb{R}^{2}$. Any vector $\vec{z}\in \mathbb{R}^{2}$ has
affine coordinates $\left(  z_{1},z_{2}\right)  $ in the frame $\mathcal{R}%
_{O}=\left(  O,\vec{\mathit{i}},\vec{\mathit{j}}\right)$, i.e. there exists
a unique point $Z$ in the $Oz_{1}z_{2}$ plane such that $\vec{z}=\overrightarrow{OZ}=z_{1}%
\vec{\mathit{i}}+z_{2}\vec{\mathit{j}}$. We do not use the usual upper index
for contravariant components because, as we are in $\mathbb{R}^{2}$, we do not
use Einstein notation for sums and we prefer to conserve the upper index for
powers. The orthonormality is defined in the Euclidian sense, i.e. with
natural notations
\[
\left \Vert \vec{z}\right \Vert ^{2}=\left(  \vec{z}|\vec{z}\right)  =z_{1}%
^{2}+z_{2}^{2}\  \  \text{then\ }\left(  \vec{\mathit{i}}|\vec{\mathit{i}%
}\right)  :=\left \Vert \vec{\mathit{i}}\right \Vert ^{2}=\left \Vert
\vec{\mathit{j}}\right \Vert ^{2}=:\left(  \vec{\mathit{j}}|\vec{\mathit{j}%
}\right)  =1\text{ and }\left(  \vec{\mathit{i}}|\vec{\mathit{j}}\right)
=0\text{.}%
\]
The $\mathcal{R}$ basis is then orthonormal for the Euclidian scalar product. 
We will also use the Minkowski scalar product for which%
\[
\left \Vert \vec{z}\right \Vert _{m}^{2}=\left \langle \vec{z}|\vec
{z}\right \rangle =z_{1}^{2}-z_{2}^{2}\  \  \text{then\ }\left \langle
\vec{\mathit{i}}|\vec{\mathit{i}}\right \rangle :=\left \Vert \vec{\mathit{i}%
}\right \Vert _{m}^{2}=1\text{,\  \ }\left \langle \vec{\mathit{j}}%
|\vec{\mathit{j}}\right \rangle :=\left \Vert \vec{\mathit{j}}\right \Vert
_{m}^{2}=-1\text{ and }\left \langle \vec{\mathit{i}}|\vec{\mathit{j}%
}\right \rangle =0\text{.}%
\]

Consider the two eigenvectors $\vec{\mathit{k}}=\frac{1}{\sqrt{2}}\left(
\vec{\mathit{i}}-\vec{\mathit{j}}\right)  $ and $\vec{\mathit{l}}=\frac
{1}{\sqrt{2}}\left(  \vec{\mathit{i}}+\vec{\mathit{j}}\right)  $ of the ibolst
$B_{\gamma}$ such that 
\begin{equation}
B_{\gamma}\left(  \vec{\mathit{k}}\right)  =\vec{\mathit{k}}\text{ and
}B_{\gamma}\left(  \vec{\mathit{l}}\right)  =\gamma \vec{\mathit{l}}.
\label{B_gamma_eigen}
\end{equation}
The basis $\mathcal{\tilde{R}}=\left(  \vec{\mathit{k}},\vec{\mathit{l}%
}\right)  $ is just $\mathcal{R}$ rotated by an angle of $-\frac{\pi}{4}$.
It is thus orthonormal for the Euclidian scalar product.
Moreover, we see that for the Minkowski scalar product, we have
\begin{equation}
\label{eq:klMink}
\left \langle \vec{\mathit{k}}|\vec{\mathit{k}}\right \rangle =\left \langle
\vec{\mathit{l}}|\vec{\mathit{l}}\right \rangle =0\text{ and }\left \langle
\vec{\mathit{k}}|\vec{\mathit{l}}\right \rangle =\left \langle \vec{\mathit{l}%
}|\vec{\mathit{k}}\right \rangle =1.
\end{equation}
From (\ref{B_gamma_eigen}), let us remark that the set $\mathbb{B}=\left \{  B_{\gamma},\gamma \in
\mathbb{R}^{\ast}\right \}  $ forms a commutative linear group since 
\begin{equation*}
\forall \left(  \gamma,\gamma^{\prime}\right)  \in \mathbb{R}^{\ast}%
\times \mathbb{R}^{\ast},\  \ B_{\gamma}\circ B_{\gamma^{\prime}}=B_{\gamma
^{\prime}}\circ B_{\gamma}=B_{\gamma \gamma^{\prime}}\in \mathbb{B}.
\end{equation*}
For this law, $B_{1}$ is an identity element. The inverse of a
transformation $B_{\gamma}$ for $\gamma \in \mathbb{R}^{\ast}$ is $B_{\frac
{1}{\gamma}}$.

As expected, any ibolst is symmetric, i.e. for the Euclidian scalar product
and for any vectors $\vec{w}$ and $\vec{z}$, we have
\begin{equation}
\left(  B_{\gamma}\left(  \vec{w}\right)  |\vec{z}\right)  =\left(  \vec
{w}|B_{\gamma}\left(  \vec{z}\right)  \right)  . \label{iboost-symmetry}%
\end{equation}
As a matter of fact, since the matrix $B_{\gamma}$ is symmetric, considering the expansion of
these vectors in the basis $\mathcal{\tilde{R}}$ noted with a tilde, we get directly from (\ref{B_gamma_eigen}) that
\begin{align*}
\left(  B_{\gamma}\left(  \vec{w}\right)  |\vec{z}\right)   &  =\left(
\tilde{w}_{1}B_{\gamma}\left(  \vec{\mathit{k}}\right)  +\tilde{w}%
_{2}B_{\gamma}\left(  \vec{\mathit{l}}\right)  |\tilde{z}_{1}\vec{\mathit{k}%
}+\tilde{z}_{2}\vec{\mathit{l}}\right)  =\tilde{w}_{1}\tilde{z}_{1}%
+\gamma \tilde{w}_{2}\tilde{z}_{2}\\
&  =\tilde{z}_{1}\tilde{w}_{1}+\gamma \tilde{z}_{2}\tilde{w}_{2}=\left(
B_{\gamma}\left(  \vec{z}\right)  |\vec{w}\right)  =\left(  \vec{w}|B_{\gamma
}\left(  \vec{z}\right)  \right)  .
\end{align*}
However this symmetry property does not generally hold for the Minkowski scalar product.

\subsubsection{Lengths and spaces}

Let us consider $\xi$ and $\Lambda$ as two fixed parameters. We can define in
$\mathcal{R}_{O}\ $the affine coordinates system $\left(  w_{1}=\xi
x,w_{2}=y\right)  $. Using these coordinates we set
\[
\vec{w}^{\prime}=B_{\gamma}\left(  \vec{w}\right)  \text{.}%
\]
The symmetry \eqref{iboost-symmetry} of the ibolst for the Euclidian scalar
product gives
\[
\forall \alpha \in \mathbb{R}\text{,\ }\left(  \vec{w}^{\prime}|\alpha
\vec{\mathit{l}}\right)  =\left(  B_{\gamma}\left(  \vec{w}\right)
|\alpha \vec{\mathit{l}}\right)  =\left(  \vec{w}|B_{\gamma}\left(  \alpha
\vec{\mathit{l}}\right)  \right)  =\gamma \left(  \vec{w}|\alpha \vec
{\mathit{l}}\right)  .
\]
With $\alpha=\sqrt{2}$, this relation corresponds to the equality
\begin{equation}
\xi^{\prime}x^{\prime}+y^{\prime}=\gamma \left(  \xi x+y\right)  \text{.}
\label{proopoone}%
\end{equation}
This same symmetry, but in the direction given by $\vec{\mathit{k}}$, gives the conservation of the isochrone interval
\begin{equation}
\forall \alpha \in \mathbb{R}\text{,\ }\left(  \vec{w}^{\prime}|\alpha
\vec{\mathit{k}}\right)  =\left(  \vec{w}|\alpha \vec{\mathit{k}}\right)
\  \Rightarrow \  \xi^{\prime}x^{\prime}-y^{\prime}=\xi x-y. \label{proptwo}%
\end{equation}
By multiplication of these two relations we get directly\footnote{The relation
\eqref{proptwo} holds for any bolst $B_{\alpha,\beta}$. This is not the case
for \eqref{proopoone} which requires the $B_{\gamma}$--symmetry. As a
consequence, the relation \eqref{propthree} is simple only in the symmetric
case.}%
\begin{equation}
\left(  \xi^{\prime}x^{\prime}\right)  ^{2}-y^{\prime2}=\gamma \left[  \left(
\xi x\right)  ^{2}-y^{2}\right]  . \label{propthree}%
\end{equation}

This relation corresponds to the fact that an ibolst is not an isometry using
the Minkowskian norm%
\begin{equation}
\left \langle \vec{w}^{\prime}|\vec{w}^{\prime}\right \rangle =\gamma
\left \langle \vec{w}|\vec{w}\right \rangle .
\end{equation}

As a consequence, the radial cone
\[
\mathscr{C}=\left \{  \vec{z}\in \mathbb{R}^{2},\  \left \langle \vec{z}|\vec
{z}\right \rangle =0\right \}
\]
is preserved by the ibolst as $\mathscr{C}=\mathbb{R}\vec{\mathit{k}}%
\cup \mathbb{R}\vec{\mathit{l}}$. Its name comes from the fact that the line
$y=\xi x$ defines a radial orbit $\left(  \Lambda=0\right)  $ of energy $\xi$
in the potential $\psi \left(  r\right)  $. In a Kepler potential
$\psi_{\mathrm{ke}}\left(  r\right)  =-\frac{\mu}{r}$ a test particle of
energy $\xi<0$ with a radial orbit moves on a segment from $r_{a}=\frac{\mu
}{\left \vert \xi \right \vert }$ at $t=0$ to $r\rightarrow0$ when $t\rightarrow
+\infty$. As its period should be infinite, a radial orbit is not a
\textsc{pro} but we can say that it is a maximal time-bounded orbit.

In this relativistic formulation of the problem we can then define
periodic-like vectors lying in the periodic space%
\[
\mathscr{P}=\left \{  \vec{z}\in \mathbb{R}^{2},\  \left \langle \vec{z}|\vec
{z}\right \rangle <0\right \}
\]
and aperiodic-like vectors lying in the aperiodic space%
\[
\mathscr{A}=\left \{  \vec{z}\in \mathbb{R}^{2},\  \left \langle \vec{z}|\vec
{z}\right \rangle >0\right \}  .
\]
As the convex $x-$positive part of parabolas containing $\textsc{pro}$ in the
coordinates system $(\xi x,y)$ is delimited by the radial cone and exactly
contained in $\mathscr P$, the names $\mathscr P$ and $\mathscr{A}$ are natural.

\subsubsection{Orbits relativity}

Let us define the ibolsted frame $\mathcal{R}_{O}^{\prime}=\left(  O,\vec
{u},\vec{v}\right)  $ such that%
\begin{equation}
\label{eq:uvbasis}
\left \{
\begin{array}
[c]{l}%
\vec{u}=B_{\gamma}\left(  \vec{\mathit{i}}\right)  
       =B_{\gamma}\left(  \dfrac{\vec{\mathit{l}}+\vec{\mathit{k}}}{\sqrt{2}}  \right)
       =\dfrac{\gamma \vec{\mathit{l}}+\vec{\mathit{k}}}{\sqrt{2}}\\
\text{and}\\
\vec{v}=B_{\gamma}\left(  \vec{\mathit{j}}\right)  
       =B_{\gamma}\left(  \dfrac{\vec{\mathit{l}}-\vec{\mathit{k}}}{\sqrt{2}}  \right)
       =\dfrac{\gamma \vec{\mathit{l}}-\vec{\mathit{k}}}{\sqrt{2}}
\end{array}
\right.  \implies \vec{\mathit{k}}=\frac{\vec{u}-\vec{v}}{\sqrt{2}}.
\end{equation}

\begin{definition}
\label{def:refframe}
The reference frame of a given parabola $\mathcal{P}$ is the frame $\left(
O,\vec{t},\vec{n}\right)$ where the tangent to the parabola at the origin is
$\mathcal{T}_{O}\left(  \mathcal{P}\right)  =\mathbb{R}\vec{t}$ and the
symmetry axis is $\mathcal{S}\left(  \mathcal{P}\right)  =$ $\mathbb{R}\vec
{n}$.
\end{definition}

A reference frame geometrically defines a parabola up to a scale factor. For
instance,< $\mathcal{R}_{O}$ is the reference frame of the Keplerian parabola
containing $\psi_{\mathrm{ke}}$ up to the scale factor $\mu$. According to
lemma~\ref{lemme_des_droites} in appendix~\ref{appendixlemma}, the line $\mathbb{R}%
\vec{v}$ is tangent to the bolsted parabola and $\mathbb{R}\vec{u}$ is its
symmetry axis. Thus, $\mathcal{R}_{O}^{\prime}$ is the reference frame of the
bolsted parabola and characterizes it up to a scale factor.
\\

\begin{figure}[ptb]
\centering \resizebox{0.95\textwidth}{!}{\includegraphics{./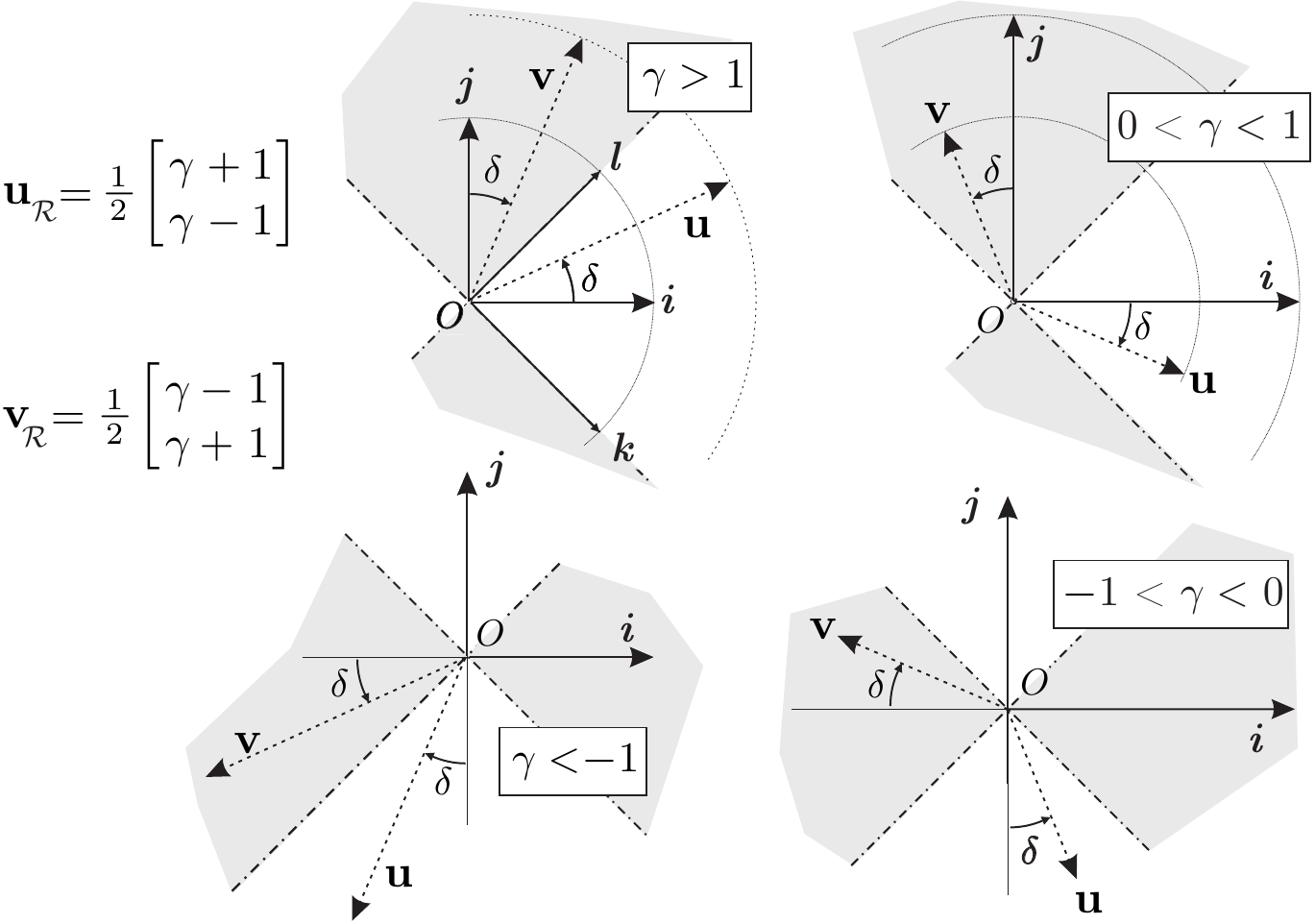}}
\centering
\caption{The bolsted frame $\mathcal{R}_{O}^{\prime}=\left(  O,\vec{u},\vec
{v}\right)  $, when $\xi \xi^{\prime}>0$. Its periodic space $\mathscr{P}^{\prime}$ is represented in grey, while its aperiodic one $\mathscr{A}^{\prime}$ is in white.}%
\label{vecteursuv}%
\end{figure}

All possibilities are represented in figure~\ref{vecteursuv}, when the primary
energy $\xi$ and the image energy $\xi^{\prime}$ share the same sign. When
$\xi \xi^{\prime}<0$, the direction of $\vec{u}$ has to be inverted.

Depending on the value of $\gamma \neq1$, we can define the angle $\delta$ given by
\[
\tan \delta=\left \vert \frac{\gamma+1}{\gamma-1}\right \vert
\]
which is useful to construct $\mathcal{R}_{O}^{\prime}$ from $\mathcal{R}_{O}$
by simple composition of a homothety and a hyperbolic rotation (see~\cite{Borel} p.28 for a nice description in French or \cite{Gourg} for general properties of rotation in special relativity).

As with $\vec{\mathit{i}}$ and $\vec{\mathit{j}}$, the two ibolsted basis vectors
$\vec{u}$ and $\vec{v}$ have the same Euclidian norm
\[
\left \Vert \vec{u}\right \Vert ^{2}=\left \Vert \vec{v}\right \Vert ^{2}%
=\frac{\gamma^{2}+1}{2}%
\]
and opposed Minkowskian lengths%
\[
\left \Vert \vec{u}\right \Vert_m =\gamma=-\left \Vert\vec{v}\right \Vert_m.
\]
Moreover, from~\eqref{eq:klMink} and~\eqref{eq:uvbasis}, the primary and the ibolsted basis are orthogonal in the
Minkowskian scalar product: $\left \langle \vec{\mathit{i}}|\vec{\mathit{j}%
}\right \rangle =\left \langle \vec{u}|\vec{v}\right \rangle =0$. Depending on $\gamma$ 
and on the frame $\mathcal{R}_{O}$ or $\mathcal{R}_{O}^{\prime}$ used to define the scalar product, one vector
is aperiodic-like and the other is periodic-like, see figure~\ref{vecteursuv}. \newline


In the canonical frame $\mathcal{R}_{O}$, using the isochrone relativity
formalism and introducing the proper time $d\tau=\xi dt$ of an orbit of energy
$\xi$ and angular momentum $\Lambda$, the orbital differential
equation~\eqref{fundamentalode} in the affine coordinates $(\xi x,y)$ can be written as
\begin{equation}
\frac{1}{16}\left[  \frac{d}{d\tau}\left(  \vec{w}|\vec{\mathit{i}}\right)
\right]  ^{2}=\left(  \vec{w}|\vec{\mathit{i}}-\vec{\mathit{j}}\right)
+\left(  \vec{w}_{\Lambda}|\vec{\mathit{j}}\right),
\label{orbital-differential-equation}
\end{equation}
where $\vec{w}_{\Lambda}
=-\Lambda^{2}\vec{\mathit{j}}$. The vector $\vec{w}$ describes the potential
of parabola $\mathcal{P}$ and the orbit which corresponds to an arc of
$\mathcal{P}$. When this orbit is a \textsc{pro}, this arc is finite. When
$\vec{w}$ describes a Keplerian orbit, its ibolsted image $\vec{w}^{\prime}%
$\ is characterized by theorem~\ref{thm:orbitboost}.

\begin{theorem}
\label{thm:orbitboost} A vector $\vec w^{\prime}$ describes an isochrone orbit
$(\xi^{\prime},\Lambda^{\prime})$ on its arc of parabola if and only if it
is the image, by an ibolst $B_{\gamma}\in \mathbb{B}$, of a vector $\vec w$
which describes a Keplerian orbit $(\xi,\Lambda)$ on a Keplerian parabola. In
the Keplerian frame $\mathcal{R}_{O}=(O,\vec{\mathit{ i}},\vec{\mathit{j}})$,
the orbit $(\xi^{\prime},\Lambda^{\prime})$ is isochrone but generally not
Keplerian. In its natural bolsted frame $\mathcal{R}^{\prime}_{O}=(O,\vec
u,\vec v)$ it is a Keplerian orbit with angular momentum $\Lambda$. If $\xi
\xi^{\prime}> 0$ then\footnote{When $\gamma<0$, $\Lambda$ is imaginary and
does not correspond to a \textsc{pro}.}
\[
\Lambda^{\prime}= \sqrt{\gamma} \Lambda
\]
else
\[
\Lambda^{\prime}= \Lambda.
\]

\end{theorem}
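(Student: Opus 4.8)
The statement breaks into four assertions --- the ``if and only if'', the Keplerian reading in $\mathcal{R}_O^{\prime}$, the genuinely non-Keplerian isochrone reading in $\mathcal{R}_O$, and the angular-momentum dictionary --- and I would treat them in that order. For the equivalence, the backward direction is essentially Theorem~\ref{prop:Bohlin}: an ibolst $B_\gamma$ is the bolst $B_{\alpha,\beta}$ with $\alpha=\tfrac{\gamma+1}{2}$, $\beta=\tfrac{\gamma-1}{2}$, so for $\gamma\notin\{0,1,-1\}$ one has $\alpha\beta\neq0$ and the image of a Keplerian \textsc{pro} is an isochrone orbit, while $\gamma=1$ is the identity and $\gamma=-1$ is the classical Bohlin case landing on the harmonic type. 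For the forward direction I would invoke Theorem~\ref{theo1} together with the $\textsc{SO}(2)\ltimes\mathbb{A}$ analysis of section~\ref{subsec:affinegroupaction}: every isochrone parabola is the image of the laid Keplerian one under a rotation composed with an affine map, and using lemma~\ref{lemme_lineaire} and lemma~\ref{lemme_des_droites} (a linear map carries a parabola to a parabola and its tangent and axis to the image lines) one reorganises that composition, up to a transvection $J_{\epsilon,0}$ which only relabels the energy, as a single symmetric transformation, i.e. an ibolst $B_\gamma$.

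For the two ``reading'' statements: writing $\vec w=w_1\vec{\mathit{i}}+w_2\vec{\mathit{j}}$ with $(w_1,w_2)=(\xi x,y(x))$ the Keplerian data and using $\vec u=B_\gamma(\vec{\mathit{i}})$, $\vec v=B_\gamma(\vec{\mathit{j}})$, one gets at once $\vec w^{\prime}=B_\gamma(\vec w)=w_1\vec u+w_2\vec v$, so the components of $\vec w^{\prime}$ in $\mathcal{R}_O^{\prime}=(O,\vec u,\vec v)$ are exactly the Keplerian components of the primary; hence in $\mathcal{R}_O^{\prime}$ the image is literally the same Keplerian orbit, of energy $\xi$ and angular momentum $\Lambda$, and by lemma~\ref{lemme_des_droites} the lines $\mathbb{R}\vec v$ and $\mathbb{R}\vec u$ are its tangent at the origin and its symmetry axis, so $\mathcal{R}_O^{\prime}$ is its reference frame in the sense of definition~\ref{def:refframe}. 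In the canonical frame $\mathcal{R}_O$, on the other hand, $B_\gamma$ does not fix the line $\mathbb{R}\vec{\mathit{j}}$ when $\gamma\neq1$, so the image parabola meets $\mathbb{R}\vec{\mathit{j}}$ in two distinct points and by lemmas~\ref{lem1}--\ref{lem3} corresponds to a non-Keplerian isochrone potential --- exactly the remark made just before section~\ref{subsec:isochronerelativity}.

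The angular-momentum dictionary is where I expect the real work to lie. The plan is to write the orbital equation~\eqref{orbital-differential-equation} twice: once for $\vec w$ in $\mathcal{R}_O$, where the constant term is $-\Lambda^2$, and once for $\vec w^{\prime}$ read in $\mathcal{R}_O^{\prime}$, where by the coordinate identity above the constant term is again $-\Lambda^2$; then to push the second equation into the canonical coordinates $(\xi^{\prime}x^{\prime},y^{\prime})$ and match it with the orbital equation of the genuine isochrone orbit $(\xi^{\prime},\Lambda^{\prime})$. Here I would use the eigen-relations $B_\gamma(\vec{\mathit{k}})=\vec{\mathit{k}}$, $B_\gamma(\vec{\mathit{l}})=\gamma\vec{\mathit{l}}$, which give $\vec u-\vec v=\vec{\mathit{i}}-\vec{\mathit{j}}$ and $\vec u+\vec v=\gamma(\vec{\mathit{i}}+\vec{\mathit{j}})$, hence the invariance of the isochrone interval~\eqref{proptwo}, the $\gamma$-scaling~\eqref{proopoone} and their product~\eqref{propthree}, i.e. $\langle\vec w^{\prime}|\vec w^{\prime}\rangle=\gamma\langle\vec w|\vec w\rangle$, together with $\langle\vec u|\vec u\rangle=\gamma=-\langle\vec v|\vec v\rangle$ against $\langle\vec{\mathit{i}}|\vec{\mathit{i}}\rangle=1=-\langle\vec{\mathit{j}}|\vec{\mathit{j}}\rangle$ for the Minkowski form. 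The upshot I am aiming for is that the constant term picks up exactly this factor $\gamma$, so $\Lambda^{\prime2}=\gamma\Lambda^2$, i.e. $\Lambda^{\prime}=\sqrt{\gamma}\,\Lambda$, whenever $\gamma>0$; and I would tie the sign of $\gamma$ to the sign of $\xi\xi^{\prime}$ through the orientation of $\vec u$ relative to $\vec{\mathit{i}}$ (figure~\ref{vecteursuv}), so that for $\gamma<0$ --- where $\sqrt{\gamma}\,\Lambda$ is imaginary and, by the footnote, no \textsc{pro} survives --- one instead reads $\Lambda^{\prime}=\Lambda$ directly off the angular-momentum constraint~\eqref{lambdaconservation}.

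The delicate point, on which I would spend most of the effort, is precisely this last bookkeeping: keeping straight which scalar product (Euclidean versus Minkowski) and which time parametrisation (coordinate $t$ versus proper time $\tau=\xi t$) governs each term when the orbital equation is transported between $\mathcal{R}_O$ and $\mathcal{R}_O^{\prime}$, since $\mathcal{R}_O^{\prime}$ is orthonormal for neither product in the normalisation fixed for $\mathcal{R}_O$. It is exactly this mismatch --- a factor $\gamma$, rather than $1$, in front of $\Lambda^2$ --- that the whole ``isochrone relativity'' picture is built to encode, and checking it carefully is what turns the heuristic scaling relations~\eqref{proptwo}--\eqref{propthree} into the sharp statement $\Lambda^{\prime}=\sqrt{\gamma}\,\Lambda$.
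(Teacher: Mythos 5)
Your decomposition of the statement is sensible, and two of your four pieces do match the paper: the observation that $\vec w^{\prime}=B_\gamma(\vec w)=w_1\vec u+w_2\vec v$ has in $\mathcal{R}_O^{\prime}$ exactly the components that $\vec w$ has in $\mathcal{R}_O$ (whence the Keplerian reading with angular momentum $\Lambda$ in the bolsted frame, via lemmas~\ref{lemme_lineaire} and~\ref{lemme_des_droites}) is precisely how the paper establishes $\overrightarrow{OK}^{\prime}=-\Lambda^{2}\vec v$; and the ``iff'' and the non-Keplerian reading in $\mathcal{R}_O$ are, as in the paper, essentially delegated to theorem~\ref{prop:Bohlin}, theorem~\ref{theo1} and the remarks preceding section~\ref{subsec:isochronerelativity}.

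The genuine gap is in the angular-momentum dictionary, which you correctly identify as the hard part but never actually carry out — and the plan you sketch would, if executed literally, give the wrong answer. The extremal points of the image orbit are the images of those of the primary (since $dx_1/dt_1=dx_0/dt_0$ by~\eqref{eq:condchgvar}), and on them the invariance of the isochrone interval~\eqref{proptwo} gives $\xi^{\prime}x^{\prime}-y^{\prime}=\xi x-y=\Lambda^{2}$. So ``matching the constant term of the orbital equation between the two frames'' produces $\Lambda^{\prime}=\Lambda$ in every case — this is just the defining constraint~\eqref{lambdaconservation} — and neither~\eqref{proopoone} (which scales the \emph{sum} $\xi x+y$) nor~\eqref{propthree} (which scales the product) converts this into $\Lambda^{\prime 2}=\gamma\Lambda^{2}$. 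The factor $\gamma$ in the paper comes from a different mechanism entirely: $\Lambda^{\prime 2}$ is read off \emph{geometrically} as the length $OH$, where $H=\mathcal{T}_{O}(\mathcal{P})\cap\Delta^{\prime}$ is the intersection of the \emph{image} extremal line $\Delta^{\prime}=B_\gamma(\Delta)$ with the tangent $\mathbb{R}\vec{\mathit{j}}$ of the \emph{primary} Keplerian parabola. Because $\Delta$ is carried along the invariant direction of $B_\gamma$ while its component along $\vec{\mathit{l}}$ is stretched by $\gamma$, one gets $\overrightarrow{OK_l}^{\prime}=\gamma\,\overrightarrow{OK_l}$ and then, by Thales' theorem, $OH=\gamma\,OK=\gamma\Lambda^{2}$; the case $\xi\xi^{\prime}<0$ is handled by the same picture with the orientation of $\vec u$ reversed, which swaps the roles of $\vec{\mathit{k}}$ and $\vec{\mathit{l}}$ and yields $\Lambda^{\prime}=\Lambda$. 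Without this identification of \emph{which} intercept is $\Lambda^{\prime 2}$ — the one on $\mathbb{R}\vec{\mathit{j}}$ rather than on $\mathbb{R}\vec v$ — the scaling relations you cite cannot produce the asymmetric conclusion $\Lambda^{\prime}=\sqrt{\gamma}\,\Lambda$ versus $\Lambda$ in the bolsted frame, so the central claim of the theorem remains unproved in your proposal.
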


\begin{proof}
In the affine coordinate system $\left(  w_{1}=|\xi|x,w_{2}=y\right)$, an
orbit of energy $\xi<0$ and angular momentum $\Lambda$ corresponds to an arc
of the parabola $\mathcal{P}$. When this orbit is a \textsc{pro}, the two
extremities $A$ and $P$ of this parabolic arc are associated with the two
solutions apoastron $\vec{w}_{A}$ and periastron $\vec{w}_{P}$ of the equation $\frac{d}{d\tau
}\left(  \vec{w},\vec{\mathit{i}}\right)  =0$, with $d\tau=|\xi|dt$.
Considering the orbital differential equation~\eqref{orbital-differential-equation} 
in the coordinates $(|\xi|x,y)$, these
two extremal points of the orbit are on the extremal line
\[
\Delta=\left \{  \vec{w}\in \mathbb{R}^{2},\sqrt{2}\left(  \vec{w}%
|\vec{\mathit{k}}\right)  =\Lambda^{2}\right \}  .
\]
Trivially we then note that $\Delta$ is parallel to $\mathbb{R}\vec
{\mathit{k}}$. As the vectors $\vec{w}$ defining this \textsc{pro} satisfy
$\left[  \frac{d}{d\tau}\left(  \vec{w},\vec{\mathit{i}}\right)  \right]
^{2}\geq0$, they are periodic-like vectors. Defining
\[
K=\mathcal{T}_{O}\left(  \mathcal{P}\right)  \cap \Delta,
\]
the point $K$ is the $\vec{\mathit{k}}$ parallel projection of $A$ and $P$ on
$\mathbb{R}\vec{\mathit{j}}$ and trivially,
\begin{equation}
\overrightarrow{OK}=-\Lambda^{2}\vec{\mathit{j}}.\label{OK}%
\end{equation}
\begin{figure}[ptb]
\centering
\resizebox{\textwidth}{!}{\includegraphics{./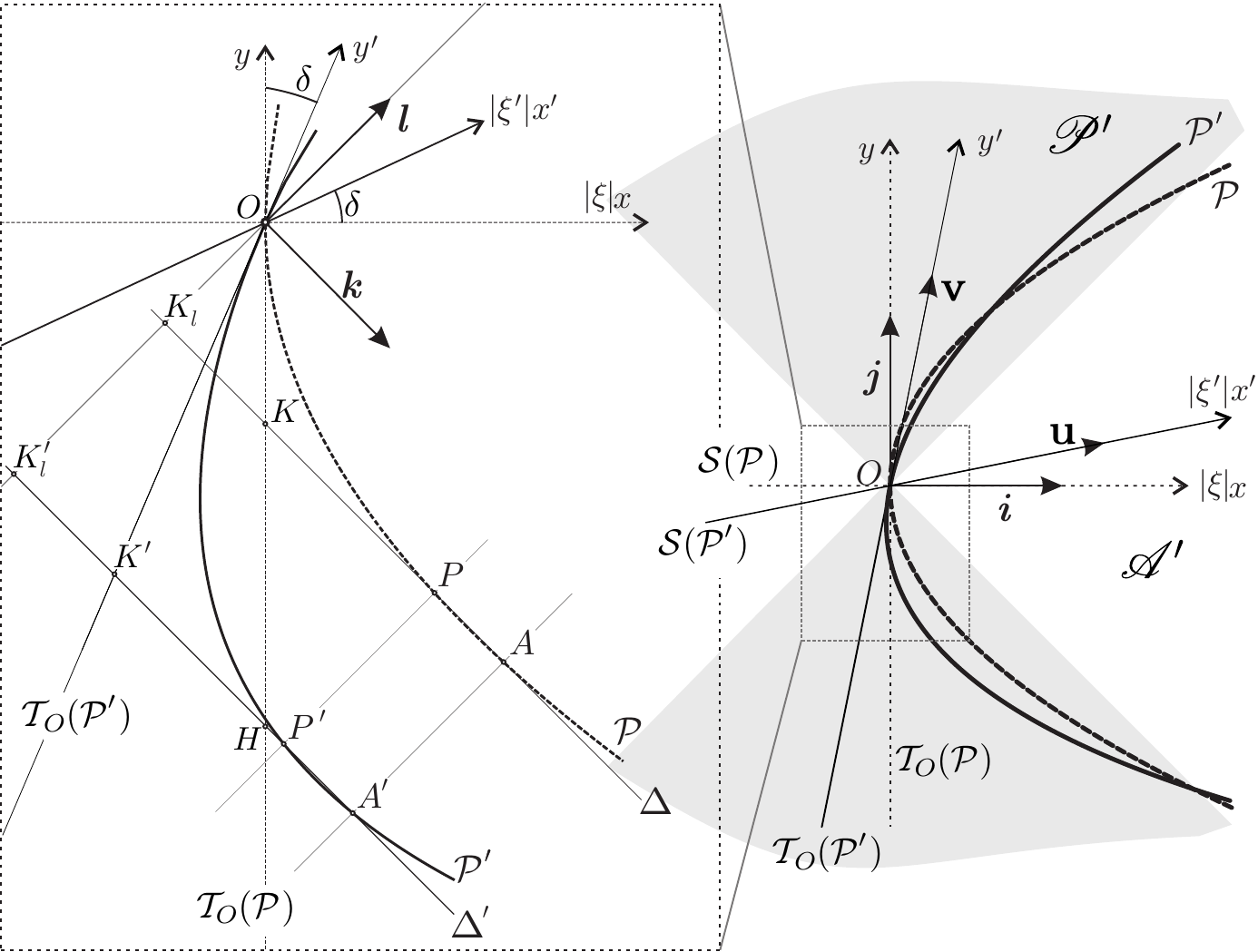}}
\centering
\caption{The $\gamma>1$ ibolst of the Kepler parabola when $\xi\xi^{\prime}%
>0$.}%
\label{iboostkepler}%
\end{figure}When $\gamma>1$, the ibolst of the Kepler parabola is represented
in figure \ref{iboostkepler}; the other values of $\gamma$ can be deduced
directly from figure \ref{vecteursuv}\ and the analysis we will give below. With
natural notations, we set $\mathcal{P}^{\prime}=B_{\gamma}\left(
\mathcal{P}\right)  $ and $\Delta^{\prime}=B_{\gamma}\left(  \Delta \right)  $.
As $\Delta$ is parallel to $\mathbb{R}\vec{\mathit{k}}$, which is an invariant
direction of the ibolst, $\Delta^{\prime}$ is also parallel to $\mathbb{R}%
\vec{\mathit{k}}$. Let us consider $K^{\prime}=B_{\gamma}\left(  K\right)$.
According to lemma \ref{lemme_des_droites} we have%
\begin{align*}
K^{\prime} &  =B_{\gamma}\left(  \mathcal{T}_{O}\left(  \mathcal{P}\right)
\cap \Delta \right)  \\
&  =B_{\gamma}\left(  \mathcal{T}_{O}\left(  \mathcal{P}\right)  \right)  \cap
B_{\gamma}\left(  \Delta \right)  \\
&  =\mathcal{T}_{O}\left(  \mathcal{P}^{\prime}\right)  \cap \Delta^{\prime}%
\end{align*}
and quantitatively, as $\overrightarrow{OK}=-\Lambda^{2}\vec{\mathit{j}}$,
after an ibolst, we get
\begin{equation}
\overrightarrow{OK}^{\prime}=-\Lambda^{2}\vec{v}.\label{OKprime}%
\end{equation}
This relation clearly indicates that $\Lambda$ is the same angular momentum
for the Keplerian orbit and for the ibolsted orbit when it is considered in
the reference frame of its ibolsted parabola, where it is also a Keplerian one.
In addition,
\begin{equation}
\label{eq:Deltalin}
\Delta^{\prime}=K^{\prime}+\mathbb{R}\vec{\mathit{k}}=B_{\gamma}(K)+\mathbb{R}%
B_{\gamma}(\vec{\mathit{k}}).
\end{equation}
Therefore, each point of the isochrone orbit on its arc of parabola is directly
linked by the ibolst to its $\vec{\mathit{l}}-$parallel projection on the Keplerian parabola .
We can determine the angular momentum $\Lambda^{\prime}$ of the
isochrone orbit in the Keplerian coordinates. Combining~\eqref{eq:uvbasis}, \eqref{OK} and
\eqref{OKprime} we get%
\[
\overrightarrow{KK}^{\prime}=-\Lambda^{2}\left(  \vec{v}-\vec{\mathit{j}%
}\right)  =-\frac{\Lambda^{2}}{\sqrt{2}}\left(  \gamma-1\right)
\vec{\mathit{l}}.
\]
If we introduce now the two orthogonal projections $K_{l}$ and $K_{l}^{\prime
}$ of $K$ and $K^{\prime}$ on $\mathbb{R}\vec{\mathit{l}}$, we have
$\overrightarrow{KK}^{\prime}=\overrightarrow{K_{l}K_{l}}^{\prime}$ and 
$\overrightarrow{OK_{l}}^{\prime} = B_\gamma\left( \overrightarrow{OK_{l}} \right)$ by~\eqref{eq:Deltalin}. Since 
$K_l=\Delta\cap\mathbb{R}\vec{\mathit{l}}$, $K_l^\prime=\Delta^\prime\cap\mathbb{R}\vec{\mathit{l}}$
and $B_\gamma$ sends $\Delta$ to $\Delta^\prime$ and $\vec{\mathit{l}}$ to $\gamma\vec{\mathit{l}}$, we 
get $$ \overrightarrow{OK_{l}}^{\prime} = B_\gamma\left( \overrightarrow{OK_{l}} \right)=\gamma \, \overrightarrow{OK_{l}} .$$
And finally by Thales theorem,%
\begin{equation}
\frac{OK_{l}^{\prime}}{OK_{l}}=\frac{OH}{OK}=\gamma \quad \text{where
}H=\mathcal{T}_{O}\left(  \mathcal{P}\right)  \cap \Delta^{\prime
}.\label{thales}%
\end{equation}
The length $OH$ is the squared
angular momentum $\Lambda^{\prime 2}$ of the ibolsted orbit considered in the reference frame of
the Keplerian parabola. As $OK$ is the squared angular momentum of the
Keplerian orbit in its natural frame, we have
\[
\Lambda^{\prime2}=\gamma \Lambda^{2}.
\]
When $\xi \xi^{\prime}<0$, the orientation of $\vec{u}$ is inverted. The line
$\Delta^{\prime}$ is $\Delta^{\prime}=K^{\prime}+\mathbb{R}\vec{\mathit{l}}$, and since
$\overrightarrow{OK_{l}}
$ is directed by $\vec{\mathit{k}}$, then
$\Lambda^{\prime}=\Lambda$. \qed

\end{proof}


The bolsted orbital differential equations follow from
theorem~\ref{thm:orbitboost}. As we can see from~\eqref{eq:orbitrelativity},
in isochrone relativity, orbital laws are the same in all 
reference frames.

\begin{corollary}
In the canonical frame $\mathcal{R}_{O}$, the bolsted orbital differential
equation is
\begin{equation}
\frac{1}{16}\left[  \frac{d}{d\tau}\left(  \vec{w^{\prime}}|\vec{\mathit{i}%
}\right)  \right]  ^{2}=\text{ }\left(  \vec{w^{\prime}}|\vec{\mathit{i}}%
-\vec{\mathit{j}}\right)  +\left(  \vec{w_{\Lambda^{\prime}}}|\vec{\mathit{j}%
}\right)  \label{orbital-differential-equation-img}%
\end{equation}
with $\vec{w_{\Lambda^{\prime}}}=-\Lambda^{\prime2}\vec{\mathit{j}}$.\newline
In the bolsted frame $\mathcal{R}_{O}^{\prime}$ with affine coordinates
$(\xi^{\prime}x^{\prime},y^{\prime})$ and proper time $d\tau^{\prime}%
=\xi^{\prime}dt^{\prime}$, the bolsted orbital differential equation is
\begin{equation}
\frac{1}{16}\left[  \frac{d}{d\tau^{\prime}}\left(  \vec{w^{\prime}}|\vec
{u}\right)  \right]  ^{2}=\text{ }\left(  \vec{w^{\prime}}|\vec{u}-\vec
{v}\right)  +\left(  \vec{w_{\Lambda}^{\prime}}|\vec{v}\right)
\label{eq:orbitrelativity}%
\end{equation}
with $\vec{w_{\Lambda}^{\prime}}=-\Lambda^{2}\vec{v}$.
\end{corollary}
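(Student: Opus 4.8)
The plan is to read both equations off from the single fundamental orbital law \eqref{fundamentalode} together with theorem~\ref{thm:orbitboost}, since there is no new dynamical content: equation \eqref{orbital-differential-equation-img} should come from feeding the data of the image orbit into the frame-intrinsic form \eqref{orbital-differential-equation}, and \eqref{eq:orbitrelativity} from transporting \eqref{orbital-differential-equation} through the ibolst $B_{\gamma}$ regarded as a genuine change of (generalized Lorentz) reference frame. The first equation is the easy one. By theorem~\ref{thm:orbitboost}, $\vec{w^{\prime}}=B_{\gamma}(\vec{w})$ describes an isochrone orbit on an arc of the parabola $\mathcal{P}^{\prime}=B_{\gamma}(\mathcal{P})$, which is again a parabola by lemma~\ref{lemme_lineaire}; in the canonical frame $\mathcal{R}_{O}$ this orbit has energy $\xi^{\prime}$ and angular momentum $\Lambda^{\prime}$, with $\Lambda^{\prime 2}=\gamma\Lambda^{2}$ when $\xi\xi^{\prime}>0$ and $\Lambda^{\prime}=\Lambda$ otherwise, exactly as established there through \eqref{thales}. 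Since \eqref{orbital-differential-equation} was obtained as the coordinate-free rewriting of \eqref{fundamentalode} valid for \emph{any} isochrone orbit written in $\mathcal{R}_{O}$, with proper time equal to the orbital energy times the time increment, I would simply read it off for $\vec{w^{\prime}}$ with the proper time of that orbit and with $\vec{w_{\Lambda^{\prime}}}=-\Lambda^{\prime 2}\vec{\mathit{j}}$: this is \eqref{orbital-differential-equation-img}. The one point to state carefully is that the $\tau$ occurring there is the proper time attached to the image orbit, i.e.\ $\xi^{\prime}$ times its own time element.

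For the second equation I would use $B_{\gamma}$ as a change of reference frame. It sends the Keplerian frame $\mathcal{R}_{O}=(O,\vec{\mathit{i}},\vec{\mathit{j}})$ onto the bolsted frame $\mathcal{R}_{O}^{\prime}=(O,\vec{u},\vec{v})$ of \eqref{eq:uvbasis}, the orbit-vector $\vec{w}$ onto $\vec{w^{\prime}}$, the Keplerian parabola with its orbital arc onto the bolsted one (lemmas~\ref{lemme_lineaire} and~\ref{lemme_des_droites}), the invariant direction $\mathbb{R}\vec{\mathit{k}}$ onto itself, and --- the crucial point --- the Keplerian angular-momentum vector $\overrightarrow{OK}=-\Lambda^{2}\vec{\mathit{j}}$ of \eqref{OK} onto $\overrightarrow{OK}^{\prime}=-\Lambda^{2}\vec{v}=\vec{w_{\Lambda}^{\prime}}$, which is precisely \eqref{OKprime}. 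Because \eqref{orbital-differential-equation} is a relation built only out of these ingredients --- the orbit-vector, the two frame axes, the proper time, and the angular-momentum vector --- and $B_{\gamma}$ carries each of them to its primed counterpart, I would push \eqref{orbital-differential-equation} forward along $B_{\gamma}$, reading every pairing in the Euclidean structure for which $(\vec{u},\vec{v})$ is orthonormal, i.e.\ the $B_{\gamma}$-transport of the canonical product. This produces \eqref{eq:orbitrelativity}, with $d\tau^{\prime}=\xi^{\prime}dt^{\prime}$ the proper time of the bolsted frame and with the \emph{original} constant $\Lambda$ surviving precisely because of \eqref{OKprime}.

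The hard part is keeping the two metric structures apart. The map $B_{\gamma}$ is \emph{not} a Euclidean (nor a Minkowskian) isometry --- by \eqref{propthree} it multiplies the Minkowski square by $\gamma$, and $\|\vec{v}\|^{2}=(\gamma^{2}+1)/2\ne 1$ --- so \eqref{eq:orbitrelativity} cannot hold with the \emph{canonical} Euclidean pairings and the same constants: the factor $\gamma$ that enters the kinetic term on the left and the angular-momentum term on the right must be seen to cancel. The way I would handle this, and the one that matches the ``isochrone relativity'' philosophy, is to give each reference frame its own transported Euclidean structure; equivalently, one checks by a short computation that the $\mathcal{R}_{O}^{\prime}$-coordinates of $\vec{w^{\prime}}$ are again of the shape $(\text{energy}\times\text{distance},\,\text{potential})$ --- in fact the original $(\xi x,y)$ --- so that \eqref{orbital-differential-equation} literally reappears with $\vec{\mathit{i}}\to\vec{u}$, $\vec{\mathit{j}}\to\vec{v}$, $\Lambda^{\prime}\to\Lambda$ and $\tau\to\tau^{\prime}$. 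That coordinate identity together with the attendant proper-time bookkeeping is the only genuine calculation; everything else reduces to theorem~\ref{thm:orbitboost}.
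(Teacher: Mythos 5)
Your proposal is correct and follows the same route as the paper, which in fact offers no written proof of this corollary beyond the single remark that it ``follows from theorem~\ref{thm:orbitboost}'': the first identity is the generic orbital law \eqref{orbital-differential-equation} read off for the image orbit with its canonical-frame parameters $(\xi^{\prime},\Lambda^{\prime})$, and the second is its transport through $B_{\gamma}$ using \eqref{OKprime} for $\vec{w_{\Lambda}^{\prime}}=-\Lambda^{2}\vec{v}$. Your explicit flagging of the metric bookkeeping --- that the pairings in $\mathcal{R}_{O}^{\prime}$ must be taken in the transported structure for which $(\vec{u},\vec{v})$ is orthonormal, since $B_{\gamma}$ is not a Euclidean isometry --- supplies a detail the paper leaves implicit, and is the right way to make the statement precise.
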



Isochrone \textsc{pro} are contained in the periodic-space of their
parabola reference frame. But in the Keplerian primary frame this periodic-space 
appears vertical when $\gamma>0$ and horizontal otherwise. In some cases, the
\textsc{pro} is then associated with an arc of parabola which is concave or
located in the negative part of the Keplerian frame. Those image orbits are
not physical. 
\\

\subsubsection{Potentials relativity}

The Keplerian nature of an isochrone potential is revealed in the reference frame 
$\mathcal{R}_{O}^\prime$ of its parabola, cf theorem~\ref{thm:orbitboost}. An ibolst can also bolst a harmonic potential
and then exactly provide the appropriate primary frame which characterizes the radial oscillation of a \textsc{pro} in the image isochrone potential.
In such a frame, all periods of \textsc{pro} have indeed the same value.

We give hereafter an explicit formulation of the parameters of all the image
potentials. 
They can be obtained by direct resolution of quadratic equations. 

When the primary potential is Keplerian $\psi_{\mathrm{ke}}\left(
r\right)  =-\frac{\mu}{r}$, the primary orbits are such that $\xi<0$ in order
to be bounded. If $\gamma>0$, the ibolsted potential $\psi^{\prime}\left(
r^{\prime}\right)  $ is always a transvection of a H{\'e}non or a bounded
isochrone potential introduced in sec.~\ref{subsec:isochroneclassification}. Using the notations of reduced potentials, coming from the proof of theorem~\ref{theo1} and equation~\eqref{property+-}, one can verify
that
\begin{equation}%
\begin{tabular}
[c]{|c|c|c|}\hline
& $%
\begin{array}
[c]{c}%
\  \\
\
\end{array}
\mathrm{sign}\left(  \xi^{\prime}\right)  =-\mathrm{sign}\left(  \xi \right)
>0$ & $%
\begin{array}
[c]{c}%
\  \\
\
\end{array}
\mathrm{sign}\left(  \xi^{\prime}\right)  =\mathrm{sign}\left(  \xi \right)
<0$\\ \hline
$\gamma>1$ & $%
\begin{array}
[c]{c}%
\  \\
\
\end{array}
J_{\epsilon,0}\left(  \psi_{\mathrm{bo}}^{+}\right) = \psi_{\mathrm{bo}}^{+}+\epsilon$ & $%
\begin{array}
[c]{c}%
\  \\
\
\end{array}
J_{\epsilon,0}\left(  \psi_{\mathrm{he}}^{-}\right)=\psi_{\mathrm{he}}^{-}+\epsilon  $\\ \hline
$0<\gamma<1$ & $%
\begin{array}
[c]{c}%
\  \\
\
\end{array}
J_{-\epsilon,0}\left(  \psi_{\mathrm{bo}}^{-}\right)= \psi_{\mathrm{bo}}^{-}-\epsilon $ & $%
\begin{array}
[c]{c}%
\  \\
\
\end{array}
J_{-\epsilon,0}\left(  \psi_{\mathrm{he}}^{+}\right)= \psi_{\mathrm{he}}^{+} -\epsilon$\\ \hline
\end{tabular}
\end{equation}
where
\[
\epsilon=\frac{\mu^{\prime}\left(  \gamma+1\right)  ^{2}}{8\gamma b}%
>0\text{,}\  \  \mu^{\prime}=\left \vert \frac{8\mu \xi^{\prime}\gamma}{\left(
\gamma+1\right)  \sqrt{\left \vert 8\xi \xi^{\prime}\left(  \gamma+1\right)
\right \vert }}\right \vert \  \  \text{and\  \ }b=\left \vert \frac{\mu \left(
\gamma-1\right)  }{\sqrt{\left \vert 8\xi \xi^{\prime}\left(  \gamma+1\right)
\right \vert }}\right \vert .
\]
Then, when the primary potential $\psi \left(  r\right)  $ is the harmonic
$\psi_{\mathrm{ha}}\left(  r\right)  =+\frac{1}{2}\omega^{2}r^{2}$, the
primary energy is positive $\xi>0$ in order to get bounded orbits. When $\gamma
>0$, the ibolst leads to the four increasing potentials $\psi^{\prime
}\left(  r^{\prime}\right)$,
\begin{equation}%
\begin{tabular}
[c]{|c|c|c|}\hline
& $%
\begin{array}[c]{c}%
\  \\
\
\end{array}
\mathrm{sign}\left(  \xi^{\prime}\right)  =\mathrm{sign}\left(  \xi \right)
>0$ & $%
\begin{array}
[c]{c}%
\  \\
\
\end{array}
\mathrm{sign}\left(  \xi^{\prime}\right)  =-\mathrm{sign}\left(  \xi \right)
<0$\\ \hline
$\gamma>1$ & $%
\begin{array}
[c]{c}%
\  \\
\
\end{array}
\ J_{\epsilon,0}\left(  \psi_{\mathrm{he}}^{+}\right)=\psi_{\mathrm{he}}^{+}+\epsilon  $ & $%
\begin{array}
[c]{c}%
\  \\
\
\end{array}
J_{\epsilon,0}\left(  \psi_{\mathrm{bo}}^{-}\right)=\psi_{\mathrm{he}}^{+}+\epsilon  $\\ \hline   
$0<\gamma<1$ & $%
\begin{array}
[c]{c}%
\  \\
\
\end{array}
J_{-\epsilon,0}\left(  \psi_{\mathrm{bo}}^{+}\right) =\psi_{\mathrm{bo}}^{+}-\epsilon $ & $%
\begin{array}
[c]{c}%
\  \\
\
\end{array}
J_{-\epsilon,0}\left(  \psi_{\mathrm{bo}}^{+}\right) =\psi_{\mathrm{bo}}^{+}-\epsilon $\\ \hline   
\end{tabular}
\  \  \label{gammaneg}%
\end{equation}
where
\[
\epsilon=\frac{\mu^{\prime}\left(  \gamma-1\right)  ^{2}}{8b\gamma}%
,\  \  \mu^{\prime}=\left \vert \frac{4\xi^{\prime}\xi \gamma}{\omega \left(
\gamma-1\right)  \sqrt{\left \vert \xi^{\prime}\left(  \gamma-1\right)
\right \vert }}\right \vert \  \  \text{and\  \ }b=\frac{\left(  \gamma+1\right)
 \left \vert \xi \right \vert }{2\omega \sqrt{\left \vert \xi^{\prime}\left(
\gamma-1\right)  \right \vert }}.%
\]
The classical Bohlin transformation $B_{-1}$ exchanges the two potentials $\psi_\mathrm{ke}$
and $\psi_\mathrm{ha}$, cf. theorem~\ref{prop:Bohlin} p.\pageref{prop:Bohlin}. The commutative structure and 
associative property of the group $\mathbb{B}$ then provide the image of any isochrone potential
by $B_\gamma$ when $\gamma<0$.

\begin{figure}[h]
\centering \resizebox{1.0\textwidth}{!}{\includegraphics{./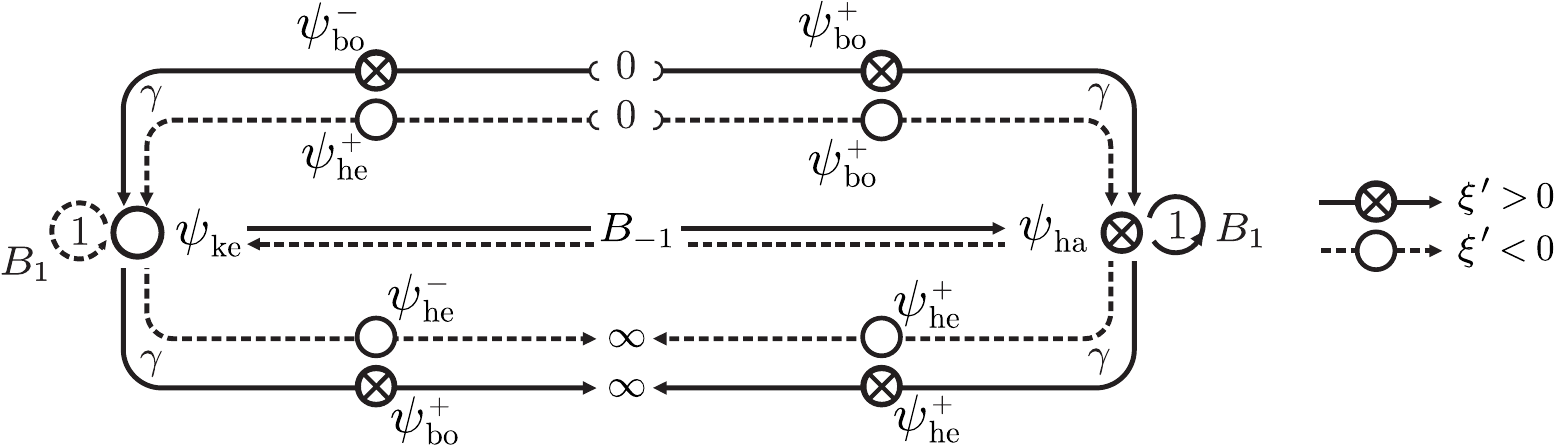}}\caption{Diagram
of the set of all possible ibolsted potentials up to an additive constant. The
nomenclature used is the one defined in the isochrone classification of 
sec. \ref{subsec:isochroneclassification}.}%
\label{fig:Iboost-diagram}%
\end{figure}

 A transvection $J_{\epsilon,0}$ swivels a parabola when it only adds
the constant $\epsilon$ to the corresponding potential. This constant has no
particular role and we can neglect it in a potential diagram summarizing the
effect of the ibolst on isochrone potentials. This is the purpose of figure~\ref{fig:Iboost-diagram}.

Using this diagram and the group property of the ibolst, we can recover all
ibolsted potentials only from the Keplerian one. Isochrone potentials form
the group orbit of Kepler potentials under the action of $\mathbb{B}$.

\subsubsection{Isochrone orbits construction\label{isoconstruct}}

Isochrony is a Keplerian property seen from an appropriate 
reference frame.
Theorem~\ref{thm:orbitboost} gives a method to find the relative isochrone reference frame 
from a Keplerian potential. From any isochrone potential
one may reciprocally construct its isochrone orbits and find their related Keplerian description 
graphically using parabolas.

In order to be concrete, we build now the complete \emph{back to the Kepler} process
 when the needed ibolst has 
$\gamma > 1$ for $\xi\xi^\prime>0$
in figure~\ref{fig:soupe_corbeau}. 
\begin{figure}[h]
\centering \resizebox{0.95\textwidth}{!}{\includegraphics{./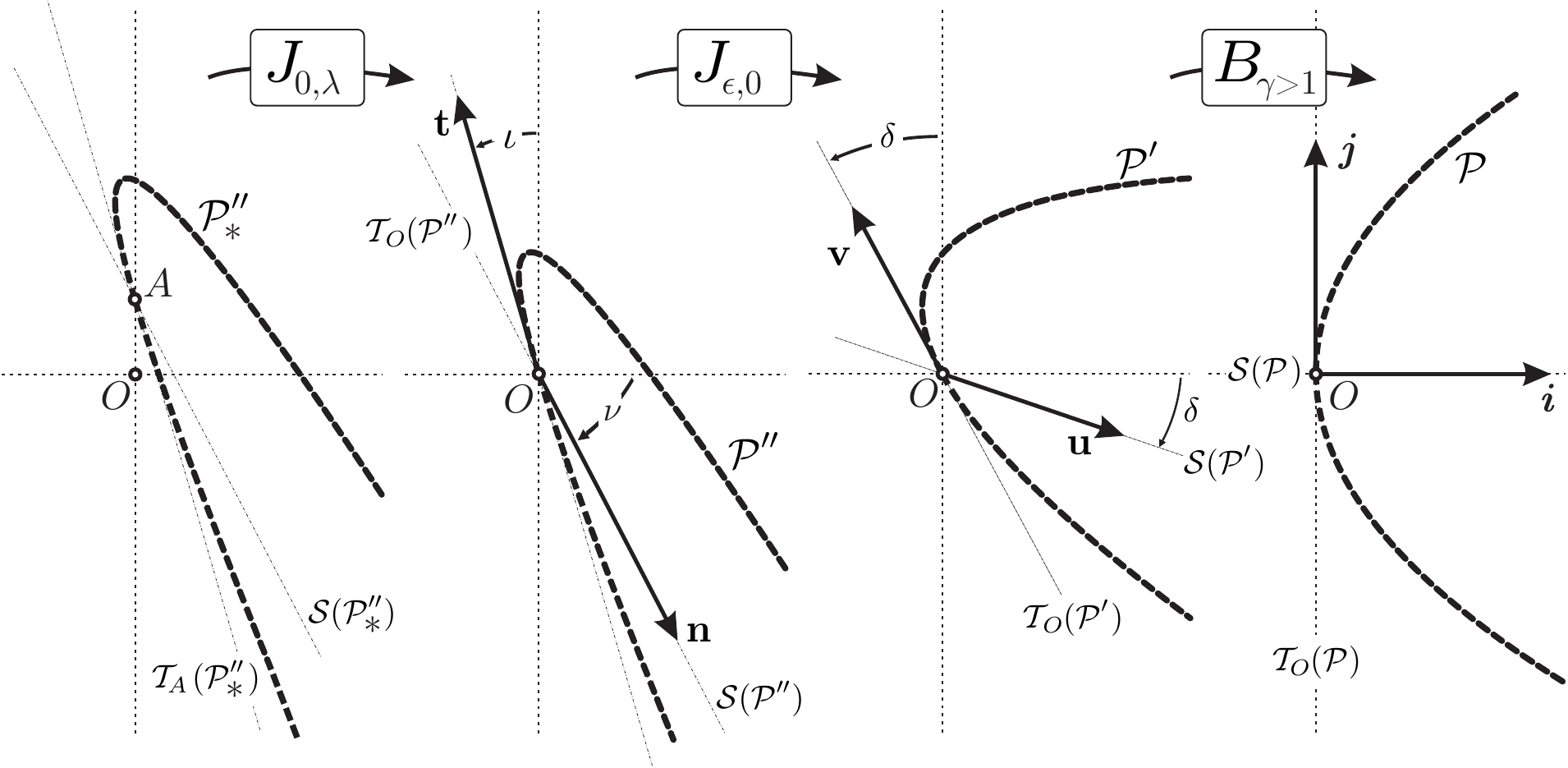}}\caption{Graphical
determination of the Keplerian reference frame of an isochrone parabola. Here isochrone orbits have negative energy.}%
\label{fig:soupe_corbeau}%
\end{figure}

Consider a parabola $\mathcal{P^{\prime\prime}_\ast}$. From definition~\ref{def:redphysgau} p.\pageref{def:redphysgau}, we retrieve a \emph{physical} parabola 
from a vertical translation $J_{0,\lambda}
(\mathcal{P}^{\prime\prime}_\ast) = \mathcal{P}^{\prime\prime}$.
Then, by definition~\ref{def:refframe} p.\pageref{def:refframe}, we find the natural frame $\left( O,\vec t, \vec n \right)$ attached to $\mathcal{P}^{\prime\prime}$.
While the angles $\iota$ and $\nu$ are not equal, we adjust the parabola with a transvection $J_{\epsilon,0}$ to prepare it for a bolst. We then debolst
the parabola with the ibolst $B_\gamma$ given by the angle $\delta=\iota=\nu$, with $\delta<\frac{\pi}{2}$. 
Given $\mathcal{P}$ and $\mathcal{P}^{\prime}$, the isochrone orbit can be related to its Keplerian description as in figure~\ref{iboostkepler}.

This geometrical construction gives the radial distances 
$$
r_0=\frac{1}{2\sqrt{|\xi_0|}} \sqrt{\left( \vec w | \vec{\mathit{i}}\right)}
\ \text{ and } \
r_1=\frac{1}{2\sqrt{|\xi_0|}} \sqrt{\left( \vec w^\prime | \vec{\mathit{i}}\right)}
$$ 
of the Keplerian and isochrone orbits in the $\left( |\xi_0| x, y \right)$-coordinates of the Keplerian frame.
The angles $\varphi_0$ and $\varphi_1$ are provided
by theorem~\ref{prop:Bohlin} p.\pageref{prop:Bohlin} and given by 
$$
a=\frac{1}{2}\left(r_{0,p}+r_{0,a}\right), \
\frac{|\xi_0|}{\mu} = \frac{1}{2a}, \
e=\frac{r_{0,a}-r_{0,p}}{2a}, \
p=\left(1-e^2\right) a, \ \text{ and } \
\frac{\alpha}{\beta} = \frac{\gamma+1}{\gamma-1}.
$$
They can also be geometrically determined. In fact, the precession of the
isochrone apocenters or pericenters $n_\varphi$ depends on $\Lambda$ and the ordinate of the 
intersection of the convex part of the parabola and $\mathbb{R}\vec{\mathit{j}}$, 
see proposition~\ref{prop:Tniso}. This intersection is given by the vertical
translation parameter $\lambda$ and the aperture of the parabola; more precisely, by the distance
$4b\mu$ between the two intersections of the parabola and the axis $\mathbb{R}\vec{\mathit{j}}$, just as one can deduce from~\eqref{property+-} and its following properties on page~\pageref{property+-}. 

This construction does not explicitly depend on the
hypothesis $\gamma>1$, and can be generalized to other values of $\gamma$ as
long as the considered initial orbit is a \textsc{pro}, i.e. $\vec{w}^{\prime}$
remains a periodic-like vector on the convex part of a parabola. It is also possible to construct positive
energy ibolsted orbits from negative energy Keplerian orbits.
\\

This procedure can also be generalized using a bolst $B_{\alpha,\beta}$, which is 
a transvection 
of an ibolst $B_{\gamma}$ when expressed in the  basis 
$\left(  \vec{\mathit{l}},\vec{\mathit{k}}\right)  $. 
In the same way, the first translation $J_{0,\lambda}$ is not compulsory.

\section{Applications\label{sec4}}

\subsection{Physical properties of isochrone potentials\label{subsec_physapp}%
}

Up to an affine transformation, there are four different increasing potentials
which are isochrone, i.e. in which the radial periods $\tau_{r} $ only depend
on the energy of the considered radially oscillating particles. Two of them
are very well known: the Kepler potential $\psi_{\mathrm{ke}}$ is associated with a Dirac
density distribution and the harmonic potential $\psi_{\mathrm{ha}}$ is sourced by a
constant density distribution of matter in the considered volume. In figure~\ref{fig:bo_and_he} we present the plot of the two other ones, i.e.
$\psi_{\mathrm{bo}}$ and $\psi_{\mathrm{he}}$. Notice their harmonic quadratic behavior
at small radial distances.

\begin{figure}[h]
\centering
\resizebox{0.97\textwidth}{!}{\includegraphics{./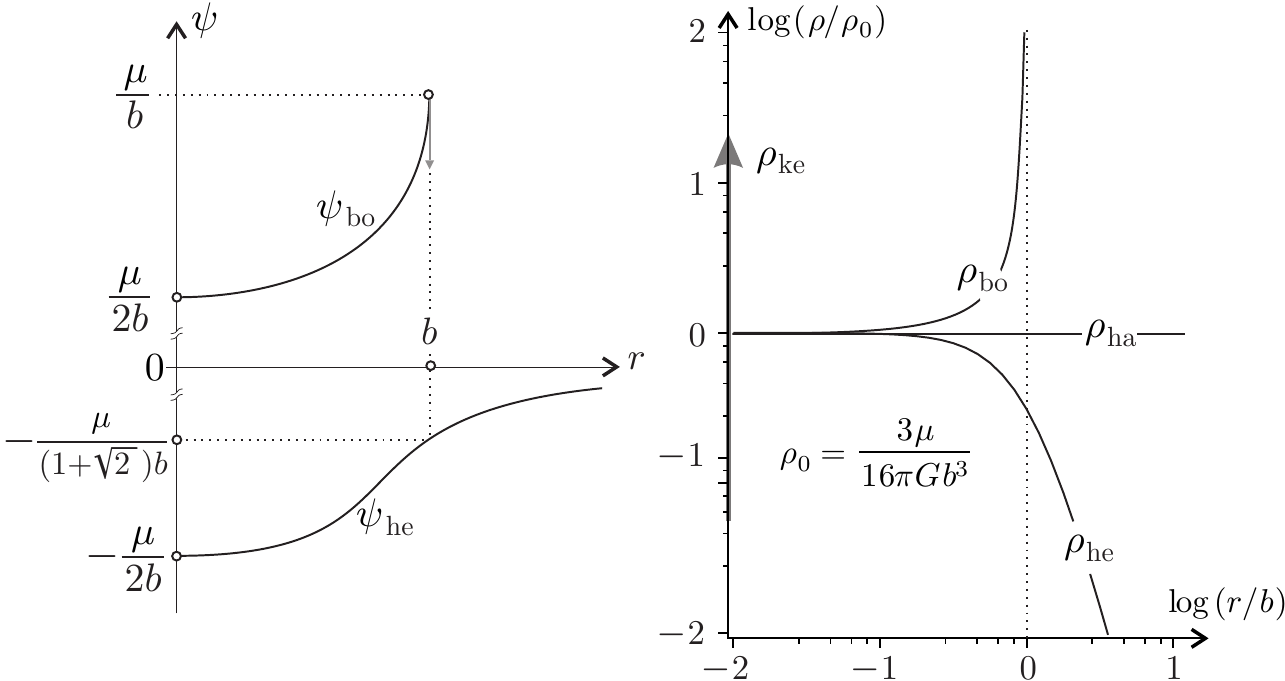}}\caption{The
bounded and the H\'{e}non isochrone potential (left). The mass density of isochrones (right).}%
\label{fig:bo_and_he}%
\end{figure}

The H\'{e}non potential $\psi_{\mathrm{he}}$ has important physical
properties in gravitational stellar dynamics: in a forthcoming paper in preparation by Simon-Petit et al., we will
show that it appears to be a fundamental equilibrium state where stellar
systems settle down after violent relaxation (e.g. \cite{VioRel} for the
original contribution and \cite{BT08} p. 380-382 for a modern review). The corresponding
density is a core-halo structure: the typical size of the core is the length
$b$ and the surrounding halo falls like a $r^{-4}$ power law. This property ensures that the mass $M_{\mathrm{he}}(r)$ contained in any ball of radius $r$ in a H\'{e}non potential is finite. As a matter of fact, by Gauss' theorem, we have $GM_{\mathrm{he}}(r)=r^2\frac{d\psi_{\mathrm{he}}}{dr}$ and $\lim_{r\to\infty}GM_{\mathrm{he}}(r)=\mu$. Recalling definition \ref{def:redphysgau}, this finite mass property is trivially conserved for the reduced version of the H\'{e}non potential 
$\psi_{\mathrm{he}}^{\mathrm{red}}=\psi_{\mathrm{he}}^{+}=\psi_{\mathrm{he}}+\frac{\mu}{2b}$ and for all physical H\'{e}non's $\psi_{\mathrm{he}}^{\mathrm{phy}}=\psi_{\mathrm{he}}+\epsilon$ for any real $\epsilon$. However, the gauged H\'{e}non $\psi_{\mathrm{he}}^{\mathrm{gau}}=\psi_{\mathrm{he}}^{\mathrm{phy}}+\frac{\lambda}{2r^2}$ contains an infinite mass in its center and has poor physical meaning. Nevertheless, this latter potential is still isochrone. As we said in the classification of the sec. \ref{subsec:isochroneclassification}, gauged potentials are essential for the completeness of the isochrone set.

The properties of systems associated with the $\psi_{\mathrm{bo}}$ potential
are more unusual. When it is considered on its whole domain $\mathcal{D}_{\psi_{\mathrm{bo}}}=[0,b]$, the systems have an infinite total mass. As a matter of fact, 
$GM_{\mathrm{bo}}(r)\sim\mu\sqrt{\frac{b}{2(b-r)}}$ when $r\to b$. This property holds for any physical bounded potential. 
In fact these systems are self-confined because there exists an infinite repulsive force at their
boundaries in $r=b$. Perhaps $\psi_{\mathrm{bo}}$ potentials might be
used as classical models for structurally confined systems like, for example,
quarks in the nucleon. Indeed, such fundamental particles are confined in the
nucleon (here of size $b$) and are characterized by asymptotic freedom, i.e.
they do not feel any force at the center of the nucleon.
Gauged bounded potentials are even more unusual with their infinite central mass!

The repartition of mass in physical isochrones is progressive: the mass is concentrated into a point in the center of a Kepler system, while in a H\'{e}non one, the mass is 
equally distributed up to a characteristic length settled by the parameter $b$, and in a less concentrated decreasing repartition after the characteristic radius. 
When $b$ increases, the first dense harmonic part grows and the H\'{e}non potential 
eventually behaves like a harmonic potential since
\begin{equation}
\psi_{\mathrm{he}}^{\mathrm{red}}\underset{b\rightarrow\infty}{\sim} \frac{\mu}{8b^3} r^2,
\label{hebgrand}
\end{equation}
i.e. the physical H\'{e}non isochrone is changed into the physical harmonic when $b\to +\infty$. 
This property can be easily seen on the mass density distribution in the right panel of the figure \ref{fig:bo_and_he}. Subsequently, since 
\begin{equation}
\psi_{\mathrm{bo}}^{\mathrm{red}}\underset{b\rightarrow\infty}{\sim} \frac{\mu}{8b^3} r^2
\label{bobgrand}
\end{equation}
we can say in a converse manner that when the infinite mass of the unbounded harmonic is concentrated into a finite domain of size $b$. We can recover the bounded isochrone by controlling $b$. 

Let us revisit the properties of orbits. 

\subsection{Period and precession of periastron for isochrones}

Proposition~\ref{prop:Tniso} gathers the properties $\tau_r$ and $n_\varphi$ of isochrone orbits and
reveals the interesting similarities of isochrone radial periods. 
Their form in $\psi_\mathrm{he}$ and $\psi_\mathrm{bo}$ is the same as in the Keplerian potential.
We will use this remark to generalize Kepler's third law in the next subsection. In a
harmonic potential, $\tau_r$ is the same regardless of the energy of the massive particles.
Moreover, in $\psi_\mathrm{ke}$ and $\psi_\mathrm{ha}$, $n_\varphi$ is rational
and all orbits are closed. 

\begin{proposition}
\label{prop:Tniso}
Given a \textsc{pro} $(\xi,\Lambda)$ in an isochrone potential, its radial and azimuthal
periods are 
\begin{equation}%
\begin{tabular}[c]{l|c|c|c|c|}\cline{2-5}
$\begin{array}[c]{c} \  \\ \ \end{array}$& ~~$ \psi_{\mathrm{ke}}$~~ & ~~$\psi_{\mathrm{ha}}$~~ & ~~$\psi_{\mathrm{he}}$ ~~ & ~~$\psi_{\mathrm{bo}}$ ~~\\ \hline
\multicolumn{1}{|l|}{$\begin{array}[c]{c} \  \\ \ \end{array} \tau_{r}$} & $2\pi \mu \left \vert 2\xi\right \vert
^{-3/2}$ & $\pi \omega^{-1}$ & $2\pi \mu \left \vert 2\xi\right \vert ^{-3/2}$
& $2\pi \mu \left \vert 2\xi\right \vert ^{-3/2} $\\ \hline
\multicolumn{1}{|l|}{$\begin{array}[c]{c} \  \\ \ \end{array} n_{\varphi}$} & $1$ & $\frac{1}{2}$ & $
\frac{1 }{2}+\frac{\Lambda}{2\sqrt
{4b\mu+\Lambda^{2}}}$ & $\frac{1 }{2}-\frac{\Lambda}{2\sqrt
{4b\mu+\Lambda^{2}}}$\\ \hline
\end{tabular}
\label{table_tau_n}%
\end{equation}
\end{proposition}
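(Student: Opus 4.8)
The plan is to push everything back to the Kepler case through H\'{e}non's variables. Writing the radial \textsc{ode} in the variables $x=2r^{2}$, $y(x)=x\psi(r)$, i.e.\ $\tfrac1{16}(\dot x)^{2}=\xi x-y(x)-\Lambda^{2}$, and using the expression of $\mathcal{A}_{r}$ in \eqref{djde} rewritten in the $x$ variable, one reads off
\[
\tau_{r}=\frac12\int_{x_{p}}^{x_{a}}\frac{dx}{\sqrt{\xi x-y(x)-\Lambda^{2}}},\qquad n_{\varphi}=-\frac{\partial\mathcal{A}_{r}}{\partial\Lambda}=\frac{\Lambda}{2\pi}\int_{x_{p}}^{x_{a}}\frac{dx}{x\,\sqrt{\xi x-y(x)-\Lambda^{2}}},
\]
where $x_{p},x_{a}$ are the two roots of $y(x)=\xi x-\Lambda^{2}$. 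For $\psi_{\mathrm{ke}}$ and $\psi_{\mathrm{ha}}$ I would just invoke the classical facts: a Keplerian \textsc{pro} is a closed ellipse of semi-major axis $a=\mu/(2|\xi|)$ run through once per radial period, giving $\tau_{r}=2\pi a^{3/2}\mu^{-1/2}=2\pi\mu|2\xi|^{-3/2}$ and $n_{\varphi}=1$; a harmonic \textsc{pro} is an origin-centred ellipse covering two radial periods per revolution, giving $\tau_{r}=\pi\omega^{-1}$ and $n_{\varphi}=\tfrac12$.

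For the two remaining potentials the key preliminary step is to rationalise the parabola: multiplying numerator and denominator of $y=x\psi$ by the conjugate radical yields the clean forms
\[
y_{\mathrm{he}}(x)=2\mu b-\mu\sqrt{4b^{2}+2x},\qquad y_{\mathrm{bo}}(x)=2\mu b-\mu\sqrt{4b^{2}-2x},
\]
to be compared with $y_{\mathrm{ke}}(x)=-\mu\sqrt{2x}$. The substitution $s=\sqrt{4b^{2}\pm2x}$ linearises the radical: $\xi x-y(x)-\Lambda^{2}$ becomes a quadratic $q(s)=\mp\tfrac{\xi}{2}s^{2}+\mu s-\Lambda_{\star}^{2}$ with a shifted ``angular momentum'' $\Lambda_{\star}^{2}$, while $dx=\pm s\,ds$ and $dx/x=\pm\,2s\,ds/(s^{2}-4b^{2})$. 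Inserting this into the $\tau_{r}$ integral reproduces verbatim the Keplerian radial-period integral $\tfrac12\int s\,ds/\sqrt{q(s)}$ over the roots of $q$ --- with energy $\xi<0$ for $\psi_{\mathrm{he}}$ and energy $-\xi>0$ for $\psi_{\mathrm{bo}}$ --- whose value is independent of the angular-momentum parameter (Kepler isochrony); hence $\tau_{r}=2\pi\mu|2\xi|^{-3/2}$ in both cases.

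For $n_{\varphi}$ I would split $\dfrac{s}{s^{2}-4b^{2}}=\dfrac12\bigl(\dfrac1{s-2b}+\dfrac1{s+2b}\bigr)$ and use the elementary integral $\int_{s_{p}}^{s_{a}}\dfrac{ds}{(s-c)\sqrt{q(s)}}=\dfrac{\varepsilon\,\pi}{\sqrt{-q(c)}}$, valid when $c$ lies outside the interval $[s_{p},s_{a}]$ of roots of the downward-concave quadratic $q$, with $\varepsilon=\pm1$ the sign of (midpoint of $[s_{p},s_{a}]$) minus $c$; this follows from the substitution $s=\mathrm{mid}+\rho\sin\theta$ followed by $u=\tan(\theta/2)$. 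The arithmetic miracle that produces the table is that, after simplifying $\Lambda_{\star}^{2}$,
\[
q(2b)=-\Lambda^{2},\qquad q(-2b)=-\bigl(\Lambda^{2}+4b\mu\bigr)
\]
for \emph{both} $\psi_{\mathrm{he}}$ and $\psi_{\mathrm{bo}}$, so the two partial fractions contribute $\pm\pi/\Lambda$ and $\pm\pi/\sqrt{\Lambda^{2}+4b\mu}$; assembling with the prefactor $\Lambda/2\pi$ gives exactly $\tfrac12\bigl(1\pm\Lambda/\sqrt{4b\mu+\Lambda^{2}}\bigr)$. The overall $+$ for $\psi_{\mathrm{he}}$ versus $-$ for $\psi_{\mathrm{bo}}$ is forced by the domain: on $x\ge0$ (H\'{e}non) one has $s>2b$, so both poles lie to the left of $[s_{p},s_{a}]$; on $x\le2b^{2}$ (bounded) one has $s<2b$, which flips the sign attached to the $(s-2b)$ term, and combined with the orientation reversal of $s=\sqrt{4b^{2}-2x}$ exactly one sign changes.

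The main obstacle is bookkeeping rather than analysis: one must track the orientation of each substitution and the position of the poles $\pm2b$ relative to the turning-point interval, and verify that the downward concavity of $q$ and the inequality $c\notin[s_{p},s_{a}]$ genuinely hold --- all of which follow from the \textsc{pro} hypothesis via Lemma~\ref{lemme0} and the parabola analysis of sec.~\ref{subsec:GalPropIsoParabolas} (the physical convex arc lies in the region delimited by the radial cone). As an independent cross-check one can instead derive the $\psi_{\mathrm{he}}$ and $\psi_{\mathrm{bo}}$ values of $n_{\varphi}$ from the precession formula $\Delta\varphi_{1}=\pi\bigl(1+\chi/\sqrt{(1+\chi)^{2}-e^{2}}\bigr)$ of Theorem~\ref{prop:Bohlin}, after expressing $(p,e,\xi_{0})$ and the bolst ratio $\alpha/\beta$ in terms of $(\mu,b,\xi,\Lambda)$ through Theorem~\ref{thm:orbitboost}; this route works but requires comparable parameter-chasing, so the direct integral seems preferable.
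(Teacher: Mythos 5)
Your proposal is correct and, underneath the change of notation, rests on the same two pillars as the paper's proof: a substitution that linearises the radical in $y_{\mathrm{he}}$, $y_{\mathrm{bo}}$ (your $s=\sqrt{4b^{2}\pm 2x}$ is an affine rescaling of the paper's $s=1+\sqrt{1\mp r^{2}/b^{2}}$, so your poles $s=\pm 2b$ are exactly the paper's poles $s=2$ and $s=0$), and the same partial-fraction split of the resulting integrand. Where you genuinely diverge is in \emph{what} you integrate: the paper computes the full radial action $\mathcal{A}_{r}(\xi,\Lambda)$ --- reducing it to the integrals $\mathcal{I}_{1},\mathcal{I}_{2}$ of appendix~\ref{appendixb}, with $\mathcal{I}_{1}$ itself deduced \emph{backwards} from the already-known Kepler values of $\tau_{r}$ and $n_{\varphi}$ plus a circular-orbit normalisation --- and only then differentiates with respect to $\xi$ and $\Lambda$. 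You instead compute $\tau_{r}$ and $n_{\varphi}$ directly as integrals in H\'{e}non's variable $x$, observe that the $\tau_{r}$ integrand becomes verbatim the Keplerian one with a shifted angular-momentum constant (so the $\Lambda$-independence of $\tau_{r}$ is manifest without any evaluation), and dispatch $n_{\varphi}$ with the standard closed form for $\int (s-c)^{-1}q(s)^{-1/2}\,ds$. Your route is slightly more economical and makes the Keplerian origin of the $|2\xi|^{-3/2}$ law transparent; the paper's route has the side benefit of producing the explicit separated radial actions $\mathcal{A}_{r}=f(\xi)+g(\Lambda)$, which it reuses elsewhere (theorem~\ref{thm:characisopot}, Bertrand). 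Two small points: for $\psi_{\mathrm{bo}}$ a \textsc{pro} has $\xi>0$, so the effective Keplerian energy in your quadratic $q$ is $-\xi<0$ rather than ``$-\xi>0$'' (harmless, since only $|\xi|$ enters the period); and your sign bookkeeping --- both poles to the left of $[s_{p},s_{a}]$ for H\'{e}non, the pole at $2b$ to the right plus the orientation reversal of $s$ for the bounded case --- does check out and reproduces the $\pm$ in the table correctly.
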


\begin{proof}
Using isochrone potential
expressions, the radial period~\eqref{radialperiod} and increment $n_\varphi$ of the azimuthal angle~\eqref{djdl}
come from the computation of the radial action
\[
\mathcal{A}_{r}=\frac{1}{\pi}{\int_{r_{p}}^{r_{a}}}\sqrt{2\left[  \xi-\psi \left(
r\right)  \right]  -\dfrac{\Lambda^{2}}{r^{2}}}dr.
\]
For a Keplerian orbit of energy $\xi_k<0$ in  $\psi_{\mathrm{ke}}\left(  r\right)  =-\frac{\mu}{r}$ 
and a harmonic orbit of energy $\xi_h>0$ in $\psi_{\mathrm{ha}}\left(  r\right)  =\frac{1}{2}\omega^{2}r^{2}$, we have
\begin{equation}
\label{eq:RAke}
\mathcal{A}_{r}^{\mathrm{ke}}=\frac{\sqrt{2\left \vert \xi_{k}\right \vert }}{\pi}{\int_{r_{p}%
}^{r_{a}}}\dfrac{\sqrt{\left(  r-r_{p}\right)  \left(  r_{a}-r\right)
}}{r}dr \text{ with\ }\left \{
\begin{array}
[c]{l}%
r_{p}+r_{a}=\frac{\mu}{\left \vert \xi_{k}\right \vert }\\
r_{p}r_{a}=\frac{\Lambda^{2}}{2\left \vert \xi_{k}\right \vert }%
\end{array}
\right.
\end{equation}
and%
\begin{equation} 
\label{eq:RAha}
\mathcal{A}_{r}^{\mathrm{ha}}=\frac{\sqrt{\mu}}{2\pi}{\int_{r_{p}^{2}}^{r_{a}^{2}}}%
\dfrac{\sqrt{\left(  u-r_{p}^{2}\right)  \left(  r_{a}^{2}-u\right)  }}%
{u}du \text{ with \ }\left \{
\begin{array}
[c]{l}%
r_{p}^{2}+r_{a}^{2}=\frac{2\xi_{h}}{\omega^{2}}\\
\left(  r_{p}r_{a}\right)  ^{2}=\frac{\Lambda^{2}}{\omega^{2}}.
\end{array}
\right.
\end{equation}
The computation of these radial actions can be done by meticulous integration to recover $\tau_r$ and $n_\varphi$ in $\psi_{\mathrm{ke}}$ and $\psi_{\mathrm{ha}}$.
Conversely, knowing the radial and azimuthal periods, one recovers the expression of $\mathcal{A}_{r}^{\mathrm{ke}}$ and $\mathcal{A}_{r}^{\mathrm{ha}}$.
Indeed, 
for $\psi_{\mathrm{ke}}$, $\tau_r$ follows from the classical Kepler's third law, and $n_\varphi=1$ because the center of attraction of a Keplerian
ellipse is located at one of its foci (see figure~\ref{twospecialcases}). For the harmonic potential, $\tau_r=\frac{\pi}{\omega}$  and $n_\varphi=\frac{1}{2}$ because harmonic ellipses are centered 
at their centers of attraction, see figure~\ref{twospecialcases}. 
As it is shown in appendix~\ref{appendixb}, one gets%
\[
\mathcal{A}_{r}^{\mathrm{ke}}=\frac{\mu}{\sqrt{2\left \vert \xi_{k}\right \vert }}-
\Lambda   \text{ and }\mathcal{A}_{r}^{\mathrm{ha}}=\frac{\xi_{h}}{2\omega}%
-\frac{\Lambda }{2}.%
\]
For the two non classical isochrones $\psi_{\mathrm{he}}%
^{\mathrm{bo}}\left(  r\right)  =\pm \frac{\mu}{b}\left(  1+\sqrt
{1\mp \frac{r^{2}}{b^{2}}}\right)  ^{-1}$, generalizing \cite{BT08} p.152, we introduce
$s=1+\sqrt{1\mp \frac{r^{2}}{b^{2}}}$. For the H{\'e}non potential, $s>2$ and
the \textsc{pro} has $\xi_{-}<0$ according to its effective potential, 
see sec.~\ref{subsecHenonparabola} p.\pageref{subsecHenonparabola}.
In the same way, for the bounded potential, $2>s>0$ and its \textsc{pro} has positive
energy $\xi^+>0$.
Then, for $s_p<s_a$, the radial actions are
\[
\mathcal{A}_{r,\mathrm{he}}^{\mathrm{bo}}=\mp\frac{b\sqrt{2\left \vert
\xi_{-}^+\right \vert }}{\pi}{\int_{ s_{p}  }%
^{ s_{a}}}\frac{(s-1)}{s\left(  s-2\right)  }\sqrt{\left(  s-s_{p} \right)  \left(  s_{a}
-s\right)  }ds
\]
with
\begin{equation}   
\label{eq:newvarsRA} 
\left \{
\begin{array}
[c]{l}%
s_{p}+s_{a}=2+\frac{\mu}{b\left \vert \xi_{-}^{+}\right \vert }\\
s_{a}s_{p}=\frac{4b\mu+\Lambda^{2}}{2b^{2}\left \vert \xi_{-}^{+}\right \vert }.
\end{array}
\right.
\end{equation}
Hence, using $\mathcal{I}_{2}$ from appendix~\ref{appendixb}, one gets%
\[
\mathcal{A}_{r,\mathrm{he}}^{\mathrm{bo}}=\mp \frac{\mu}{\sqrt{2\left \vert
\xi_{-}^{+}\right \vert }}- \frac{1}{2}\left(  \Lambda
\mp\sqrt{4b\mu+\Lambda^{2}}\right)  . 
\]
The results follow by derivation. \qed
\end{proof}

The dynamics is unchanged when adding constants to potentials, i.e. $\psi\to\psi+\epsilon$.
However, the expression of the periods are modified and can be deduced from propositions~\ref{prop:Tniso} and~\ref{prop:TnaffineTransf}
for the reduced, physical and gauged isochrones.
%

\begin{proposition}
\label{prop:TnaffineTransf}
Let $\psi$ and $\psi^{\ast}$ be two potentials related by an affine
transformation $\psi^{\ast}= J_{\epsilon,\lambda}(\psi)=\psi+\epsilon+\frac{\lambda}{2r^2}$. 

An orbit defined in $\psi$ and its affine transformation in $\psi^\ast$
share the same orbital properties $\tau_r$ and $n_\varphi$.

Provided that $\lambda+\Lambda^{2} > 0$, the radial action and its derivatives are transformed as follows:

\begin{enumerate}
\item $\mathcal{A}_{r}^{\ast} ( \xi; \Lambda) = \mathcal{A}_{r}\left(  \xi- \epsilon; \sqrt{\lambda+
\Lambda^{2}} \right)  ,$

\item $\tau_{r}^{\ast} ( \xi; \Lambda) = \tau_{r} \left(  \xi- \epsilon;
\sqrt{\lambda+ \Lambda^{2}} \right)  $,

\item $n_{\varphi}^{\ast} ( \xi; \Lambda) = n_{\varphi} \left(  \xi- \epsilon;
\sqrt{\lambda+ \Lambda^{2}} \right)  \frac{\Lambda}{\sqrt{\lambda+ \Lambda^{2}}}$.
\end{enumerate}
\end{proposition}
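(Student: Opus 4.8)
The plan is to reduce the whole statement to the integral representation
\[
\mathcal{A}_{r}(\xi;\Lambda)=\frac{1}{\pi}\int_{r_{p}}^{r_{a}}\sqrt{2\bigl[\xi-\psi(r)\bigr]-\frac{\Lambda^{2}}{r^{2}}}\;dr
\]
together with the identities $\tau_{r}=2\pi\,\partial\mathcal{A}_{r}/\partial\xi$ and $n_{\varphi}=-\partial\mathcal{A}_{r}/\partial\Lambda$ recorded in \eqref{djde}--\eqref{djdl}. Once item~1 is proved, items~2 and~3 drop out by differentiation, and the opening assertion is merely its trajectory-level reading. I do not expect a genuine obstacle; the only delicate point is the bookkeeping of which parameter gets \emph{shifted} (the energy, by $-\epsilon$, whose derivative with respect to $\xi$ is $1$) and which gets \emph{deformed} (the squared angular momentum, to $\lambda+\Lambda^{2}$), since only the latter produces a non-trivial Jacobian factor.

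First I would establish item~1. Substituting $\psi^{\ast}=\psi+\epsilon+\frac{\lambda}{2r^{2}}$ into the radicand of $\mathcal{A}_{r}^{\ast}(\xi;\Lambda)$ gives
\[
2\bigl[\xi-\psi^{\ast}(r)\bigr]-\frac{\Lambda^{2}}{r^{2}}=2\bigl[(\xi-\epsilon)-\psi(r)\bigr]-\frac{\lambda+\Lambda^{2}}{r^{2}},
\]
which is exactly the radicand occurring in $\mathcal{A}_{r}$ at energy $\xi-\epsilon$ and squared angular momentum $\lambda+\Lambda^{2}$. The hypothesis $\lambda+\Lambda^{2}>0$ is precisely what makes $\sqrt{\lambda+\Lambda^{2}}$ real and, by Lemma~\ref{lemme0}, keeps the associated effective potential divergent as $r\to0$, so that the image orbit is still a \textsc{pro}. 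The turning points $r_{p}^{\ast},r_{a}^{\ast}$ of the starred problem are the positive zeros of that radicand, hence they coincide with the turning points $r_{p},r_{a}$ of the unstarred problem at $(\xi-\epsilon,\sqrt{\lambda+\Lambda^{2}})$; the two defining integrals then agree integrand-by-integrand over the same interval, so $\mathcal{A}_{r}^{\ast}(\xi;\Lambda)=\mathcal{A}_{r}(\xi-\epsilon;\sqrt{\lambda+\Lambda^{2}})$.

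Next, items~2 and~3 follow by differentiating this identity. Since $\partial(\xi-\epsilon)/\partial\xi=1$, the chain rule gives
\[
\tau_{r}^{\ast}(\xi;\Lambda)=2\pi\,\frac{\partial}{\partial\xi}\mathcal{A}_{r}^{\ast}(\xi;\Lambda)=2\pi\,\frac{\partial\mathcal{A}_{r}}{\partial\xi}\Bigl(\xi-\epsilon;\sqrt{\lambda+\Lambda^{2}}\Bigr)=\tau_{r}\Bigl(\xi-\epsilon;\sqrt{\lambda+\Lambda^{2}}\Bigr),
\]
with no extra factor, while, since $\partial\sqrt{\lambda+\Lambda^{2}}/\partial\Lambda=\Lambda/\sqrt{\lambda+\Lambda^{2}}$,
\[
n_{\varphi}^{\ast}(\xi;\Lambda)=-\frac{\partial}{\partial\Lambda}\mathcal{A}_{r}^{\ast}(\xi;\Lambda)=-\frac{\partial\mathcal{A}_{r}}{\partial\Lambda}\Bigl(\xi-\epsilon;\sqrt{\lambda+\Lambda^{2}}\Bigr)\,\frac{\Lambda}{\sqrt{\lambda+\Lambda^{2}}}=n_{\varphi}\Bigl(\xi-\epsilon;\sqrt{\lambda+\Lambda^{2}}\Bigr)\,\frac{\Lambda}{\sqrt{\lambda+\Lambda^{2}}},
\]
which is the announced transformation of $n_{\varphi}$.

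Finally, the opening sentence is the trajectory reading of the computation in item~1: by inspection of \eqref{defE}, an orbit of parameters $(\xi,\Lambda)$ in $\psi^{\ast}$ and the orbit of parameters $(\xi-\epsilon,\sqrt{\lambda+\Lambda^{2}})$ in $\psi$ obey the identical radial law $\frac{1}{2}\left(\frac{dr}{dt}\right)^{2}=(\xi-\epsilon)-\psi(r)-\frac{\lambda+\Lambda^{2}}{2r^{2}}$, hence have the same trajectory $r(t)$ and in particular the same radial period $\tau_{r}$; for a transvection ($\lambda=0$) the angular momentum is moreover unchanged, so the whole trajectory $(r(t),\varphi(t))$ and with it $n_{\varphi}$ are literally identical, which is the physical fact that adding a constant to a potential leaves the dynamics unaffected.
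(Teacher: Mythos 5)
Your proposal is correct and follows essentially the same route as the paper: substitute $\psi^{\ast}$ into the radial-action integral, absorb $\epsilon$ into the energy and $\lambda$ into the squared angular momentum (noting that the turning points coincide), and then differentiate with respect to $\xi$ and $\Lambda$ to obtain items 2 and 3. Your trajectory-level reading of the opening sentence is if anything more careful than the paper's, which claims $n_{\varphi}^{\ast}\bigl(\xi+\epsilon;\sqrt{\Lambda^{2}-\lambda}\bigr)=n_{\varphi}(\xi,\Lambda)$ even though item 3 only gives this up to the factor $\sqrt{\Lambda^{2}-\lambda}/\Lambda$; you correctly restrict the literal invariance of $n_{\varphi}$ to transvections ($\lambda=0$).
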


\begin{proof}
The radial action of an orbit of energy $\xi$ and angular momentum $\Lambda$
in $\psi^{\ast}$ is given by
\[%
\begin{array}
[c]{ccl}%
\mathcal{A}_{r}^{\ast}(\xi;\Lambda) & = & \displaystyle \frac{1}{\pi}\int_{r_{p}^{\ast}(\xi;\Lambda
)}^{r_{a}^{\ast}(\xi;\Lambda)}\sqrt{2(\xi-\psi^{\ast}(r))-\frac{\Lambda^{2}%
}{r^{2}}}dr\\
& = & \displaystyle \frac{1}{\pi}\int_{r_{p}^{\ast}(\xi;\Lambda)}^{r_{a}^{\ast}(\xi
;\Lambda)}\sqrt{2(\xi-\epsilon-\psi(r))-\frac{\lambda+\Lambda^{2}}{r^{2}}}dr\\
& = & \displaystyle \frac{1}{\pi}\int_{r_{p}\left(  \xi-\epsilon;\sqrt{\lambda+\Lambda^{2}%
}\right)  }^{r_{a}\left(  \xi-\epsilon;\sqrt{\lambda+\Lambda^{2}}\right)  }%
\sqrt{2(\xi-\epsilon-\psi(r))-\frac{\lambda+\Lambda^{2}}{r^{2}}}dr\\
\mathcal{A}_{r}^{\ast}(\xi;\Lambda) & = & \mathcal{A}_{r}\left(  \xi-\epsilon;\sqrt{\lambda+\Lambda^{2}%
}\right),  
\end{array}
\]
where $r$ is the radial distance in the reference frame associated with $\psi$, and
$r^\ast$ is the image in the same frame by the affine transformation.
For the second relation we use the definition $\frac{\tau_{r}^{\ast} ( \xi;
\Lambda)}{2 \pi} = \frac{\partial \mathcal{A}_{r}^{\ast}}{\partial \xi} ( \xi; \Lambda)$. For
the third one, we get
\[%
\begin{array}
[c]{ccl}%
n_{\varphi}^{\ast} ( \xi; \Lambda) & = & - \frac{\partial \mathcal{A}_{r}^{\ast}}%
{\partial \Lambda} ( \xi; \Lambda)\\
& = & - \frac{\partial}{\partial \Lambda} \left(  \mathcal{A}_{r} \left(  \xi- \epsilon;
\sqrt{\lambda+ \Lambda^{2}} \right)  \right) \\
& = & - \frac{\partial \mathcal{A}_{r}^{}}{\partial \Lambda} \left(  \xi- \epsilon; \sqrt
{\lambda+ \Lambda^{2}} \right)  \times \frac{\Lambda}{\sqrt{\lambda+ \Lambda^{2}}}
.
\end{array}
\]
And the third relation follows. 

Eventually, a transformation $J_{\epsilon,\lambda}$ maps an orbit $(\xi, \Lambda)$ onto another one of
parameters $(\xi+\epsilon, \sqrt{\Lambda^2-\lambda})$ when $\Lambda^2-\lambda>0$. Inserting them in the previous
relations, we recover the invariance of $\tau_r$ and $n_\varphi$ under $J_{\epsilon,\lambda}$:
the radial period of the image orbit $\tau^{\ast}(\xi+\epsilon,\sqrt{\Lambda^2-\lambda})$ is that of the 
primary orbit $\tau(\xi,\Lambda)$. In the same way,  
$n_{\varphi}^{\ast} ( \xi+\epsilon; \sqrt{\Lambda^2-\lambda})) =n_\varphi(\xi,\Lambda)$. \qed
\end{proof}

Eventually, the radially periodic orbits are rosettes,~\cite{BT08} sect. 3. 
The number $n_\varphi$ of revolutions to reach a periastron from the preceding one can be greater
or lower than for a harmonic or Keplerian potential.
A gauge introduces 
orbits that spiral into the origin~\cite{LB:2015}, 
as it happens for orbits of the extremal line defining an imaginary radial distance on its parabola at the pericenter.
The gauged harmonic presents a similarity with $\psi_\mathrm{he}$ and $\psi_\mathrm{bo}$,
as described in proposition~\ref{prop:rosettes}. The precession of orbits that emerge when adding a $\frac{1}{r^2}$-term to the
potential corresponds to the one described in Proposition XLIV of Newton's \emph{Principia}~\cite{Newton}
for the Kepler force.

\begin{proposition}
\label{prop:rosettes}
 Bounded, H{\'e}non and gauged harmonic \textsc{pro}'s are rosettes with azimuthal precessions $n_\varphi$ such that:
 $$
 \begin{array}{|c|c|}
    \hline
    \begin{array}[c]{c} \  \\ \ \end{array} \psi_\mathrm{he}\; \text{and}\;  J_{0,\lambda} (\psi_\mathrm{ha} )\; \text{with}\;\lambda>0 \;& 
		\begin{array}[c]{c} \  \\ \ \end{array}  \psi_\mathrm{bo}\; \text{and}\;  J_{0,\lambda} (\psi_\mathrm{ha} )\; \text{with}\;\lambda<0 \;
    \\
    \hline
    \begin{array}[c]{c} \  \\ \ \end{array} 
		\includegraphics[scale=0.9]{./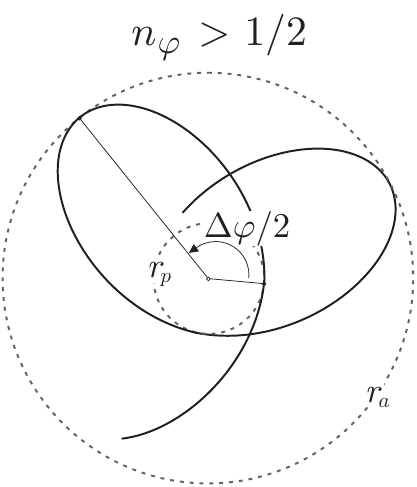}   &  
		\raisebox{-0.05\height}{\includegraphics[scale=0.9]{./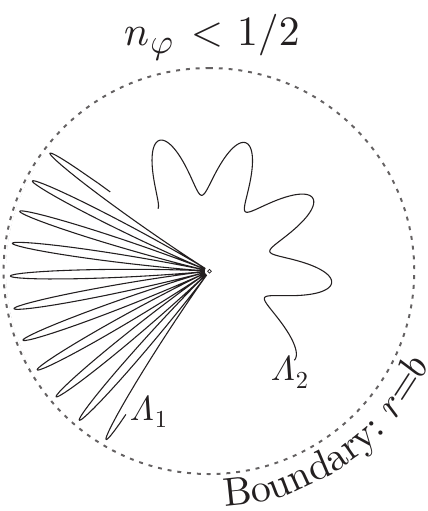}}
    \\
    \hline
 \end{array}
$$
\end{proposition}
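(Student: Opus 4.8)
The plan is to obtain all three precession laws from results already at hand and then read off the table. First I would recall from Proposition~\ref{prop:Tniso} that in the H\'enon and bounded potentials the azimuthal advance per radial period depends on $\Lambda$ alone,
\[
n_\varphi^{\mathrm{he}}(\Lambda)=\tfrac12+\frac{\Lambda}{2\sqrt{4b\mu+\Lambda^{2}}},\qquad n_\varphi^{\mathrm{bo}}(\Lambda)=\tfrac12-\frac{\Lambda}{2\sqrt{4b\mu+\Lambda^{2}}},
\]
with values in $(1/2,1)$ and $(0,1/2)$ respectively, and that each is a non-constant real-analytic function of $\Lambda$. Such a function is rational only on a countable subset of $\Lambda$'s, so for generic angular momentum $n_\varphi\notin\mathbb{Q}$; the \textsc{pro}, which stays in the annulus $r_{p}\le r\le r_{a}$ and advances by $2\pi n_\varphi$ in azimuth per radial period, therefore never closes up and is a genuine rosette in the sense of~\cite{BT08}. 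This is precisely what separates $\psi_{\mathrm{he}}$ and $\psi_{\mathrm{bo}}$ from $\psi_{\mathrm{ke}}$ and $\psi_{\mathrm{ha}}$, whose $n_\varphi\equiv1$ and $n_\varphi\equiv\tfrac12$ are rational constants so that every orbit is closed.

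Next I would treat the gauged harmonic. Writing $\psi_{\mathrm{ha}}^{\ast}=J_{0,\lambda}(\psi_{\mathrm{ha}})=\psi_{\mathrm{ha}}+\tfrac{\lambda}{2r^{2}}$ and inserting the harmonic value $n_\varphi^{\mathrm{ha}}\equiv\tfrac12$ into Proposition~\ref{prop:TnaffineTransf} (with $\epsilon=0$), one obtains
\[
n_\varphi^{\ast}(\xi,\Lambda)=n_\varphi^{\mathrm{ha}}\!\left(\xi,\sqrt{\lambda+\Lambda^{2}}\right)\frac{\Lambda}{\sqrt{\lambda+\Lambda^{2}}}=\frac{\Lambda}{2\sqrt{\lambda+\Lambda^{2}}},
\]
legitimate exactly when $\lambda+\Lambda^{2}>0$, which is also the condition under which the gauged effective potential $\frac{\lambda+\Lambda^{2}}{2r^{2}}+\tfrac12\omega^{2}r^{2}$ has a minimum and hence a \textsc{pro} (for $\lambda<0$ this means $\Lambda^{2}>|\lambda|$). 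Since $\Lambda\mapsto n_\varphi^{\ast}$ is again non-constant analytic, the same irrationality argument makes these orbits rosettes, degenerating to the closed harmonic ellipse only in the limit $\lambda\to0$.

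Finally I would carry out the comparison recorded by the table. Differentiation gives $\partial_\Lambda n_\varphi^{\mathrm{he}}=2b\mu\,(4b\mu+\Lambda^{2})^{-3/2}>0$, $\partial_\Lambda n_\varphi^{\mathrm{bo}}=-2b\mu\,(4b\mu+\Lambda^{2})^{-3/2}<0$, and $\partial_\Lambda n_\varphi^{\ast}=\tfrac{\lambda}{2}\,(\lambda+\Lambda^{2})^{-3/2}$, whose sign is that of $\lambda$; hence the $\lambda>0$ gauged harmonic precesses in the same ($\Lambda$-increasing) sense as $\psi_{\mathrm{he}}$ and the $\lambda<0$ one in the same ($\Lambda$-decreasing) sense as $\psi_{\mathrm{bo}}$ — the pairing in the statement, illustrated by the two vignettes. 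Independently, one may note the exact identities $n_\varphi^{\mathrm{he}}=\tfrac12+n_\varphi^{\ast}$ and $n_\varphi^{\mathrm{bo}}=\tfrac12-n_\varphi^{\ast}$ obtained by choosing the gauge $\lambda=4b\mu$. To close, the pictures follow by plotting $r(\varphi)$ over several radial periods at representative parameters. I do not anticipate a real obstacle: the only points needing care are the admissibility condition $\lambda+\Lambda^{2}>0$ for the gauged harmonic (without which Proposition~\ref{prop:TnaffineTransf} does not apply), and the elementary observation that a non-constant analytic $n_\varphi(\Lambda)$ is irrational off a countable set, which is what upgrades ``periodic in $r$'' to ``rosette''.
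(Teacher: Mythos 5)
Your computations all check out and reproduce the quantitative core of the paper's proof: $n_\varphi^{\ast}=\Lambda/(2\sqrt{\lambda+\Lambda^{2}})$ for the gauged harmonic follows from proposition~\ref{prop:TnaffineTransf} exactly as you say (with the same admissibility caveat $\lambda+\Lambda^{2}>0$), and the H\'enon and bounded formulas are read off proposition~\ref{prop:Tniso}. Your irrationality argument upgrading ``radially periodic'' to ``non-closed rosette'' for generic $\Lambda$ is a genuine addition: the paper simply treats a rosette as a precessing \textsc{pro} and does not argue non-closure. Where you diverge is in how the two columns of the table are justified. The paper's criterion is the position of $n_\varphi$ relative to $1/2$: since $\Delta\varphi=2\pi n_\varphi$ is the azimuthal advance per radial period, $n_\varphi>1/2$ forces the orbit to wind around the center during one radial libration (one vignette), while $n_\varphi<1/2$ produces the tight oscillating pattern between $r_p$ and $r_a$ (the other vignette); this dichotomy is what the two pictures depict. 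You instead pair by the sign of $\partial_\Lambda n_\varphi$. These two criteria give \emph{opposite} pairings for the gauged harmonic: one checks directly that $n_\varphi^{\ast}<1/2$ when $\lambda>0$ and $n_\varphi^{\ast}>1/2$ when $\lambda<0$, so under the paper's criterion the $\lambda>0$ gauge sits with $\psi_{\mathrm{bo}}$ and the $\lambda<0$ gauge with $\psi_{\mathrm{he}}$ --- the opposite of the table header as printed (which appears to carry a sign typo relative to the paper's own proof text). The monotonicity of $\Lambda\mapsto n_\varphi(\Lambda)$ does not by itself control whether a single orbit wraps the center, so your derivative computation, although correct, does not establish the geometric content the vignettes assert. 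The fix is short: add the observation that $n_\varphi^{\ast}\gtrless 1/2$ according as $\lambda\lessgtr 0$, and let that, together with $n_\varphi^{\mathrm{he}}>1/2>n_\varphi^{\mathrm{bo}}$, drive the classification of the orbit shapes.
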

\begin{proof}
Let us illustrate the case of a harmonic
oscillator and its gauge transform
$ \psi^{\ast} = J_{0,\lambda}( \psi_{\mathrm{ha}} ) = \psi_{\mathrm{ha}} + \frac{\lambda}{2 r^{2}} $.
From proposition~\ref{prop:TnaffineTransf}, we get that for the modified potential,
\[
\tau^{\ast}_{r} = \tau_{r} = \frac{\pi}{\omega}, \hspace{1em} n^{\ast
}_{\varphi} = \frac{\Lambda}{2 \sqrt{\lambda+ \Lambda^{2}}} .
\]
Thus, for harmonic potentials, adding a gauge modifies $n_{\varphi}$, whereas the period
never changes.
Moreover we get the dynamical consequence that $n^{\ast}_{\varphi}
< 1 / 2$ when $\lambda> 0$ and $n^{\ast}_{\varphi} > 1 / 2$ when $\lambda< 0$.

The parallel property exists for $\psi_\mathrm{he}$ and $\psi_\mathrm{bo}$. According to~\eqref{table_tau_n} in proposition~\ref{prop:Tniso},
$n_\varphi^\mathrm{bo} = \frac{1}{2}-\frac{\Lambda}{2\sqrt
{4b\mu+\Lambda^{2}}}$ where $\frac{\Lambda}{2\sqrt
{4b\mu+\Lambda^{2}}}>0$, and then $n_\varphi^\mathrm{bo}<\frac{1}{2}$. In the same way, $n_\varphi^\mathrm{he}>\frac{1}{2}$.

This $n_{\varphi}$ property shapes the corresponding orbits. On the one hand, when $n_{\varphi}>\frac{1}{2}$, the azimuthal precession $\Delta \varphi/2$ during the transfer from apoastron ($r=r_a$) to the periastron ($r=r_p$) is greater than $\pi$. Thus the orbit must turn around the center of the system as it is indicated on the left panel of the proposition. On the other hand, when $n_{\varphi}<\frac{1}{2}$, the transfer $r_p\to r_a\to r_p$ cannot turn around the center; such orbits oscillate between $r_a$ and $r_p$, precessing around the center, as is plotted on the right panel of the proposition. The smaller the value of the angular momentum, the tighter the oscillation is. On this right panel we have $0\simeq\Lambda_1<\Lambda_2$. 
 \qed
\end{proof}

Let us conclude this section remarking that the extension ($r_a$ and $r_p$) of an isochrone orbit is managed by its energy  (see the expression of $\xi$ in \eqref{eq:RAke}, \eqref{eq:RAha} and \eqref{eq:newvarsRA}) when the thickness of its oscillation is governed by its angular momentum. More precisely, radial (thin) orbits are obtained when $\Lambda\to0$ and circular (fat) orbits when $\Lambda=\Lambda_c$, the largest value possible of the angular momentum for the considered energy.

\subsection{Generalization of Kepler's Third Law\label{subsec_kep_third_law}
}

The Kepler potential $\psi_{\mathrm{ke}}\left(r\right)  =-\frac{\mu}{r}$
is sourced by a point of mass $M$ such that $\mu=GM$ where $G$ is
the Newton constant. Radially periodic orbits close after one
radial period $\tau_{r}$ and form ellipses with semi major axes $a=-\frac{\mu
}{2\xi}$. In his last major book \emph{Harmonices Mundi}~\cite{Kepler}, Johannes Kepler
proposed in 1619 his third law claiming that $\tau^{2}_r\times a^{-3}$ is
constant for all ellipses. Isaac Newton, half a century later,
proved this empirical observation using his laws of dynamics and his
gravitational force. This law appears to become a cornerstone of celestial
mechanics because the Kepler constant appears to be $\tau_{r}^{2}a^{-3}
=\frac{4\pi^{2}}{\mu}$ and thus gives the mass of the attracting body.

In this paper we have shown that Kepler potential generates the isochrone
group and we remark that Kepler's third law could be generalized. As a matter
of fact, considering the specific energy $\xi$ associated with a given
\textsc{pro} in an isochrone potential $\psi\in\left\{\psi_\mathrm{ke},\psi_\mathrm{he},\psi_\mathrm{bo}\right\}$, 
we see that according to proposition~\ref{prop:Tniso}, 
except for the harmonic potential, all isochrone orbits are such
that
\begin{equation}
\tau_{r}^{2}\left \vert \xi \right \vert ^{3}=\frac{\pi^{2}\mu^{2}}{2}
=\text{cst}. \label{eq:genkeplaw}
\end{equation}
Nevertheless, the law~\eqref{eq:genkeplaw} expressed in terms of the
specific energy is not stable under transvections of the potential, $\psi\mapsto\psi^\ast=\psi+\epsilon$,   and has to be
slightly modified for physical potentials when adding a constant. 
As mentioned
in proposition~\ref{prop:TnaffineTransf}, a \textsc{pro} $\left(\xi,\Lambda\right)$ in $\psi^\ast$ will
satisfy
\begin{equation}
\tau_{r}^{2}\left \vert \xi - \epsilon \right \vert ^{3}=\frac{\pi^{2}\mu^{2}}{2}
=\text{cst}. \label{eq:genkeplawcst}
\end{equation}
In these relations, $\xi$ is the specific energy of the test particle moving on a
\textsc{pro} with period $\tau_{r}$. The parameter $\mu$ is directly related
to the total mass of the system which sources the potential when it is finite
i.e. $\psi_{\mathrm{ke}}$ and $\psi_{\mathrm{he}}$. For the other non classical
isochrone $\psi_{\text{bo}}$, the total mass is infinite but equation
\eqref{eq:genkeplaw} always holds with a less physically comprehensive $\mu$
constant. The modification of the law~\eqref{eq:genkeplaw} into~\eqref{eq:genkeplawcst} somehow
hides the symmetry of the considered system. We
thus propose a geometric formulation of Kepler's Third Law for isochrones.
\\

The formulation of Kepler $\tau^{2}_r\times a^{-3}$ in terms of the geometric
parameter $a$ is more appropriate for conveying the symmetry of the potential. In fact, 
the Lagrangian $L = T - U$, with $T$ the specific kinetic energy of a particle and $U=\psi_\mathrm{ke}$,
is invariant, under a time $t\to\tilde t = \zeta t$ and space $\vec r \to \tilde{\vec r} = \varpi \vec r$
rescaling, if $$\zeta^2 \propto \varpi^3$$
because $ \psi_\mathrm{ke}$ is a homogeneous function of degree $-1$, i.e.
$ \psi_\mathrm{ke}\left(\tilde{\vec r}\right) = \varpi^{-1}  \psi_\mathrm{ke}\left(\vec r\right) $.
In order to geometrically express Kepler's Third law, we introduce in definition~\ref{def:isosma} ``semi major axes", relevant to all isochrone potentials, and directly
related to their Keplerian relative description. These characteristic lengths, generally related to specific
energies by~\eqref{eq:RAke}, \eqref{eq:RAha} and~\eqref{eq:newvarsRA}, provide a method to determine the mass of an isochrone system as mentioned at the end of this
section.

\begin{definition}
\label{def:isosma}
Let $r_p$ and $r_a$ be the peri- and apoastron radial distance of a given isochrone periodic orbit. We call the \emph{isochrone semi-major axis} of this orbit by the following lengths:

\begin{enumerate}
\item in a Kepler potential,
\begin{equation*}
a=\frac{1}{2}\left( r_{a}+r_{p}\right) ,
\end{equation*}

\item in a homogeneous box of radius $R$,
\begin{equation*}
a=\left( \frac{1}{2}\right) ^{2/3}R ,
\end{equation*}

\item in a H\'{e}non potential,
\begin{equation*}
a=\frac{1}{2}\left( \sqrt{b^{2}+r_{a}^{2}}+\sqrt{b^{2}+r_{p}^{2}}
\right) ,
\end{equation*}

\item in a bounded potential,
\begin{equation*}
a=\frac{1}{2}\left( \sqrt{b^{2}-r_{a}^{2}}+\sqrt{b^{2}-r_{p}^{2}}
\right) .
\end{equation*}
\end{enumerate}
\end{definition}

In definition~\ref{def:isosma}, we have considered a homogeneous box to include
the description of its elliptic trajectories with the Third Law. In fact, 
the situation of the harmonic potential needs more attention since 
$\psi_{\mathrm{ha}}$ is degenerate.
In
such a potential all test particles share the same period but different specific
energies, hence relation \eqref{eq:genkeplaw} cannot hold for each specific
energy.  

The harmonic potential is not exactly representative of a real system because of its
constant density and infinite spatial extension, which imply an infinite mass. 
Instead, the potential associated with a finite homogenous repartition of masses in a ball
of radius $R$ with constant density (while the outside region is empty) does represent a real system and can be written as
\begin{equation*}
\psi_\mathrm{ha}^R \left( r\right) =\left\{ 
\begin{array}{ll}
\frac{1}{2}\omega ^{2}r^{2}-\frac{3}{2}\omega ^{2}R^{2} & \text{if }r<R \\ 
-\frac{GM}{r} & \text{if }r>R.%
\end{array}%
\right. 
\end{equation*}
We call it a finite harmonic potential. 
Additionally, either Gauss' theorem or the continuity of the force at the boundary of the ball leads to the following relation:
\begin{equation}
\mu = GM = \omega^2 R^3 .
\label{eq:finitehacty}
\end{equation}
As mentioned on page~\pageref{remark:isochronelink}, the harmonic potential
corresponds to the limit of an isochrone potential $\psi_\mathrm{he}$ or $\psi_\mathrm{bo}$ when
$b\to\infty$. This result holds for the finite harmonic potential $\psi_\mathrm{ha}^R$. 
In figure~\ref{fig:bo_and_he}, we see the confluence of these potentials and their densities when the parameter $b$ is large, 
as written in proposition~\ref{prop:finiteharmonic}.
As it will be proven in theorem~\ref{thm:KepLawTa}, the characteristic length for the finite harmonic also naturally appears in the expression of 
Kepler's Third Law.
\begin{proposition}
\label{prop:finiteharmonic}
The finite harmonic potential satisfies
$$
 \psi_\mathrm{he} \left( r \right)  \underset{b\rightarrow\infty}{\sim} \psi_\mathrm{ha}^R \left( r \right)
 \ \text{ and } \
  \psi_\mathrm{bo} \left( r \right) \underset{b\rightarrow\infty}{\sim}   \psi_\mathrm{ha}^R \left( r \right)
\ \text{ with } \ R= 2^{2/3} b \ \text{ for any fixed } r.
$$
\end{proposition}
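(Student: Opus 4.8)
The plan is to reduce the statement to an elementary binomial expansion of $\psi_{\mathrm{he}}$ and $\psi_{\mathrm{bo}}$ in powers of $r/b$, matched term by term against the inner (polynomial) branch of $\psi_{\mathrm{ha}}^{R}$. First I would fix $r$ and note that since $R=2^{2/3}b\to\infty$ as $b\to\infty$, one has $r<R$ for all large $b$, so only the branch $\psi_{\mathrm{ha}}^{R}(r)=\tfrac12\omega^{2}r^{2}-\tfrac32\omega^{2}R^{2}$ is relevant, and relation~\eqref{eq:finitehacty} together with $R=2^{2/3}b$ (hence $R^{3}=4b^{3}$) gives $\omega^{2}=\mu/R^{3}=\mu/(4b^{3})$. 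Rationalising the two isochrone potentials --- multiplying numerator and denominator by $\sqrt{b^{2}+r^{2}}-b$, respectively by $b-\sqrt{b^{2}-r^{2}}$ --- turns them into
\[
\psi_{\mathrm{he}}(r)=\frac{\mu\bigl(b-\sqrt{b^{2}+r^{2}}\bigr)}{r^{2}},\qquad
\psi_{\mathrm{bo}}(r)=\frac{\mu\bigl(b-\sqrt{b^{2}-r^{2}}\bigr)}{r^{2}}.
\]

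Next I would insert $\sqrt{b^{2}\pm r^{2}}=b\pm\frac{r^{2}}{2b}-\frac{r^{4}}{8b^{3}}+O(r^{6}/b^{5})$ into these expressions, obtaining for every fixed $r$, as $b\to\infty$,
\[
\psi_{\mathrm{he}}(r)=-\frac{\mu}{2b}+\frac{\mu r^{2}}{8b^{3}}+O\!\Bigl(\frac{\mu r^{4}}{b^{5}}\Bigr),\qquad
\psi_{\mathrm{bo}}(r)=+\frac{\mu}{2b}+\frac{\mu r^{2}}{8b^{3}}+O\!\Bigl(\frac{\mu r^{4}}{b^{5}}\Bigr),
\]
the $r^{2}$ coefficients being equivalently read off from~\eqref{hebgrand}--\eqref{bobgrand}. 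Since $\mu r^{2}/(8b^{3})=\tfrac12\omega^{2}r^{2}$, this reads $\psi_{\mathrm{he}}(r)=\psi_{\mathrm{ha}}^{R}(r)+c^{-}_{b}+O(\mu r^{4}/b^{5})$ and $\psi_{\mathrm{bo}}(r)=\psi_{\mathrm{ha}}^{R}(r)+c^{+}_{b}+O(\mu r^{4}/b^{5})$, with $r$-independent constants $c^{\mp}_{b}=\mp\tfrac{\mu}{2b}+\tfrac32\omega^{2}R^{2}$. Because a potential enters the dynamics only through its gradient --- precisely the $\epsilon$-transvection freedom used throughout the paper --- and because the $r$-dependent remainder $O(\mu r^{4}/b^{5})$ tends to $0$ for each fixed $r$, this is the asserted confluence. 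Differentiating gives the cleanest form: $\psi_{\mathrm{he}}'(r)=\psi_{\mathrm{bo}}'(r)=\omega^{2}r\bigl(1+O(r^{2}/b^{2})\bigr)$, which is the restoring force $\tfrac{d}{dr}\psi_{\mathrm{ha}}^{R}(r)=\omega^{2}r$ of the homogeneous ball up to a factor tending to $1$. One verifies in passing that the normalisation $R=2^{2/3}b$ is \emph{forced}: it is the unique scaling for which the quadratic coefficients coincide, $\mu/(8b^{3})=\tfrac12\omega^{2}=\mu/(2R^{3})$.

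There is no genuine analytic obstacle here --- the heart of the matter is one binomial expansion --- and the two points requiring care are bookkeeping rather than substance. First, the piecewise definition of $\psi_{\mathrm{ha}}^{R}$ is dealt with by observing that the outer Keplerian branch never enters, since $r<R$ for $b$ large. Second, and more importantly, the symbol ``$\sim$'' must be understood modulo an additive constant: the central values $\psi_{\mathrm{he}}(0)=-\mu/2b$, $\psi_{\mathrm{bo}}(0)=+\mu/2b$ and $\psi_{\mathrm{ha}}^{R}(0)=-3\mu/2R$ are genuinely distinct (even in sign for $\psi_{\mathrm{bo}}$), so the content of the proposition is the convergence of the forces, equivalently of the reduced profiles $r\mapsto\psi(r)-\psi(0)$, to those of the finite harmonic. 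With that convention the displayed expansions yield the result at once, and the bounded case is the H\'enon case with $r^{2}$ replaced by $-r^{2}$.
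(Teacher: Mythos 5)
Your proof is correct and follows essentially the same route as the paper's: expand $\psi_{\mathrm{he}}$ and $\psi_{\mathrm{bo}}$ for large $b$, match the quadratic coefficient $\mu/(8b^{3})$ against $\tfrac12\omega^{2}=\mu/(2R^{3})$ to force $R=2^{2/3}b$, and read the equivalence modulo an additive constant. Your explicit remark that the central values genuinely differ (even in sign for $\psi_{\mathrm{bo}}$) is exactly the normalization the paper disposes of with ``we assume the potentials vanish at $r=0$ without loss of generality,'' so you have simply made the same step more explicit.
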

\begin{proof}
As already mentioned, the potential $\psi_\mathrm{ha}^R$ is continuous in $r=R$ if and only if $\mu=GM=\omega^2 R^3$.  

We assume the potentials vanish at $r=0$ without loss of generality. We consider then the reduced potentials and their 
equivalents from~\eqref{hebgrand} and~\eqref{bobgrand} as
$\psi_{\mathrm{he}}^{\mathrm{red}}\underset{b\rightarrow\infty}{\sim} \frac{\mu}{8b^3} r^2$
and
$\psi_{\mathrm{bo}}^{\mathrm{red}}\underset{b\rightarrow\infty}{\sim} \frac{\mu}{8b^3} r^2$.

In this limit, the H{\'e}non and bounded potentials behave as homogeneous spheres inside a radius 
$R=2^{2/3} b$. \qed
\end{proof}

Now, Kepler's third law can be generalized to all isochrone potentials in 
theorem~\ref{thm:KepLawTa}.

\begin{theorem}
\label{thm:KepLawTa}
For any radially periodic orbit in an isochrone potential, the 
square of the radial period is proportional to the cube of the isochrone semi-major axis by
\begin{equation}
\label{eq:KeplerThirdLawSMA}
\tau_r^2 = \frac{4\pi^2}{\mu} a^3, 
\end{equation}
where $\mu$ is the mass parameter of $\psi_\mathrm{ke}$, $\psi_\mathrm{he}$, $\psi_\mathrm{bo}$ 
and $\mu=\omega^2 R^3$ for $\psi_\mathrm{ha}^R$.
\end{theorem}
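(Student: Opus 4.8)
The plan is to reduce the theorem, for the three potentials $\psi_\mathrm{ke}$, $\psi_\mathrm{he}$, $\psi_\mathrm{bo}$, to the single geometric identity $a = \mu/(2|\xi|)$ relating the isochrone semi-major axis of Definition~\ref{def:isosma} to the energy of the orbit, and then to dispatch the finite harmonic potential by a direct substitution. Indeed, by Proposition~\ref{prop:Tniso} a \textsc{pro} of energy $\xi$ in any of $\psi_\mathrm{ke},\psi_\mathrm{he},\psi_\mathrm{bo}$ satisfies $\tau_r = 2\pi\mu\,|2\xi|^{-3/2}$, hence $\tau_r^2 = \pi^2\mu^2/(2|\xi|^3)$, which equals $4\pi^2 a^3/\mu$ exactly when $a^3 = \mu^3/(8|\xi|^3)$, i.e. $a = \mu/(2|\xi|)$. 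So the whole content, for these three cases, is to show that the length $a$ of Definition~\ref{def:isosma} is $\mu/(2|\xi|)$; equivalently this recasts \eqref{eq:genkeplaw} in the geometric form \eqref{eq:KeplerThirdLawSMA}.

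For the Kepler potential the identity is immediate: Definition~\ref{def:isosma} gives $a = \tfrac12(r_a+r_p)$, while the first relation in \eqref{eq:RAke} reads $r_p + r_a = \mu/|\xi_k|$, so $a = \mu/(2|\xi_k|)$ and \eqref{eq:KeplerThirdLawSMA} follows. For $\psi_\mathrm{he}$ and $\psi_\mathrm{bo}$ I would reuse the substitution $s = 1 + \sqrt{1\mp r^2/b^2}$ introduced in the proof of Proposition~\ref{prop:Tniso}, which can be written $bs = b + \sqrt{b^2\mp r^2}$. Evaluating it at $r=r_p$ and $r=r_a$ and adding, the definition of the isochrone semi-major axis (with the sign matching the potential, $a=\tfrac12(\sqrt{b^2\mp r_a^2}+\sqrt{b^2\mp r_p^2})$) gives directly $s_p + s_a = 2 + 2a/b$. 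Comparing with the first line of \eqref{eq:newvarsRA}, namely $s_p + s_a = 2 + \mu/(b|\xi|)$, yields $a = \mu/(2|\xi|)$ in both cases, and \eqref{eq:KeplerThirdLawSMA} again follows from the $\tau_r$ entries of Proposition~\ref{prop:Tniso}.

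It remains to handle the finite harmonic potential $\psi_\mathrm{ha}^R$, which is degenerate in the sense that $\tau_r$ does not depend on the orbit; correspondingly the isochrone semi-major axis $a = (1/2)^{2/3}R$ is orbit-independent, so the law reduces to one scalar identity. Here I would substitute directly: $a^3 = R^3/4$, and by \eqref{eq:finitehacty} $\mu = \omega^2 R^3$, so $4\pi^2 a^3/\mu = \pi^2/\omega^2 = \tau_r^2$ since $\tau_r = \pi/\omega$ by Proposition~\ref{prop:Tniso}. As a consistency check one can also recover this by letting $b\to\infty$ in the $\psi_\mathrm{he}$ (or $\psi_\mathrm{bo}$) version of \eqref{eq:KeplerThirdLawSMA}, using Proposition~\ref{prop:finiteharmonic} with $R = 2^{2/3}b$.

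The only genuine obstacle is the sign bookkeeping in the third step: one must keep the upper/lower signs consistent across the change of variable $s = 1+\sqrt{1\mp r^2/b^2}$, the definition of the isochrone semi-major axis, and \eqref{eq:newvarsRA} (Hénon with $s>2$ and $\xi<0$, bounded with $0<s<2$ and $\xi>0$), and one must use that the orbit considered is a \textsc{pro} so that $r_p,r_a$ and the relations \eqref{eq:RAke}, \eqref{eq:RAha}, \eqref{eq:newvarsRA} are all meaningful. Everything else is a one-line substitution into the table of Proposition~\ref{prop:Tniso}.
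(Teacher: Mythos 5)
Your proposal is correct and follows essentially the same route as the paper's proof: in both, the key step is to read off $a=\mu/(2|\xi|)$ from the relation $s_p+s_a=2+\mu/(b|\xi|)$ of \eqref{eq:newvarsRA} (trivially from \eqref{eq:RAke} in the Kepler case) and substitute into \eqref{eq:genkeplaw}, with the finite harmonic case handled by the same direct computation $4\pi^2a^3/\mu=\pi^2/\omega^2$. Your explicit isolation of the single identity $a=\mu/(2|\xi|)$ and the sign bookkeeping remark are a slightly cleaner packaging of the same argument.
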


\begin{proof}
In $\psi_\mathrm{ke}$, it is Kepler's third law. In $\psi_\mathrm{he}$, for a \textsc{pro} of energy $\xi<0$,
the radial variable $s$ introduced in the proof of proposition~\ref{prop:Tniso} satisfies~\eqref{eq:newvarsRA} as
\begin{equation*}
s_{a}+s_{p}=2-\frac{\mu }{\xi b}=2+\sqrt{\left( \frac{r_{a}}{b}\right) ^{2}+1%
}+\sqrt{\left( \frac{r_{p}}{b}\right) ^{2}+1}
\end{equation*}%
and
\begin{equation*}
\xi =-\frac{\mu }{2a}\ \text{\ with }a=\frac{\sqrt{r_{a}^{2}+b^{2}}+\sqrt{%
r_{p}^{2}+b^{2}}}{2}.
\end{equation*}
Inserting this expression of $\xi$ in~\eqref{eq:genkeplaw} gives~\eqref{eq:KeplerThirdLawSMA}.

Similarly, in $\psi_\mathrm{bo}$ the variable $s$ satisfies 
\begin{equation*}
s_{\min}+s_{\max}=2+\frac{\mu}{\xi b}=2+\sqrt{1-\left( \frac{r_{p}}{b}%
\right) ^{2}}+\sqrt{1-\left( \frac{r_{a}}{b}\right) ^{2}} 
\end{equation*}
and
\begin{equation*}
\xi=\frac{\mu}{2a} > 0 \ \text{ with } \ a = \frac{1}{2} = \sqrt{b^{2}-r_{p}^{2}}+\sqrt{b^{2}-r_{a}^{2}}.
\end{equation*}
By inserting this expression in~\eqref{eq:genkeplaw}, we recover the law~\eqref{eq:KeplerThirdLawSMA}.

In $\psi_{\mathrm{ha}}$, all orbits have the same radial period $\tau
_{r}=\frac{\pi}{\omega}$. When a harmonic system is compacted into a ball
of radius $R$ of constant density, then $\mu=\omega^2 R^3$ according to~\eqref{eq:finitehacty}.
Hence, the period could be related to the radius of the ball through the
relation $\tau_{r}=\frac{\pi}{\sqrt{\mu}}R^{3/2}$. Introducing the length
$a=\left(  \frac{1}{2}\right)  ^{2/3}R$, one has
\[
\mu \tau_{r}^{2}=4\pi^{2}a^{3}.  \ \ \ \ \ \ \  \qed
\]
\end{proof}

Thus, Kepler's third law appears to be generalized through the isochrone group.
Kepler's third law is mainly used for mass determination, as in, for example, the post-newtonian approximation
to estimate the mass of black holes.
For a Kepler potential, only one orbit is theoretically necessary to determine the mass of the
central attractive body given by $\mu$. For other isochrone potentials, using~\eqref{thm:KepLawTa}, only two
orbits would be necessary to determine the parameter $b$ and mass $\mu$ described by their isochrone potential.

\subsection{The Bertrand theorem\label{BTh}}

In 1873, J. Bertrand published a fascinating theorem: \emph{There are only
two central potentials for which all orbits with  
an initial velocity below a certain limit are closed, namely
the Keplerian and the harmonic potentials}.
While this fascinating result was
proved more than 140 years ago, the proof of this theorem has been retaining
the attention. According to the most recent reviews~\cite{chin} and works on this topic~\cite[chap.3]{albouy2002lectures}, it
has been proven using very different techniques: \cite{bertrand,Arnold,Lagrange,Jovanovic2015,5}, using global methods, sometimes stemming from the analysis of the precession
rate as initiated in proposition XLV of~\cite{Newton};  \cite{6,Goldstein,8,9}, developing perturbative expansions; \cite{10,11,12}, using inverse transformations methods; \cite{13}, by searching for
additional constants of motion; and \cite{Fejoz}, mainly using Birkhoff invariants along
circular orbits in a generic potential. Furthermore, the original proof does not mention the case
of collision orbits. We will therefore consider the result of Bertrand's theorem under the hypotheses
of orbits that are bounded in position and bounded away from 0. We propose here to show that, in fact,
Bertrand's theorem is a refined property of the isochrone one.

\begin{theorem}
\label{theob}In a given radial potential $\psi$, if all non-circular orbits that are bounded in position and bounded away from 0 are
closed, then $\psi$ is isochrone.
\end{theorem}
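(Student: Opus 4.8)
The plan is to reduce the statement to the azimuthal characterization of isochrony recalled in theorem~\ref{thm:characisopot} of appendix~\ref{appendix:isocharac}: a radial potential is isochrone \emph{if and only if} the increment $\Delta\varphi$, equivalently $n_\varphi=\Delta\varphi/2\pi$, depends on $\Lambda$ only and not on $\xi$. So it suffices to show that, under the hypothesis, $n_\varphi$ is independent of $\xi$ along every admissible line $\Lambda=\text{cst}$; the desired conclusion then follows immediately. Note that only this much weaker statement is needed, not the full Bertrand conclusion that $\psi$ is Keplerian or harmonic.

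First I would pin down the orbit class. In a radial potential, a non-circular orbit that is bounded in position and bounded away from $0$ has its radial coordinate oscillating strictly between two turning points $0<r_p<r_a<\infty$; it is therefore a non-circular \textsc{pro} in the sense of section~\ref{subsecHenonparabola}, necessarily with $\Lambda>0$ (a radial orbit with $\Lambda=0$ reaches the center and is excluded by hypothesis), and conversely every non-circular \textsc{pro} with $\Lambda>0$ is of this kind. Such a \textsc{pro} is a closed orbit exactly when, after one radial period, the azimuth has advanced by a rational multiple of $2\pi$, i.e. when $n_\varphi(\xi,\Lambda)\in\mathbb{Q}$. Hence the hypothesis says precisely: $n_\varphi(\xi,\Lambda)\in\mathbb{Q}$ for every pair $(\xi,\Lambda)$, $\Lambda>0$, at which a non-circular \textsc{pro} exists.

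Next I would exploit continuity. Fix an admissible $\Lambda_0>0$, so that (by the standing assumptions and lemma~\ref{lemme0}) the effective potential $\psi_e(r)=\Lambda_0^2/2r^2+\psi(r)$ possesses a strict local minimum $\xi_c$ at some radius $r_0$. For $\xi$ in the open interval $I_{\Lambda_0}$ running from $\xi_c$ up to the rim of that well, or up to $\psi_\infty$, whichever is smaller, there is a non-circular \textsc{pro} of parameters $(\xi,\Lambda_0)$, nonempty already because nearly circular (epicyclic) orbits exist just above $\xi_c$. The radial action $\mathcal{A}_r$ of \eqref{djde}, hence $n_\varphi=-\partial\mathcal{A}_r/\partial\Lambda$ of \eqref{djdl}, depends continuously on $(\xi,\Lambda)$ on this region: the square-root singularities of the integrand sit exactly at the turning points and are integrable, removed by the usual change of variable, and $r_p<r_a$ stays strict since the orbit is non-circular. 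Thus $\xi\mapsto n_\varphi(\xi,\Lambda_0)$ is a continuous, $\mathbb{Q}$-valued function on the interval $I_{\Lambda_0}$, hence constant there. Since $\Lambda_0$ was an arbitrary admissible angular momentum, $n_\varphi$ does not depend on $\xi$, and by theorem~\ref{thm:characisopot} the potential $\psi$ is isochrone.

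The points requiring care — and where I expect the real work — are: (i) checking that, for $\psi$ increasing, the set of energies $\xi$ giving a non-circular \textsc{pro} with a \emph{fixed} $\Lambda$ is genuinely an interval, so that a continuous rational-valued function on it must be constant; this rests on the monotonicity of the two branches of $\psi_e$ on either side of a well and on the fact that the apoastron $r_a$ remains finite exactly while $\xi<\psi_\infty$ (and if several wells occurred for the same $\Lambda$, the same local argument gives constancy on each, which is all that is needed since each forces $\partial n_\varphi/\partial\xi=0$); and (ii) the regularity of $\mathcal{A}_r$ and $n_\varphi$ in the parameters near a non-circular \textsc{pro}, i.e. the classical smooth dependence of the action integral with integrable endpoint singularities. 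Once these are established, the reduction to theorem~\ref{thm:characisopot} is purely formal, and the theorem follows.
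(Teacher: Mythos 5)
Your proposal is correct and follows essentially the same route as the paper's own proof: closedness forces $n_{\varphi}(\xi,\Lambda)\in\mathbb{Q}$, continuity of $n_{\varphi}=-\partial\mathcal{A}_{r}/\partial\Lambda$ together with the density of the irrationals forces it to be locally constant, hence $\partial n_{\varphi}/\partial\xi=0$, and theorem~\ref{thm:characisopot} concludes. Your version is in fact slightly more careful than the paper's, since you argue constancy in $\xi$ at each fixed $\Lambda$ over a connected interval of admissible energies rather than asserting global constancy of $n_{\varphi}$ outright.
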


\begin{proof}
In a given radial potential $\psi$, if all bounded and bounded away from 0 orbits are closed, the
increment of the azimuthal angle $\Delta \varphi$ during the transfer from
$r_{a}$ to $r_{p}$ is a fractional multiple of $2\pi$, i.e. the quantity
$n_{\varphi}=\frac{\Delta \varphi}{2\pi}\in \mathbb{Q}$. But, for a given radial
potential $\psi \left(  r\right)  $, we have that
\[
n_{\varphi}=-\frac{\partial \mathcal{A}_{r}}{\partial \Lambda}=\frac{1}{\pi}{\int_{r_{p}%
}^{r_{a}}}\frac{\Lambda}{r^{2}\sqrt{2\left[  \xi-\psi \left(  r\right)
\right]  -\dfrac{\Lambda^{2}}{r^{2}}}}dr
\]
is a continuous mapping $\left(  \xi,\Lambda \right)  \mapsto$ $n_{\varphi}$.
By continuity, because the set $\mathbb{R}\setminus \mathbb{Q}$ is dense in
$\mathbb{R}$, one can conclude that in order to only have closed orbits,
$n_{\varphi}=cst\in \mathbb{Q}$. In these conditions we then have
\[
0=\frac{\partial n_{\varphi}}{\partial \xi}.
\]
This characterizes an isochrone potential according to theorem~\ref{thm:characisopot}
of appendix~\ref{appendix:isocharac}.
The potentials of the form $-\frac{\mu}{r^\alpha}$ with $\alpha>2$ are excluded because all orbits that
are bounded in position either collide at the origin or are circular. 
\qed
\end{proof}

Using our study we can go further because we have obtained a geometric and
algebraic description of the whole set of isochrone potentials. More
specifically, we have obtained in table~\eqref{table_tau_n} the explicit value
of $n_{\varphi}$ for all isochrone potentials. The completeness of our
description and this table enable us to claim that Bertrand's theorem is 
a corollary of theorem~\ref{theob}.

\begin{corollary}
The Bertrand Theorem !
There are only
two central potentials for which all non-circular orbits that are bounded in position and bounded away from 0 
are closed, namely
the Keplerian and the harmonic potentials.
\end{corollary}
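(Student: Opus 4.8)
The plan is to obtain the corollary as a direct consequence of Theorem~\ref{theob} together with the complete classification of isochrone potentials. First I would apply Theorem~\ref{theob}: if in a radial potential $\psi$ every non-circular orbit that is bounded in position and bounded away from $0$ (call these the \emph{admissible} orbits) is closed, then $\psi$ is isochrone, so by Theorem~\ref{theo1} there are constants $\epsilon,\lambda$ and a reduced isochrone $\psi^{\mathrm{red}}\in\{\psi_{\mathrm{ke}},\psi_{\mathrm{ha}},\psi_{\mathrm{he}}^{\mathrm{red}},\psi_{\mathrm{bo}}^{\mathrm{red}}\}$ with $\psi=J_{\epsilon,\lambda}(\psi^{\mathrm{red}})=\psi^{\mathrm{red}}+\epsilon+\tfrac{\lambda}{2r^2}$. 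It then remains to decide which of these four families, and for which values of $\epsilon$ and $\lambda$, actually have \emph{all} their admissible orbits closed.

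As in the proof of Theorem~\ref{theob}, closedness of every admissible orbit forces $n_\varphi=\Delta\varphi/2\pi\in\mathbb{Q}$ on all of them; since $(\xi,\Lambda)\mapsto n_\varphi$ is continuous and $\mathbb{R}\setminus\mathbb{Q}$ is dense, this can only happen if $n_\varphi$ is a \emph{constant} rational. I would then compute $n_\varphi$ for $\psi=J_{\epsilon,\lambda}(\psi^{\mathrm{red}})$ from Proposition~\ref{prop:TnaffineTransf}, namely $n_\varphi(\xi;\Lambda)=n_\varphi^{\mathrm{red}}\!\big(\xi-\epsilon;\sqrt{\lambda+\Lambda^2}\big)\,\frac{\Lambda}{\sqrt{\lambda+\Lambda^2}}$ (valid for $\lambda+\Lambda^2>0$), inserting the values of $n_\varphi^{\mathrm{red}}$ read from table~\eqref{table_tau_n} in Proposition~\ref{prop:Tniso}.

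Then I go case by case. For $\psi^{\mathrm{red}}=\psi_{\mathrm{ke}}$ this gives $n_\varphi=\Lambda/\sqrt{\lambda+\Lambda^2}$ and for $\psi^{\mathrm{red}}=\psi_{\mathrm{ha}}$ it gives $n_\varphi=\Lambda/\big(2\sqrt{\lambda+\Lambda^2}\big)$; each is independent of $\Lambda$ exactly when $\lambda=0$, in which case it equals the constant $1$, resp. $\tfrac12$, and since $\epsilon$-transvections change neither the orbits nor $n_\varphi$, these are precisely the Keplerian and harmonic potentials up to an additive constant — and conversely, being isochrone with constant rational $n_\varphi$, all their non-circular orbits are indeed closed. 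For $\psi^{\mathrm{red}}=\psi_{\mathrm{he}}^{\mathrm{red}}$ or $\psi_{\mathrm{bo}}^{\mathrm{red}}$ the table gives $n_\varphi^{\mathrm{red}}(\Lambda)=\tfrac12\pm\frac{\Lambda}{2\sqrt{4b\mu+\Lambda^2}}$, a strictly monotone and hence non-constant function of $\Lambda$, which remains non-constant after multiplication by the factor $\Lambda/\sqrt{\lambda+\Lambda^2}$; moreover every periodic radial orbit in these potentials has $r_p>0$ because their effective potentials diverge at the origin, so all these orbits are automatically bounded away from $0$. Hence one may choose an admissible orbit with $n_\varphi$ irrational, which is not closed — contradicting the hypothesis. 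Therefore $\psi$ must be Keplerian or harmonic up to an additive constant; the power-law cases $-\mu/r^\alpha$ with $\alpha>2$ are already excluded by Theorem~\ref{theob}, their position-bounded orbits either colliding with the origin or being circular. This is Bertrand's statement.

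The part I expect to require the most care is the bookkeeping around the gauge parameter $\lambda$ and the additive constant $\epsilon$: one must invoke Proposition~\ref{prop:TnaffineTransf} under its hypothesis $\lambda+\Lambda^2>0$, keep track of the fact that a transvection alters neither the orbits nor $n_\varphi$ (so the honest conclusion is "up to an additive constant"), and check that varying $\Lambda$ at fixed energy in the Hénon and bounded potentials still produces legitimate non-circular periodic orbits that are bounded away from $0$, so that they genuinely fall under the hypothesis. Everything else is a direct reading of table~\eqref{table_tau_n}.
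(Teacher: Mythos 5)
Your proposal is correct and follows essentially the same route as the paper: reduce to isochrone potentials via Theorem~\ref{theob}, then use the classification of Theorem~\ref{theo1} together with the explicit $n_\varphi$ formulae of Propositions~\ref{prop:Tniso} and~\ref{prop:TnaffineTransf} to observe that the factors $\frac{\Lambda}{2\sqrt{4b\mu+\Lambda^2}}$ and $\frac{\Lambda}{\sqrt{\lambda+\Lambda^2}}$ prevent $n_\varphi$ from being a constant rational except for the ungauged Kepler and harmonic cases. Your version is merely more explicit than the paper's about the gauge parameter $\lambda$ and the admissibility of the test orbits, which is a welcome but inessential refinement.
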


\begin{proof}
As the quantities $\frac{\Lambda}{2\sqrt{4b\mu+\Lambda^{2}}}$ and $\frac{\Lambda}{\sqrt{\lambda+\Lambda^2}}$ in proposition~\ref{prop:Tniso}
and~\ref{prop:TnaffineTransf} cannot be
rational for each value of $\Lambda$, among all isochrone potentials, only
$\psi_{\mathrm{ke}}$ and $\psi_{\mathrm{ha}}$ have rational $n_{\varphi}$ for all orbits, i.e.
for all values of $\left(  \xi,\Lambda \right)  $. 
\qed

\end{proof}

In a given potential, the fact that all bounded orbits are closed, namely
Bertrand's property, is then a supplementary restriction to the isochrone one.

\section{Conclusion}

In this paper we have revisited the set of isochrone orbits in radial 3D
potentials. These models concern self-organised radial systems with long-range
interactions like gravitation or electrostatics with one kind of electric
charge. Let us summarize the main results we have obtained:

\begin{enumerate}
\item We have clarified the original proof by Michel H\'{e}non~\cite{Henon58}
that isochrone potentials are contained in a branch of a parabola in adapted
coordinates (theorem~\ref{newhenontheo}). These parabolas characterize the property of isochrony.

\item Taking into account very general properties of potentials in physics ---
i.e. invariance under the addition of a constant, conservation of the energy
and angular momentum for isolated radial systems --- we have given a
geometrical characterization and classification of the set of all isochrone orbits/potentials
that we have completed. This
characterization (theorem \ref{theo1}) is based on a subgroup $\mathbb{A}$ of the real affine group.

\item We have shown (theorem~\ref{theo1} and sec.~\ref{subsec:affinegroupaction}) 
that under the group action of $\mathbb{A}$, 
any isochrone potential is in the orbit of one of the four fundamental potentials:
Kepler, H\'{e}non, Bounded or Harmonic (definition~\ref{def:redphysgau}).

\item Focusing on orbits, we have proposed a mapping which generalizes the Bohlin transformation to all isochrone potentials.
This mapping, summarized in theorem \ref{prop:Bohlin},  connects any Keplerian elliptic orbit to a particular isochrone radially periodic orbit. Reciprocally, by theorem \ref{thm:orbitboost}, we have shown how to construct the elliptic Keplerian orbit connected to any isochrone periodic orbit. This mapping is based on a particular linear transformation, that we call a bolst, which preserves the orbital differential equation for a given value of the angular momentum.

\item With the set of symmetric bolsts, namely Ibolsts, we have revealed the relative behavior of the isochrone property of orbits/potentials. We have detailed in sec.~\ref{subsec:isochronerelativity} a lot of similarities between the special theory of relativity and the isochrony of orbits in radial potentials. In this view, a given orbit in an isochrone potential is seen as a Keplerian orbit in its special frame. This is the Isochrone Relativity presented in sec. \ref{subsec:isochronerelativity}. The time and energy are relative to each orbit which defines a frame of reference. 
Various examples were presented and illustrated to construct isochrone orbits in isochrone potentials.

\item The explicit expression of the radial ($\tau_r$) and azimuthal ($n_{\varphi}$) periods was calculated for all fundamental isochrone potentials. These results are presented in proposition~\ref{prop:Tniso}. The computation of these periods in physical or gauged isochrones is possible using the results presented in proposition \ref{prop:TnaffineTransf}.

\item We have proposed a generalization of the quadricentennial Kepler's Third Law in theorem~\ref{thm:KepLawTa}. While this classic law involves the semi major axis of \emph{closed} Keplerian orbits, we define characteristic lengths in each isochrone potential that are related to the radial period in the famous 3/2 power equation. This rational value 3/2 is well known to be related to the mechanical similarity involved in the Kepler potential and its $-1$ homogeneity property (e.g. \cite{Landaumeca} p. 22-24). In this view, the generalization of the Kepler's Third Law to any isochrone is not surprising since we have seen that any isochrone is a Kepler in the adequate referential.

\item Noting that both the radial period $\tau_{r}$ and the precession rate
$n_{\varphi}$ are partial derivatives of the same quantity, i.e. the radial
action $\mathcal{A}_{r}$, we observed  
that the famous Bertrand's theorem is a specific
property of isochrones. Once again this property could be interpreted as a consequence of the isochrone relativity.
\end{enumerate}

The essence of isochrony is Keplerian.
As isochrony is characterized by the parabolic property in H\'{e}non's variables, we understand 
the linear transformations that act on these parabolas and are shaped by the bolst $B_{\alpha,\beta}$ play crucial roles. Merging these ideas, we conjecture that a theory of general relativity of radial potentials could be formulated using non-linear transformations. This theory could relate any orbit in a radial potential to an associated orbit in a Kepler potential. 

In a forthcoming paper we will explain the physical importance of the isochrone potential during the formation and evolution process of
self-gravitating systems.

\section*{Appendix}

\appendix

\section{Isochrone characterization}
\label{appendix:isocharac}
Let us recall that the radial action $\mathcal A_r$ gives the radial period $\tau_r$ and the increment of the azimuthal angle $n_\varphi$
through~\eqref{djde} and~\eqref{djdl} in sec.~\ref{subsecHenonparabola}:
$$
   \frac{\partial \mathcal A_r}{\partial \xi} = \frac{\tau_r}{2\pi} 
   \ \text{ and } \
   - \frac{\partial \mathcal A_r}{\partial \Lambda} = \frac{\Delta\varphi}{2\pi} = n_\varphi.
$$
The exclusive $\xi-$dependency of $\tau_r$ is the fundamental isochrone property used by Michel H{\'e}non to define isochrone potentials. 
After his analysis, he remarked the exclusive $\Lambda-$dependency of $n_\varphi$ for his potential. The following theorem
establishes the equivalence of properties which can characterize isochrone potentials as a whole.
\begin{theorem}
\label{thm:characisopot}
Consider a central potential $\psi$.
Then the following properties are equivalent:
\begin{enumerate}
   \item \label{item:Txi} For any orbit $(\xi,\Lambda)$ in $\psi$, $\tau_r$ only depends on $\xi$.
   \item \label {item:nL} For any orbit $(\xi,\Lambda)$ in $\psi$, $n_\varphi$ only depends on $\Lambda$.
   \item \label{item:action} There exist two function $f$ and $g$ such that for any $(\xi,\Lambda)$ the radial action is $\mathcal A_r(\xi,\Lambda)=f(\xi)+g(\Lambda)$.
\end{enumerate}
\end{theorem}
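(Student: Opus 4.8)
The plan is to read the three properties through the two first-order identities recalled just above the statement, $\dfrac{\partial \mathcal A_r}{\partial \xi}=\dfrac{\tau_r}{2\pi}$ and $\dfrac{\partial \mathcal A_r}{\partial \Lambda}=-n_\varphi$, and to recognise property~(\ref{item:action}) as a separation-of-variables statement for the function $(\xi,\Lambda)\mapsto\mathcal A_r(\xi,\Lambda)$. Two implications are then immediate: if $\mathcal A_r=f(\xi)+g(\Lambda)$ then $\tau_r=2\pi f'(\xi)$ depends only on $\xi$, which is~(\ref{item:Txi}), and $n_\varphi=-g'(\Lambda)$ depends only on $\Lambda$, which is~(\ref{item:nL}). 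It therefore suffices to establish $(\ref{item:Txi})\Rightarrow(\ref{item:action})$ and $(\ref{item:nL})\Rightarrow(\ref{item:action})$; together with the two trivial implications this produces the chain $(\ref{item:Txi})\Rightarrow(\ref{item:action})\Rightarrow(\ref{item:nL})\Rightarrow(\ref{item:action})\Rightarrow(\ref{item:Txi})$ and hence the full equivalence.

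Before that I would pin down the set $\mathcal D$ of admissible pairs $(\xi,\Lambda)$, i.e. those for which a \textsc{pro} exists. By Lemma~\ref{lemme0} and the analysis of the effective potential, for each admissible $\Lambda$ the energy $\xi$ runs over the interval $[\xi_c(\Lambda),\psi_\infty)$, where $\xi_c(\Lambda)=\min\psi_e$ corresponds to the circular orbit and $\psi_\infty$ is independent of $\Lambda$. Since $\psi_e(r)=\Lambda^2/2r^2+\psi(r)$ increases pointwise with $\Lambda$, the function $\Lambda\mapsto\xi_c(\Lambda)$ is continuous and nondecreasing; consequently $\mathcal D$ is connected and, crucially, all vertical slices $\{\xi:(\xi,\Lambda_0)\in\mathcal D\}$ and all horizontal slices $\{\Lambda:(\xi_0,\Lambda)\in\mathcal D\}$ are intervals. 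I also record that on the interior of $\mathcal D$ the action $\mathcal A_r$ is $C^1$ with the stated integral expressions, and that $\tau_r$ and $n_\varphi$ are continuous in their variables.

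For $(\ref{item:Txi})\Rightarrow(\ref{item:action})$: assuming $\tau_r$ is a function of $\xi$ alone, set $h(\xi):=\tau_r(\xi)/2\pi$, which is continuous on the whole $\xi$-range swept by $\mathcal D$ (each such $\xi$ is reached by at least one admissible $\Lambda$). Fix a reference energy $\xi_\ast$ in that range, define $f(\xi):=\int_{\xi_\ast}^{\xi}h(\xi')\,d\xi'$, and put $G(\xi,\Lambda):=\mathcal A_r(\xi,\Lambda)-f(\xi)$. Then $\partial G/\partial\xi=h(\xi)-h(\xi)=0$ at every point of $\mathcal D$; since the horizontal slices of $\mathcal D$ are intervals, $G(\cdot,\Lambda)$ is constant on each of them, so $G(\xi,\Lambda)=g(\Lambda)$ for some function $g$ and $\mathcal A_r(\xi,\Lambda)=f(\xi)+g(\Lambda)$. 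The implication $(\ref{item:nL})\Rightarrow(\ref{item:action})$ is the mirror image: with $n_\varphi$ a function of $\Lambda$ alone one sets $g(\Lambda):=-\int_{\Lambda_\ast}^{\Lambda}n_\varphi(\Lambda')\,d\Lambda'$ and uses that the vertical slices of $\mathcal D$ are intervals to conclude that $\mathcal A_r(\xi,\Lambda)-g(\Lambda)$ has vanishing $\Lambda$-derivative, hence depends on $\xi$ only.

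The one genuine obstacle is the shape of $\mathcal D$: the passage from ``a partial derivative vanishes'' to ``the function is independent of that variable'' is false on domains with disconnected slices, so one really must exploit the monotonicity of $\Lambda\mapsto\xi_c(\Lambda)$ and the $\Lambda$-independence of $\psi_\infty$ to guarantee that the slices of $\mathcal D$ are intervals. Once this is in place, together with the $C^1$ regularity of $\mathcal A_r$ on $\mathcal D$ (standard differentiation under the integral sign away from the turning points), the remainder is the elementary one-variable fact that a $C^1$ function with vanishing derivative on an interval is constant.
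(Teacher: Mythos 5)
Your proof is correct and follows essentially the same route as the paper's: the implications $(\ref{item:action})\Rightarrow(\ref{item:Txi}),(\ref{item:nL})$ by direct differentiation, and the converses by integrating $\tau_r/2\pi$ (resp.\ $-n_\varphi$) in the relevant variable and observing that the remainder is independent of that variable. The only difference is that you make explicit the connectedness of the slices of the admissible domain $\mathcal{D}$ of pairs $(\xi,\Lambda)$, a point the paper's one-line ``by integration'' glosses over; this is a worthwhile precision but not a different method.
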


\begin{proof}
   The separation of variables in the radial action expressed in~\ref{item:action} implies the two properties~\ref{item:Txi} and~\ref{item:nL} by direct differentiation with respect to $\xi$ for~\ref{item:Txi} 
   and $\Lambda$ for~\ref{item:nL}. 
   \\
   Assume~\ref{item:nL} is true for any orbit in the central potential $\psi$. Then $\frac{\partial \mathcal A_r}{\partial \xi} = \frac{\tau_r(\xi)}{2\pi} $ and
   by integration there exists a function $g$, constant with respect to $\xi$, such that $\mathcal A_r(\xi, \Lambda) = f(\xi) + g(\Lambda)$, where $f$ is a primitive of $\frac{\tau_r}{2\pi}$. We thus recover~\ref{item:action}.
   \\
   In the same way, assuming~\ref{item:nL} implies~\ref{item:action}. \qed
\end{proof}

\section{Proof of a parabola property\label{appendixa}}

Michel H{\'{e}}non has shown in~\cite{Henon58} the equivalence between the
isochrony of a potential $\psi$ and the parabolic property of the graph
$\mathcal{C}$ of $f:x\rightarrow x\psi$ associated with it. We propose here a
different proof based on the analyticity of the potential.

We call $\left(  \mathcal{P}\right)  $ this parabolic property, and it can
be formulated as follows.

A function $f:I\rightarrow \mathbb{R}$ has the property $\left(  \mathcal{P}%
\right)  $ if and only if :

\begin{enumerate}
\item $f$ is either convex or concave on the real interval $I$, i.e.
$f^{\prime \prime}>0$ or $f^{\prime \prime}<0$ on $I.$

\item For any $P_{0}$ belonging to its graph $\mathcal{C}$, and for any line
$\mathcal{L}$ parallel to the tangent $\mathcal{T}_{P_{0}}\left(  \mathcal{C}\right)
$, the square length of the projected chord $\left \vert x_{a,1}-x_{p,1}\right \vert $ is
proportional to the distance between the chord and the tangent to the curve that is parallel to the chord. 
The proportional relation holds equivalently with the vertical distance $P_{0}I$ 
between $\mathcal{T}_{P_{0}}\left(  \mathcal{C}\right)$ and $\mathcal{L}$.
In figure \ref{parabola} we have $\mathcal{T}_{P_{0}}\left(
\mathcal{C}\right)  :y=\xi x-\Lambda_{0}^{2}$ and $\mathcal{L}:y=\xi
x-\Lambda_{1}^{2}$.
\end{enumerate}
In terms of function, this last point translates as follows:%
\[%
\begin{array}
[c]{cc}%
\left(  \mathcal{P}\right)  \ : & \left \vert
\begin{array}
[c]{l}%
\forall x_{0}\in I,\exists \varpi(x_{0})\in \mathbb{R}_{+} \text{ such that
}\forall \lambda>0\text{, when they exist,}\\
\text{the two solutions }x_{p}\text{ and }x_{a}\text{ of the equation }\\
f(x)-f(x_{0})=\lambda+f^{\prime}(x_{0})\left(  x-x_{0}\right)  \\
\text{satisfy the relation }\left(  x_{a}-x_{p}\right)  =\varpi(x_{0}%
)\sqrt{\lambda}\text{ with }x_{a}%
>x_{p} .%
\end{array}
\right.
\end{array}
\]

Michel H\'{e}non's equivalence then corresponds to the following theorem.

\begin{theorem}
\label{newhenontheo}
Let $f:I\to \mathbb{R}$ be an analytic real function on an interval $I\subset \mathbb{R}$.
Then  the graph of $f$ is a parabola if and only if $f$ has the property $(\mathcal{P})$.
\end{theorem}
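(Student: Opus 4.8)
The plan is to work directly with the property $(\mathcal P)$ and the analyticity hypothesis. One direction is routine: if the graph of $f$ is a parabola, we may choose affine coordinates in which $f(x)=\alpha x^2+\beta x+\gamma$ with $\alpha\ne0$; then $f''=2\alpha$ has constant sign, and the equation $f(x)-f(x_0)-f'(x_0)(x-x_0)=\lambda$ reads $\alpha(x-x_0)^2=\lambda$, so $x_a-x_p=2\sqrt{\lambda/|\alpha|}$, giving $(\mathcal P)$ with $\varpi(x_0)=2/\sqrt{|\alpha|}$ constant. (A short remark is needed that $(\mathcal P)$ as stated is invariant under the affine changes of coordinates used, since both the projected chord length and the tangent/chord vertical separation transform compatibly — this is exactly the content of the transvection/gauge lemmas already in the paper, but here it is just an elementary shear invariance.)

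For the converse, assume $f$ is analytic and has $(\mathcal P)$; without loss of generality $f''>0$ on $I$. Fix $x_0$ and set $g(x)=f(x)-f(x_0)-f'(x_0)(x-x_0)$, so $g(x_0)=g'(x_0)=0$ and $g''=f''>0$; thus $g$ is a strictly convex analytic function vanishing to order exactly $2$ at $x_0$. The relation in $(\mathcal P)$ says: for all small $\lambda>0$, the two roots $x_p(\lambda)<x_0<x_a(\lambda)$ of $g(x)=\lambda$ satisfy $x_a(\lambda)-x_p(\lambda)=\varpi(x_0)\sqrt{\lambda}$. The key step is to translate this into a functional identity for $g$. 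Write $g(x)=h\big((x-x_0)\big)$ where $h$ is analytic, $h(0)=h'(0)=0$, $h''(0)>0$. Since $h$ is strictly convex with a minimum at $0$, on each side it has an analytic inverse branch; let $u=u(\lambda)\ge0$ and $v=v(\lambda)\ge0$ be defined by $h(u)=\lambda$, $h(-v)=\lambda$. The hypothesis is $u(\lambda)+v(\lambda)=\varpi\sqrt{\lambda}$ for all small $\lambda>0$, where $\varpi=\varpi(x_0)$.

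Now I would exploit analyticity via Puiseux/power-series expansions. Because $h''(0)>0$, near $0$ we have $h(t)=\tfrac12 h''(0)t^2\,(1+O(t))$, so both $u(\lambda)$ and $v(\lambda)$ are analytic functions of $\sqrt{\lambda}$: writing $\sqrt{\lambda}=s$, there are power series $u=a_1 s+a_2 s^2+\cdots$ and $v=b_1 s+b_2 s^2+\cdots$ with $a_1=b_1=\sqrt{2/h''(0)}>0$. The constraint $u+v=\varpi s$ forces $a_k+b_k=0$ for all $k\ge2$, i.e. $b_k=-a_k$. On the other hand $h(u)=s^2=h(-v)$ for all small $s$; substituting the two series into $h$ and matching coefficients gives a hierarchy of polynomial relations among the Taylor coefficients $h^{(j)}(0)$ and the $a_k$'s. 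Comparing the expansions of $h(u(\lambda))$ and $h(-v(\lambda))$ and using $v=-u+\varpi s$ — equivalently $-v = u-\varpi s$ — I expect to derive, order by order, that $h$ must be an even quadratic after the shift by $\varpi s/2$; more precisely the cleanest route is: set $w = x-x_0-\tfrac{\varpi}{2}\sqrt{\lambda}\cdot(\text{sign})$... rather, observe the midpoint $\tfrac12(x_a+x_p)$ of the two roots is $x_0+\tfrac12(u-v)$, and show this midpoint, as a function of $\lambda$, together with $u+v=\varpi\sqrt\lambda$ and $g(x_p)=g(x_a)$, forces $g$ to be a perfect square $g(x)=c\,(x - m(x_0))^2$ locally, whence $f''$ is constant on a neighborhood of $x_0$. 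Since $x_0\in I$ was arbitrary and $f$ is analytic, $f''\equiv\text{const}$ on $I$, so $f$ is a quadratic polynomial and its graph is a parabola.

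The main obstacle is the converse's coefficient-matching step: making rigorous the passage from "the root-separation is exactly $\varpi\sqrt\lambda$ for all small $\lambda$" to "$g$ is locally a perfect square." The honest way to do it is to note that $\Phi(\lambda):=x_a(\lambda)-x_p(\lambda)$ and $\Psi(\lambda):=x_a(\lambda)+x_p(\lambda)$ are determined by $g$, that $\Phi(\lambda)=\varpi\sqrt\lambda$ is equivalent to $x_a$ and $x_p$ being the two roots of $t^2-\Psi(\lambda)t + \tfrac14(\Psi(\lambda)^2-\varpi^2\lambda)=0$, and then that $g(x_a)=g(x_p)=\lambda$ together with analyticity pins down $g$. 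Here I would either (i) differentiate the identities $g(x_a(\lambda))=\lambda=g(x_p(\lambda))$ in $\lambda$, combine with $x_a'-x_p'=\tfrac{\varpi}{2\sqrt\lambda}$, and solve the resulting ODE system, or (ii) invoke the analytic implicit function theorem to get convergent series and match coefficients as sketched. Both are elementary but fiddly; analyticity is what rules out pathological non-quadratic convex functions that could satisfy the relation only asymptotically, and it is the hypothesis that must be used essentially (a merely $C^\infty$ $f$ would require a genuinely global argument, which is presumably why the authors chose the analytic route).
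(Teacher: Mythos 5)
Your setup for the converse (power series of the two root branches in $s=\sqrt{\lambda}$, and the observation that $u+v=\varpi s$ kills half of the coefficients of the inverse branch $H$) is exactly the mechanism of the paper's key lemma. But the conclusion you are driving toward is false, and this is a genuine gap, not a fiddly detail. Property $(\mathcal{P})$ does \emph{not} force $f''$ to be constant: the function $f(x)=-\sqrt{x}$ satisfies $(\mathcal{P})$ (a direct computation with $u=\sqrt{x}$ gives $x_a-x_p=4\sqrt{2}\,x_0^{3/4}\sqrt{\lambda}$), its graph is a \emph{laid} parabola, and it is precisely the Hénon form $Y=-\mu\sqrt{2x}$ of the Kepler potential --- the single most important example in the paper. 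Consequently no correct argument can end with ``$g$ is locally a perfect square, hence $f''\equiv\mathrm{const}$, hence $f$ is a quadratic polynomial.'' Concretely, the coefficient matching at a fixed $x_0$ only annihilates the odd coefficients $h_3,h_5,\dots$ of $H$; the even ones $h_2,h_4,\dots$ remain free, so $g$ is \emph{not} pinned down to $c\,(x-m)^2$ at a single point. What the single-point analysis does yield (from the first nontrivial identification $G_3=2h_2^2$, i.e.\ $G_3=2G_2^2$) is the pointwise relation $5\bigl[f^{(3)}(x_0)\bigr]^2=3\,f^{(4)}(x_0)\,f''(x_0)$. The missing step is to recognize this as a fourth-order ODE valid at \emph{every} $x_0\in I$ and to integrate it: setting $w=f''$, one gets $5(w')^2=3w''w$, whose solutions are $w=\mathrm{cst}$ and $w=(\varepsilon x+\lambda)^{-3/2}$, i.e.\ $f$ is a quadratic \emph{or} $f(x)=\pm\sqrt{\varepsilon x+\lambda}+ax+b$ --- exactly the two families of parabola graphs. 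Your route (i)/(ii), even if executed flawlessly, would either reproduce this ODE or prove a false statement.

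The forward direction has the same blind spot: ``we may choose affine coordinates in which $f(x)=\alpha x^2+\beta x+\gamma$'' is not available for tilted or laid parabolas. The transformations under which $(\mathcal{P})$ is invariant (rescaling $f\mapsto af$, adding affine functions, affine reparametrization of $x$) preserve the verticality of the parabola's axis; turning $y=\sqrt{x}$ into $y=x^2$ requires a rotation, and $(\mathcal{P})$ is not rotation-invariant since it singles out horizontal chord projections and vertical distances. So the forward direction needs two base cases, $x\mapsto x^2$ and $x\mapsto\sqrt{x}$, each checked directly and then propagated by the affine invariance lemma.
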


The proof of this result will be done in several steps. The first one is a reduction procedure given by the following lemma.

\begin{lemma}
\label{newwwlemm1}Let $g:I\rightarrow \mathbb{R}$ be a real analytic function
satisfying property $\left(  \mathcal{P}\right)  $. Then we have
\begin{enumerate}
\item For any real constant $a\neq0$, $f:=ag$ satisfies $\left(
\mathcal{P}\right)  ;$

\item For any constants $\left(  \varepsilon,\lambda \right)  \in \mathbb{R}%
^{2}$, $f\left(  x\right)  =g\left(  x\right)  +\varepsilon x+\lambda$
satisfies $\left(  \mathcal{P}\right)  ;$

\item For any constants $\left(  \varepsilon,\lambda \right)  \in \mathbb{R}%
^{2}$, with $\varepsilon \neq0$, $f\left(  x\right)  :=g\left(  \varepsilon
x+\lambda \right)  $ satisfies $\left(  \mathcal{P}\right)$.
\end{enumerate}
\end{lemma}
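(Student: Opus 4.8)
The plan is to check, for each of the three operations, the two clauses of property $(\mathcal P)$ — the sign-definiteness of $f''$ and the chord relation $x_a-x_p=\varpi(x_0)\sqrt{\lambda}$ — directly, by substituting the transformed function into the defining equation of $(\mathcal P)$ and seeing how the height parameter (which I rename $\mu$, to avoid a clash with the translation constant $\lambda$ appearing in the lemma) and the root gap transform. In every case the second derivative is computed in one line, and the chord equation for $f$ reduces, after an elementary change of variable, to the chord equation for $g$, which holds by hypothesis; the new proportionality function $\varpi_f$ is then read off. No step is deep; the only thing demanding attention is keeping the signs straight.

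For item (1), $f=ag$ gives $f''=ag''$, so $f$ is convex or concave on $I$ according to $\operatorname{sign}(a)$; dividing $f(x)-f(x_0)=\mu+f'(x_0)(x-x_0)$ by $a$ produces exactly the defining equation of $g$ with height parameter $\mu/a$, whence $x_a-x_p=\varpi_g(x_0)\sqrt{\mu/a}$ and $\varpi_f(x_0)=\varpi_g(x_0)/\sqrt{|a|}$ (when $a<0$ one simply reads $(\mathcal P)$ with the parallel lines on the side of the tangent where the chords exist, the convexity type having flipped). For item (2), $f=g+\varepsilon x+\lambda$ gives $f''=g''$, and since $f(x)-f(x_0)=g(x)-g(x_0)+\varepsilon(x-x_0)$ and $f'(x_0)=g'(x_0)+\varepsilon$, the term $\varepsilon(x-x_0)$ cancels from both sides of the defining equation, which becomes that of $g$ verbatim; hence the roots $x_p,x_a$ are unchanged and $\varpi_f=\varpi_g$.

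For item (3), write $\phi(x)=\varepsilon x+\lambda$, so $f=g\circ\phi$ is defined on the interval $\phi^{-1}(I)$ and $f''(x)=\varepsilon^2 g''(\phi(x))$ has the sign of $g''$, giving the convexity clause. Putting $u=\phi(x)$, $u_0=\phi(x_0)$ and using $x-x_0=(u-u_0)/\varepsilon$ and $f'(x_0)=\varepsilon g'(u_0)$, the defining equation for $f$ at $x_0$ turns, after cancelling $\varepsilon$, into the defining equation for $g$ at $u_0$ with the same $\mu$; thus its two roots in $u$ obey $|u_a-u_p|=\varpi_g(u_0)\sqrt{\mu}$ whenever they exist, and since $|x_a-x_p|=|u_a-u_p|/|\varepsilon|$ we get $\varpi_f(x_0)=\varpi_g(\varepsilon x_0+\lambda)/|\varepsilon|$, the ordering $x_a>x_p$ corresponding to $u_a>u_p$ if $\varepsilon>0$ and to $u_a<u_p$ if $\varepsilon<0$.

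The only place one can slip is exactly this sign bookkeeping: tracking the convexity type of $f$ (notably for $a<0$), making sure the parameter $\mu>0$ in $(\mathcal P)$ for $f$ corresponds to a parameter of the appropriate sign for $g$ so that chords actually exist, and handling the orientation-reversing cases ($a<0$ in (1), $\varepsilon<0$ in (3)) when matching $x_a>x_p$. It is also worth noting in passing that each of the three operations preserves real-analyticity, so the standing hypothesis of Theorem~\ref{newhenontheo} is propagated along with $(\mathcal P)$.
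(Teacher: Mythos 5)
Your direct verification is correct in all three cases, including the careful bookkeeping of how the height parameter and the root gap rescale ($\varpi_f=\varpi_g/\sqrt{|a|}$, $\varpi_f=\varpi_g$, and $\varpi_f(x_0)=\varpi_g(\varepsilon x_0+\lambda)/|\varepsilon|$ respectively) and of the orientation and convexity subtleties when $a<0$ or $\varepsilon<0$. The paper itself declares the proof ``quite obvious'' and leaves it to the reader, so your argument simply supplies, correctly, the routine substitution-and-change-of-variable computation the authors had in mind.
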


This statement indicates that property $\left(  \mathcal{P}\right)  $ is
stable by affine transformations acting on the graph of the considered
function. Its proof is quite obvious and is left to the reader.

Any graph of a parabola can be obtained by the transformations of lemma~\ref{newwwlemm1} of 
the graphs of $x\mapsto \sqrt{x}$ or $x\mapsto x^2$. 
It follows that, if the
graph of $f$ is a parabola, then $f$ satisfies the simple implication of the theorem.

In order to have the converse implication, i.e. $\left(  \mathcal{P}\right)
\implies \mathcal{C}$ is a parabola, we now consider the simple case where, in
figure \ref{parabola}, $\mathcal{T}_{P_{0}}\left(  \mathcal{C}\right)  $ is horizontal.

\begin{lemma}
\label{newlemmm2}If $\varphi:I\rightarrow \mathbb{R}$ is a real analytic
function and if at $x_{0}\in I$ we have $\varphi^{\prime}\left(  x_{0}\right)
=0$ and $\varphi^{\prime \prime}\left(  x_{0}\right)  =2$, then%
\begin{equation}
5\left[  \varphi^{\left(  3\right)  }\left(  x_{0}\right)  \right]
^{2}=6\varphi^{\left(  4\right)  }\left(  x_{0}\right)  \text{.}%
\label{neweqB1}
\end{equation}
\end{lemma}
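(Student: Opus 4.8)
The plan is to extract from property $(\mathcal{P})$ — which is the standing hypothesis on $\varphi$ in this part of the proof of Theorem~\ref{newhenontheo} — applied at the point $x_{0}$ where the tangent is horizontal, a single algebraic relation between $\varphi^{(3)}(x_{0})$ and $\varphi^{(4)}(x_{0})$ by comparing low-order Taylor coefficients. First I would normalise: translating horizontally and vertically, operations under which both $(\mathcal{P})$ and the desired identity are invariant, we may assume $x_{0}=0$ and $\varphi(0)=0$, so that near $0$
\[
\varphi(x)=x^{2}+a_{3}x^{3}+a_{4}x^{4}+\cdots,\qquad a_{3}=\frac{\varphi^{(3)}(0)}{6},\quad a_{4}=\frac{\varphi^{(4)}(0)}{24},
\]
using $\varphi'(0)=0$ and $\varphi''(0)=2$. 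Since $g(x):=\varphi(x)/x^{2}=1+a_{3}x+a_{4}x^{2}+\cdots$ is analytic with $g(0)=1$, the function $\psi(x):=x\sqrt{g(x)}$ is analytic near $0$ with $\psi(0)=0$ and $\psi'(0)=1$; hence it is a local analytic diffeomorphism, with analytic inverse $\chi=\psi^{-1}$.

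The key observation is that, because the tangent at $x_{0}=0$ is horizontal, the chord equation in $(\mathcal{P})$ reduces to $\varphi(x)=\lambda$, which — as $\varphi=\psi^{2}$ — is equivalent to $\psi(x)=\pm\sqrt{\lambda}$. Thus for all small $\lambda>0$ the two solutions are $x_{a}=\chi(\sqrt{\lambda})$ and $x_{p}=\chi(-\sqrt{\lambda})$ (they exist because $0$ is a strict local minimum), and property $(\mathcal{P})$ becomes $\chi(u)-\chi(-u)=\varpi\,u$ for all small $u>0$, with $u=\sqrt{\lambda}$. Writing $\chi(u)=u+c_{2}u^{2}+c_{3}u^{3}+\cdots$, the left-hand side equals $2u+2c_{3}u^{3}+2c_{5}u^{5}+\cdots$, so matching power series forces $\varpi=2$ and, in particular, $c_{3}=0$. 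It then remains to express $c_{3}$ through $a_{3},a_{4}$: expanding $\sqrt{g(x)}=1+\tfrac{a_{3}}{2}x+\bigl(\tfrac{a_{4}}{2}-\tfrac{a_{3}^{2}}{8}\bigr)x^{2}+\cdots$ gives $\psi(x)=x+\tfrac{a_{3}}{2}x^{2}+\bigl(\tfrac{a_{4}}{2}-\tfrac{a_{3}^{2}}{8}\bigr)x^{3}+\cdots$, and inverting this series to third order yields $c_{3}=\tfrac{5}{8}a_{3}^{2}-\tfrac{1}{2}a_{4}$. Hence $c_{3}=0$ is exactly $5a_{3}^{2}=4a_{4}$, which on substituting $a_{3}=\varphi^{(3)}(0)/6$ and $a_{4}=\varphi^{(4)}(0)/24$ becomes $5\bigl[\varphi^{(3)}(0)\bigr]^{2}=6\varphi^{(4)}(0)$, i.e.\ \eqref{neweqB1}.

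I expect no genuine obstacle here; the only points requiring care are: (i) checking that for $\lambda$ small enough the two chord intersections $x_{p}<0<x_{a}$ really exist and lie in the chart where $\psi$ is invertible, so that $(\mathcal{P})$ applies and legitimately translates into the displayed identity for $\chi$; and (ii) the routine but error-prone bookkeeping of the two series expansions — the square root of $g$ and the inversion of $\psi$ — to third order. Everything else is formal, and this local computation is precisely what will be iterated (at higher order, or after the affine reductions of Lemma~\ref{newwwlemm1}) to force the Taylor series of $\varphi$ to terminate and thereby prove that $\mathcal{C}$ is a parabola.
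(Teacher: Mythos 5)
Your proposal is correct and follows essentially the same route as the paper: write $\varphi$ near $x_{0}$ as a square $\varphi=\psi^{2}$ with $\psi$ a local analytic diffeomorphism, observe that property $(\mathcal{P})$ at a horizontal tangent forces the inverse series $\chi$ to satisfy $\chi(u)-\chi(-u)=\varpi u$ and hence to have vanishing odd coefficients beyond the first, and translate $c_{3}=0$ back into the relation $5\bigl[\varphi^{(3)}(x_{0})\bigr]^{2}=6\varphi^{(4)}(x_{0})$. The only difference is cosmetic bookkeeping (you expand $\sqrt{g}$ and invert directly, the paper expresses the coefficients of $G$ through $h_{2}$ via $H\circ G=\mathrm{id}$ and then squares), and your final coefficient $c_{3}=\tfrac{5}{8}a_{3}^{2}-\tfrac{1}{2}a_{4}$ checks out.
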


\begin{proof}
Let $\varphi \left(  x\right)  =g\left(  z\right)  $ with\ $z=x-x_{0}$.
Then, since $\varphi$ is analytic, $g$ has a convergent Taylor expansion at $x_{0}$ 
of the form $g \left(  z\right)  =z^{2}+g_{3}z^{3}+g_{4}z^{4}+\cdots$, such that
\[
g(z)=z^{2}\left(  1+\sum_{n\geq3}g_{n}z^{n-2}\right)  =z^{2}(1+R(z)),
\]
where $R(z)$ is a convergent series that vanishes at $z=0$. Then, we may expand
\[
\sqrt{1+R(z)}=1+\frac{1}{2}R+\frac{1}{2!}\left(  \frac{1}{2}\right)  \left(
\frac{1}{2}-1\right)  R^{2}+\dots
\]
and insert it in
\[
\sqrt{g(z)}=G(z)=z\sqrt{1+R(z)}=z+G_{2}z^{2}+G_{3}z^{3}+\dots
\]
Because $G(0)=0$ and $G^{\prime}(0)=1$, $G$ is locally bijective in the
neighborhood of $z=0$; the analytic inverse function theorem assures that its
inverse $H$ is also a convergent power series $H(z)=z+\sum_{n\geq2}h_{n}z^{n}$. 

The fact that $\varphi$ satisfies ($\mathcal{P}$) means that for any small
enough $\lambda>0$ the two solutions $z_{1}$ and $z_{2}>z_{1}$ of $g\left(
z\right)  =\lambda$ satisfy $z_{2}-z_{1}=\varpi \left(  x_{0}\right)
\sqrt{\lambda}$. However,
\[%
\begin{array}
[c]{cc}%
g\left(  z\right)  =\lambda & \Leftrightarrow G^{2}\left(  z\right)
=\lambda \\
& \Leftrightarrow G\left(  z\right)  =\pm \lambda \\
& \Leftrightarrow z=H\left(  \pm \lambda \right).
\end{array}
\]
More precisely, $z_{2}=H(\sqrt{\lambda})$ and $z_{1}=H(-\sqrt{\lambda
})$ if $\lambda \geq0$ is small enough because $H$ locally increases. The
second condition from $\left(  \mathcal{P}\right)  $ gives $H(\sqrt{\lambda
})-H(-\sqrt{\lambda})=\varpi(x_{0})\sqrt{\lambda}$ for sufficiently small
$\lambda \geq0$ and
\begin{equation}
H(t)-H(-t)=\varpi t,
\label{eqintermmmmm}
\end{equation}
since all members of the previous equation are power series. Inserting the
expression of $H(t)=\sum_{n\geq1}h_{n}t^{n}$ in~\eqref{eqintermmmmm}, noting that the even terms
disappear, one finds $2h_{1}=2=\varpi$ and $h_{2m+1}=0$ if $m\geq1$. In
other words,
\[
H(t)=t+h_{2}t^{2}+h_{4}t^{4}+\dots
\]
Identifying the terms of the equality given by $H\circ G(z)=z$, one specifically
finds $G_{2}=-h_{2}$ and $G_{3}=-2h_{2}G_{2}=2h_{2}^{2}$. Hence the expansion
of $g$ is written as
\[%
\begin{array}
[c]{ccccl}%
g(z) & = & G^{2}(z) & = & z^{2}-2h_{2}z^{3}+5h_{2}^{2}z^{4}+\dots \\
&  &  & = & \frac{1}{2}g^{\prime \prime}(x_{0})z^{2}+\frac{1}{6}g^{(3)}(x_{0})z^{3}+\frac{1}{24}g^{(4)}(x_{0}%
)z^{4}+\dots ,
\end{array}
\]
where the identification between each term leads to $\left(  g^{(3)}(x_{0})\right)  ^{2}=12^{2}h_{2}^{2}=\frac{6}{5}g^{(4)}(x_{0})$ which is exactly~\eqref{neweqB1}.\qed
\end{proof}

We now exploit this particular case to characterize the property
($\mathcal{P}$) in terms of a differential equation.

\begin{lemma}
Let $f:I\rightarrow \mathbb{R}$ be a real analytic function satisfying
($\mathcal{P}$). Then $f$ also satisfies %
\begin{equation}
\forall x_{0}\in I\text{, }5\left[  f^{\left(  3\right)  }\left(
x_{0}\right)  \right]  ^{2}=3f^{\left(  4\right)  }\left(  x_{0}\right)
f^{\prime \prime}\left(  x_{0}\right).
\label{newB2}
\end{equation}
\end{lemma}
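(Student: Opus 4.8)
The plan is to reduce the general case to the special horizontal-tangent case already handled in Lemma~\ref{newlemmm2}, using the affine invariance recorded in Lemma~\ref{newwwlemm1}. Fix $x_0 \in I$. We may assume $f''(x_0) \neq 0$ since $(\mathcal P)$ forces $f$ to be strictly convex or concave. The idea is to build from $f$ a new analytic function $\varphi$ which satisfies the hypotheses of Lemma~\ref{newlemmm2} at the same point $x_0$, namely $\varphi'(x_0) = 0$ and $\varphi''(x_0) = 2$, then translate the conclusion~\eqref{neweqB1} back into a statement about $f$.

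\medskip

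First I would set $\varphi(x) := \dfrac{2}{f''(x_0)}\bigl(f(x) - f(x_0) - f'(x_0)(x - x_0)\bigr)$. By part~2 of Lemma~\ref{newwwlemm1} (subtracting an affine function $\varepsilon x + \lambda$) and part~1 (multiplying by the nonzero constant $2/f''(x_0)$), the function $\varphi$ still satisfies property $(\mathcal P)$, and it is analytic on $I$. By construction $\varphi(x_0) = 0$, $\varphi'(x_0) = 0$, and $\varphi''(x_0) = \dfrac{2}{f''(x_0)} f''(x_0) = 2$, so Lemma~\ref{newlemmm2} applies and gives
\begin{equation}
5\bigl[\varphi^{(3)}(x_0)\bigr]^2 = 6\,\varphi^{(4)}(x_0).
\label{eq:planvarphi}
\end{equation}
Now I would simply compute the higher derivatives of $\varphi$ in terms of those of $f$: since $\varphi$ differs from $\dfrac{2}{f''(x_0)} f$ by an affine function, for all $k \geq 2$ we have $\varphi^{(k)}(x_0) = \dfrac{2}{f''(x_0)} f^{(k)}(x_0)$. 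Substituting into~\eqref{eq:planvarphi} gives
\begin{equation}
5 \cdot \frac{4}{\bigl(f''(x_0)\bigr)^2}\bigl[f^{(3)}(x_0)\bigr]^2 = 6 \cdot \frac{2}{f''(x_0)} f^{(4)}(x_0),
\label{eq:plansubst}
\end{equation}
and multiplying through by $\bigl(f''(x_0)\bigr)^2/2$ yields $10\bigl[f^{(3)}(x_0)\bigr]^2 = 6\, f^{(4)}(x_0) f''(x_0)$, i.e. $5\bigl[f^{(3)}(x_0)\bigr]^2 = 3\, f^{(4)}(x_0)\, f''(x_0)$, which is exactly~\eqref{newB2}. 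Since $x_0 \in I$ was arbitrary, this holds on all of $I$.

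\medskip

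I do not expect any serious obstacle here: the content of the lemma is entirely contained in Lemma~\ref{newlemmm2}, and the only work is the bookkeeping of how differentiation interacts with the affine normalization, plus the observation that $(\mathcal P)$ is preserved under that normalization (Lemma~\ref{newwwlemm1}). The one point deserving a word of care is that the normalization must use the value $f''(x_0)$ at the fixed point $x_0$ (a constant), not the function $f''$, so that what we subtract and rescale by is genuinely an affine map plus a constant multiplier — this is what keeps us inside the scope of Lemma~\ref{newwwlemm1}. One should also note that the case distinction between $f'' > 0$ and $f'' < 0$ needs no separate treatment, since $2/f''(x_0)$ is simply negative in the concave case and the algebra in~\eqref{eq:plansubst} is unaffected.
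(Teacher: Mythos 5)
Your proof is correct and is essentially identical to the paper's: the same normalization $\varphi(x)=\frac{2}{f''(x_0)}\bigl[f(x)-f(x_0)-f'(x_0)(x-x_0)\bigr]$, the same appeal to Lemma~\ref{newwwlemm1} to keep $(\mathcal{P})$, and the same application of Lemma~\ref{newlemmm2} followed by the chain-rule bookkeeping $\varphi^{(k)}(x_0)=\frac{2}{f''(x_0)}f^{(k)}(x_0)$ for $k\geq 2$. Your observation that item~1 of $(\mathcal{P})$ already guarantees $f''(x_0)\neq 0$ is a clean way to dispose of the degenerate case (the paper instead remarks afterwards that the identity extends by analytic continuation to isolated zeros of $f''$).
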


\begin{proof}
For any point $x_{0}\in I$ with $f^{\prime \prime}\left(  x_{0}\right)  \neq0$,
the function
\[
\varphi \left(  x\right)  =\frac{2}{f^{\prime \prime}\left(  x_{0}\right)
}\left[  f\left(  x\right)  -f\left(  x_{0}\right)  -f^{\prime}\left(
x_{0}\right)  \left(  x-x_{0}\right)  \right]
\]
satisfies the property ($\mathcal{P}$) according to lemma \ref{newwwlemm1}.
Moreover we have that $\varphi^{\prime}\left(  x_{0}\right)  =0$ and%
\[
\left \{
\begin{array}
[c]{l}%
\varphi^{\prime \prime}\left(  x\right)  =\frac{2f^{\prime \prime}\left(
x\right)  }{f^{\prime \prime}\left(  x_{0}\right)  }\implies \varphi
^{\prime \prime}\left(  x_{0}\right)  =2\\
\varphi^{(3)}\left(  x\right)  =\frac{2f^{(3)}\left(  x\right)  }%
{f^{\prime \prime}\left(  x_{0}\right)  }\implies \varphi^{(3)}\left(
x_{0}\right)  =\frac{2f^{(3)}\left(  x_{0}\right)  }{f^{\prime \prime}\left(
x_{0}\right)  }\\
\varphi^{(4)}\left(  x\right)  =\frac{2f^{(4)}\left(  x\right)  }%
{f^{\prime \prime}\left(  x_{0}\right)  }\implies \varphi^{(4)}\left(
x_{0}\right)  =\frac{2f^{(4)}\left(  x_{0}\right)  }{f^{\prime \prime}\left(
x_{0}\right)  }.%
\end{array}
\right.
\]
As a consequence, $\varphi$ satisfies the assumptions of lemma
\ref{newlemmm2} and therefore (\ref{neweqB1})$\implies$(\ref{newB2}).\qed
\end{proof}

Let us observe that (\ref{newB2}) was obtained under the condition
that $f^{\prime \prime}\left(  x_{0}\right)  \neq0$. By analytic continuation
the relation is still satisfied at the isolated points where $f^{\prime \prime}$ could vanish.

We are therefore led to solve (\ref{newB2}), which is in fact the
\emph{universal differential equation for parabolas}. Setting $w=f^{\prime \prime}$,  (\ref{newB2}) becomes
\begin{equation}
5\left(  w^{\prime}\right)  ^{2}=3w^{\prime \prime}w.\label{newnewnewB3}%
\end{equation}
Two cases may occur:

\begin{enumerate}
\item If $w^{\prime}=f^{\left(  3\right)  }:=0$ on $I$:

then $f^{\prime \prime}$ is constant and $f$ is a second-degree polynomial and its
graph $\mathcal{C}$ is a parabola.

\item If $w^{\prime}=f^{\left(  3\right)  }$ do not vanish everywhere on $I$:

then on any subset where $w^{\prime}\neq0$, equation~\eqref{newnewnewB3} becomes%
\[
\frac{5w^{\prime}}{w}=\frac{3w^{\prime \prime}}{w^{\prime}},%
\]
which gives by integration%
\[
5\ln \left \vert w\right \vert =3\ln \left \vert w^{\prime}\right \vert +cst\implies
w^{\prime}w^{-5/3}=cst.
\]
Hence $w^{-2/3}$ is a linear function of $x$, namely $w\left(  x\right)
=f^{\prime\prime}\left(  x\right)  =\left(  \varepsilon x+\lambda \right)  ^{-3/2}$. By
integrating this equation twice, we get that $f$ is proportional to
$f_{0}\left(  x\right)  =\sqrt{\varepsilon x+\lambda}+ax+b$ whose graph is a
parabola too.
\end{enumerate}
This concludes the proof of theorem \ref{newhenontheo}.

\section{Useful Lemmas \label{appendixlemma}}

Consider
\begin{itemize}
\item a frame $\mathcal{R}_O=\left(  O,\vec{i},\vec{j}\right)  $
with coordinates $\left(  x,y\right)$ for each point $M$;

\item a linear application $L : \mathbb{R}^{2}\rightarrow
\mathbb{R}^{2}$ such that $L\left(  \vec{i}\right)  =\vec{u}$ and $L\left(
\vec{j}\right)  =\vec{v}$;

\item a curve $\mathcal{C}$ of equation $f\left(  x,y\right)  =0$ in
the frame $\mathcal{R}$.
\end{itemize}
The linearity of $L$ ensures the two properties below.

\begin{lemma}
\label{lemme_lineaire} The cartesian equation of curve
$\mathcal{C}^{\prime}=L\left(  C\right)  $ in the frame
$\mathcal{R}_O^{\prime}=\left(  O,\vec{u},\vec{v}\right)  $ remains  $f\left(
x,y\right)  =0 $.
\end{lemma}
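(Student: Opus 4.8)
The plan is to reduce everything to the elementary observation that $L$ carries the $\mathcal{R}_O$-coordinates of a point to the $\mathcal{R}_O^{\prime}$-coordinates of its image, without altering the actual pair of numbers. First I would fix an arbitrary point $M$ of the plane with coordinates $(x,y)$ in $\mathcal{R}_O$, i.e. $\overrightarrow{OM}=x\vec{i}+y\vec{j}$, and define $M^{\prime}$ to be the point with $\overrightarrow{OM^{\prime}}=L\!\left(\overrightarrow{OM}\right)$, so that $\mathcal{C}^{\prime}=L(\mathcal{C})=\{M^{\prime}:M\in\mathcal{C}\}$. Using the linearity of $L$ together with $L(\vec{i})=\vec{u}$ and $L(\vec{j})=\vec{v}$, I obtain
\[
\overrightarrow{OM^{\prime}}=L\!\left(x\vec{i}+y\vec{j}\right)=x\,L(\vec{i})+y\,L(\vec{j})=x\vec{u}+y\vec{v},
\]
which says precisely that $M^{\prime}$ has coordinates $(x,y)$ in the frame $\mathcal{R}_O^{\prime}=(O,\vec{u},\vec{v})$ — the very same pair of numbers.

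Next I would record that $\mathcal{R}_O^{\prime}$ is genuinely a frame: since $L$ is invertible (it is used to change frames), $\vec{u}$ and $\vec{v}$ are linearly independent, the map $M\mapsto M^{\prime}$ is a bijection of the plane, and the $\mathcal{R}_O^{\prime}$-coordinates of a point are well defined. Together with the previous step, the correspondence $M\leftrightarrow M^{\prime}$ then identifies ``the $\mathcal{R}_O$-coordinates of $M$'' with ``the $\mathcal{R}_O^{\prime}$-coordinates of $M^{\prime}$''.

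Finally I would conclude: a point $M^{\prime}$ lies on $\mathcal{C}^{\prime}$ if and only if $M=L^{-1}(M^{\prime})$ lies on $\mathcal{C}$, if and only if the coordinates $(x,y)$ of $M$ in $\mathcal{R}_O$ satisfy $f(x,y)=0$, if and only if the coordinates of $M^{\prime}$ in $\mathcal{R}_O^{\prime}$ — which are the same $(x,y)$ — satisfy $f(x,y)=0$. Hence the Cartesian equation of $\mathcal{C}^{\prime}$ in $\mathcal{R}_O^{\prime}$ is again $f(x,y)=0$. There is no real obstacle here; the one point requiring care is not to conflate the passive ``change of frame'' viewpoint with an active transformation of points, and the content of the lemma is exactly that these two viewpoints produce the same equation — phrasing the argument through the vector identity $\overrightarrow{OM^{\prime}}=x\vec{u}+y\vec{v}$ keeps that distinction transparent.
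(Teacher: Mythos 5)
Your argument is correct and is essentially the paper's own proof: both apply $L$ by linearity to $\overrightarrow{OM}=x\vec{i}+y\vec{j}$ to see that the image point has the \emph{same} coordinate pair $(x,y)$ in the frame $(O,\vec{u},\vec{v})$, hence satisfies the same equation $f(x,y)=0$. Your additional remarks on invertibility and the bijectivity of $M\mapsto M^{\prime}$ only make explicit what the paper leaves implicit.
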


\begin{proof}
   Consider $\overrightarrow{OM} = x \vec{i} + y \vec{j}$. Then $M \in \mathcal C \Leftrightarrow f(x,y)=0$.
   But $L(\overrightarrow{OM}) = x L(\vec{i}) + y L(\vec{j}) = x \vec u + y \vec v \in \mathcal C^\prime$ by definition. 
   Thus $f\left(x,y\right)  =0 $ also defines $\mathcal C^\prime$ in $\mathcal{R}_O^{\prime}$.
\end{proof}

Define 
\begin{itemize}
\item $\mathcal{T}_{O}\left(  \mathcal{P}\right)  $ the tangent at the origin $O$ to a parabola $\mathcal{P}$;

\item $\mathcal{S}\left(  \mathcal{P}\right)  $ the symmetry axis of parabola $\mathcal{P}$.
\end{itemize}
Then we have the following lemma: 

\begin{lemma}
\label{lemme_des_droites}If $\mathcal{P}^{\prime}=L\left(  \mathcal{P}\right)
$ then $\mathcal{T}_{O}\left(  \mathcal{P}^{\prime}\right)  =L\left(
\mathcal{T}_{O}\left(  \mathcal{P}\right)  \right)  $ and $\mathcal{S}\left(
\mathcal{P}^{\prime}\right)  =L\left(  \mathcal{S}\left(  \mathcal{P}\right)
\right)  $.
\end{lemma}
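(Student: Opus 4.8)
The plan is to deduce both statements from Lemma~\ref{lemme_lineaire}, which is exactly tailored for this. First I would record the standing hypothesis implicit in both lemmas: since $\left(O,\vec u,\vec v\right)$ is required to be a frame, the pair $\left(\vec u,\vec v\right)=\left(L(\vec i),L(\vec j)\right)$ is a basis, so $L$ is a linear isomorphism; in particular $L(O)=O$, hence $O\in\mathcal P$ if and only if $O\in\mathcal P'$, and the two sides of each asserted equality refer to well-defined lines simultaneously. I would then write $f(x,y)=0$ for a Cartesian equation of $\mathcal P$ in $\mathcal R_O=\left(O,\vec i,\vec j\right)$; by Lemma~\ref{lemme_lineaire} the \emph{same} polynomial $f$ is a Cartesian equation of $\mathcal P'=L(\mathcal P)$ in $\mathcal R_O'=\left(O,\vec u,\vec v\right)$. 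The whole argument then rests on the observation that $\mathcal T_O(\mathcal P)$ and $\mathcal S(\mathcal P)$ are extracted from $f$ (and from the marked point $O$) by a recipe that does not mention the frame, so that the line cut out by a given equation in $\mathcal R_O$ is carried by $L$ to the line cut out by the same equation in $\mathcal R_O'$ — because a point of $\mathcal R_O$-coordinates $(x_0,y_0)$ and the point of $\mathcal R_O'$-coordinates $(x_0,y_0)$ are related by $x_0\vec i+y_0\vec j\mapsto x_0\vec u+y_0\vec v=L\!\left(x_0\vec i+y_0\vec j\right)$.

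For the tangent at $O$ I would decompose $f=q+\ell$ into its homogeneous quadratic and linear parts (the constant part vanishes because $O\in\mathcal P$). Since $\mathcal P$ is a smooth curve through $O$, the gradient $\nabla f(O)$ is the coefficient vector of $\ell$ and is nonzero, so $\mathcal T_O(\mathcal P)=\{\ell(x,y)=0\}$ in $\mathcal R_O$; reading the same equation in $\mathcal R_O'$ gives $\mathcal T_O(\mathcal P')=\{\ell(x,y)=0\}$, which is precisely $L\!\left(\mathcal T_O(\mathcal P)\right)$. Equivalently, and more robustly, I could give the purely differential-geometric version: a curve $\gamma$ lying in $\mathcal P$ with $\gamma(0)=O$ is sent by $L$ to a curve $L\circ\gamma$ lying in $\mathcal P'$ with $(L\circ\gamma)(0)=O$ and velocity $L\!\left(\dot\gamma(0)\right)$, so the tangent direction is multiplied by $L$; this uses nothing beyond linearity and invertibility of $L$.

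For the symmetry axis I would use that $\mathcal P$ is a \emph{parabola}, so the quadratic part of $f$ is, up to a nonzero scalar, a perfect square, $q(x,y)=\varrho\,(ax+by)^2$ with $(a,b)\neq(0,0)$; the axial (asymptotic) direction of $\mathcal P$ is then $\{ax+by=0\}$. However one makes the notion precise — for the reduced parabolas of Figure~\ref{fig:redpara} it is the ordinary symmetry axis, and in general, by Definition~\ref{def:refframe}, $\mathcal S(\mathcal P)$ is the line through $O$ in that axial direction — $\mathcal S(\mathcal P)$ is the locus $\{ax+by=0\}$ in $\mathcal R_O$, an equation read off from $f$; hence $\mathcal S(\mathcal P')=\{ax+by=0\}$ in $\mathcal R_O'$, i.e. $L\!\left(\mathcal S(\mathcal P)\right)$. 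A conceptual alternative: $\mathcal S(\mathcal P)$ is the fixed line of the (oblique) reflection that preserves $\mathcal P$ and passes through $O$, and $L$ conjugates that reflection to a reflection preserving $\mathcal P'=L(\mathcal P)$ and still fixing $O$, so it sends axis to axis.

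The main obstacle is not computational but definitional: one must pin down exactly which line the paper calls the ``symmetry axis'' so that the recipe used is genuinely frame-independent (the ordinary Euclidean axis is \emph{not} preserved by a non-orthogonal $L$, whereas the diameter through $O$ in the axial direction is). Once that is fixed, and once the invertibility of $L$ is noted (so that Lemma~\ref{lemme_lineaire} applies verbatim and no degenerate ``$L$ singular on $\mathbb R\vec i$ or $\mathbb R\vec j$'' case arises), both assertions are immediate.
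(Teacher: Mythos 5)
Your proof is correct and follows essentially the same route as the paper's: invoke Lemma~\ref{lemme_des_droites}'s companion Lemma~\ref{lemme_lineaire} to get the \emph{same} Cartesian equation $(ax+by)+e=(cx+dy)^2$ for $\mathcal{P}$ and $\mathcal{P}'$ in their respective frames, then read off the tangent direction from the linear part and the axial direction from the repeated linear form in the quadratic part. Your added remark that the ``symmetry axis'' must be understood as the diameter through $O$ in the asymptotic direction (rather than the Euclidean axis, which a non-orthogonal $L$ would not preserve) is a point the paper passes over silently but implicitly adopts via the direction vector $\vec n=-d\vec{\mathit{i}}+c\vec{\mathit{j}}$.
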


\begin{proof}
   According to lemma~\ref{lemme_lineaire}, $\mathcal P$ and $\mathcal P^\prime$ have
   the equation $(ax+by) +e = (cx+dy)^2$ in their respective frames.
   Then the tangent $\mathcal{T}_{O}$ has the direction vector $\vec t = -b \vec i + a \vec j$ and
   the symmetry axis $\mathcal{S}\left(  \mathcal{P}\right)$ has the vector $\vec n = -d \vec i + c \vec j$.
   In the same way, with natural notations, $\vec t^\prime = -b \vec u + a \vec v$ and $\vec n^\prime = -d \vec u + c \vec v$.
   Thus $\vec t^\prime=L(\vec t)$ and $\vec n^\prime = L(\vec n)$.
\end{proof}

\section{Isochrone integrals\label{appendixb}}

\begin{lemma}
\label{lemma:integrals}
The Keplerian and harmonic radial actions are given by 
$$
\mathcal{A}_{r}^{%
\mathrm{ke}}=\frac{\mu }{\sqrt{-2\xi }}-\Lambda 
\ \text{ and } \
\mathcal{A}_{r}^{\mathrm{ha}}=\frac{\xi }{2\omega}-\frac{\Lambda}{2}. 
$$
For any pair of positive real $\left( u_{1},u_{2}\right)$ such that 
$u_{1}<u_{2}$, we have 
\begin{equation*}
\mathcal{I}_{1}\left( u_{1},u_{2}\right) =\int_{u_{1}}^{u_{2}}\frac{\sqrt{%
\left( u-u_{1}\right) \left( u_{2}-u\right) }}{u}\,du=\tfrac{\pi }{2}\left(
u_{1}+u_{2}-2\sqrt{u_{1}u_{2}}\right) 
\end{equation*}%
and 
\begin{equation*}
\begin{array}{ccl}
\mathcal{I}_{2}\left( u_{1},u_{2}\right) &=&\displaystyle \int_{u_{1}}^{u_{2}}\frac{\left(
u-1\right) \sqrt{(u-u_{1})\left( u_{2}-u\right) }}{u\left( u-2\right) }%
\,du\\
&=&\left\{ 
\begin{array}{cl}
\frac{\pi }{2}\left[ u_{1}+u_{2}-\sqrt{u_{1}u_{2}}-\sqrt{\left(
u_{1}-2\right) \left( u_{2}-2\right) }-2\right]  & \text{if\ \ }2<u_{1} \\ 
\frac{\pi }{2}\left[ u_{1}+u_{2}-\sqrt{u_{1}u_{2}}+\sqrt{\left(
u_{1}-2\right) \left( u_{2}-2\right) }-2\right]  & \text{if\ }\ u_{2}<2.%
\end{array}%
\right. 
\end{array}
\end{equation*}
\end{lemma}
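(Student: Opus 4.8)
The plan is to prove the two integral identities $\mathcal{I}_1$ and $\mathcal{I}_2$ first, and then read off the Keplerian and harmonic radial actions by specialising them, using the Vieta relations already recorded in~\eqref{eq:RAke} and~\eqref{eq:RAha}. For $\mathcal{I}_1$ I would avoid a brute-force trigonometric computation by first using the algebraic identity
\begin{equation*}
\frac{\sqrt{(u-u_{1})(u_{2}-u)}}{u}=\frac{u_{1}+u_{2}-u}{\sqrt{(u-u_{1})(u_{2}-u)}}-\frac{u_{1}u_{2}}{u\sqrt{(u-u_{1})(u_{2}-u)}},
\end{equation*}
which is just the rewriting $(u-u_{1})(u_{2}-u)=u(u_{1}+u_{2}-u)-u_{1}u_{2}$. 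The two resulting integrals are classical: the substitution $u=\frac{u_{1}+u_{2}}{2}+\frac{u_{2}-u_{1}}{2}\sin\phi$ gives $\int_{u_{1}}^{u_{2}}\frac{(u_{1}+u_{2}-u)\,du}{\sqrt{(u-u_{1})(u_{2}-u)}}=\frac{\pi}{2}(u_{1}+u_{2})$, and the same substitution followed by a Weierstrass substitution gives $\int_{u_{1}}^{u_{2}}\frac{du}{u\sqrt{(u-u_{1})(u_{2}-u)}}=\frac{\pi}{\sqrt{u_{1}u_{2}}}$, the latter using crucially that $0<u_{1}<u_{2}$. Subtracting yields $\mathcal{I}_{1}(u_{1},u_{2})=\frac{\pi}{2}\big(u_{1}+u_{2}-2\sqrt{u_{1}u_{2}}\big)=\frac{\pi}{2}\big(\sqrt{u_{2}}-\sqrt{u_{1}}\big)^{2}$.

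For $\mathcal{I}_2$ I would start from the partial fraction decomposition $\frac{u-1}{u(u-2)}=\frac{1}{2}\big(\frac{1}{u}+\frac{1}{u-2}\big)$, which gives $\mathcal{I}_{2}(u_{1},u_{2})=\frac{1}{2}\mathcal{I}_{1}(u_{1},u_{2})+\frac{1}{2}\int_{u_{1}}^{u_{2}}\frac{\sqrt{(u-u_{1})(u_{2}-u)}}{u-2}\,du$. In the remaining integral the shift $v=u-2$ restores the $\mathcal{I}_1$-shape of the integrand, but one must track the sign of $u-2$ on $[u_{1},u_{2}]$: when $2<u_{1}$ the integral equals $\mathcal{I}_{1}(u_{1}-2,u_{2}-2)$, whereas when $u_{2}<2$ a further substitution $w=2-u$ shows it equals $-\mathcal{I}_{1}(2-u_{2},2-u_{1})$. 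Substituting the closed form of $\mathcal{I}_{1}$ in each case and simplifying, using $(u_{1}-2)(u_{2}-2)=(2-u_{1})(2-u_{2})$, reproduces exactly the two branches of the stated formula.

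The radial actions then follow by specialisation. For Kepler, $\mathcal{A}_{r}^{\mathrm{ke}}$ is a constant multiple of $\mathcal{I}_{1}(r_{p},r_{a})$ by~\eqref{eq:RAke}; inserting $r_{p}+r_{a}=\mu/|\xi_{k}|$ and $r_{p}r_{a}=\Lambda^{2}/(2|\xi_{k}|)$ into the formula for $\mathcal{I}_{1}$ and simplifying gives $\mathcal{A}_{r}^{\mathrm{ke}}=\frac{\mu}{\sqrt{-2\xi}}-\Lambda$. For the harmonic case the change of variable $u=r^{2}$ reduces the radial action integral to a constant multiple of $\mathcal{I}_{1}(r_{p}^{2},r_{a}^{2})$ as in~\eqref{eq:RAha}, and inserting $r_{p}^{2}+r_{a}^{2}=2\xi_{h}/\omega^{2}$ and $(r_{p}r_{a})^{2}=\Lambda^{2}/\omega^{2}$ yields $\mathcal{A}_{r}^{\mathrm{ha}}=\frac{\xi_{h}}{2\omega}-\frac{\Lambda}{2}$. (The same $\mathcal{I}_{2}$ identity, applied to the roots $(s_{p},s_{a})$ of~\eqref{eq:newvarsRA}, is what later produces the H\'enon and bounded radial actions in Proposition~\ref{prop:Tniso}.)

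There is no deep obstacle — everything reduces to standard one-variable integrals — but the step most prone to error is the sign bookkeeping in $\mathcal{I}_{2}$ for the sub-case $u_{2}<2$, where two successive substitutions each contribute a sign change and one must be careful to land on the $+\sqrt{(u_{1}-2)(u_{2}-2)}$ branch rather than the $-$ one; a sign check at a convenient symmetric value of the parameters is a cheap safeguard. A secondary point of care is to invoke $\int_{u_{1}}^{u_{2}}du/(u\sqrt{(u-u_{1})(u_{2}-u)})=\pi/\sqrt{u_{1}u_{2}}$ only under the hypothesis $u_{1}>0$, which is precisely why the lemma restricts to positive $(u_{1},u_{2})$ — a hypothesis automatically satisfied in the applications, where $u_{1},u_{2}$ are the (squared) peri- and apoastron radii of a \textsc{pro}.
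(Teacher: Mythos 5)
Your proposal is correct, and for $\mathcal{I}_2$ it coincides with the paper's argument (the same partial-fraction split $\tfrac{2(u-1)}{u(u-2)}=\tfrac1u+\tfrac1{u-2}$ followed by the shifts $v=u-2$ and $w=2-u$, with the same sign bookkeeping in the $u_2<2$ branch). Where you genuinely diverge is in the treatment of $\mathcal{I}_1$ and the radial actions: the paper explicitly declines to compute $\mathcal{I}_1$ directly and instead runs the logic in the opposite direction. It first determines $\mathcal{A}_r^{\mathrm{ke}}$ from physical input alone --- $n_\varphi=1$ for Keplerian ellipses gives $\partial\mathcal{A}_r/\partial\Lambda=-1$, Kepler's third law fixes $f'(\xi)$, and the circular-orbit condition $\Lambda=\mu/\sqrt{-2\xi}$ with $\mathcal{A}_r=0$ kills the integration constant --- and only then reads off the value of $\mathcal{I}_1$ from \eqref{intermdddd} by identifying the Vieta sums of \eqref{AKEeq2}. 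You instead establish $\mathcal{I}_1$ as a pure one-variable identity via the decomposition $(u-u_1)(u_2-u)=u(u_1+u_2-u)-u_1u_2$ and two standard integrals, and then specialise to get the radial actions. Both are sound; your route is self-contained and does not presuppose the closedness of Keplerian orbits, the third law, or the circular-orbit normalisation, while the paper's route is shorter on computation but imports those classical facts as external input (and your correct insistence on $u_1>0$ for the integral $\int du/(u\sqrt{(u-u_1)(u_2-u)})=\pi/\sqrt{u_1u_2}$ is exactly the hypothesis the lemma states). One incidental benefit of your direction: it exposes the prefactor in the harmonic reduction as $\omega/(2\pi)$, whereas the paper's equation \eqref{eq:RAha} misprints it as $\sqrt{\mu}/(2\pi)$; the final formula $\mathcal{A}_r^{\mathrm{ha}}=\xi/(2\omega)-\Lambda/2$ is nonetheless the correct one that your specialisation yields.
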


The result of $\mathcal{I}_{1}$ can be obtained by a direct meticulous computation; instead, we propose to deduce it 
from the physical computation of the Keplerian radial action.

In a second step, we will deduce $\mathcal{I}_{2}$ from $\mathcal{I}_{1}$.



\subsection{Computation of $\mathcal{A}_r^\mathrm{ke}$, $\mathcal{A}_r^\mathrm{ha}$ and physical deduction of $\mathcal{I}_{1}$}
\label{subsec:physdeducintegral}
The radial action for an orbit of negative energy $\xi $ and momentum $%
\Lambda $ in a Keplerian potential $\psi _{\mathrm{ke}}\left( r\right) =-\frac{\mu}{r}$ is given by
\begin{eqnarray}
\mathcal{A}_{r}^{\mathrm{ke}} &=&\frac{1}{\pi }{\int_{r_{p}}^{r_{a}}}\sqrt{2%
\left[ \xi -\psi _{\mathrm{ke}}\left( r\right) \right] -\dfrac{\Lambda ^{2}}{%
r^{2}}}dr  \label{AKEeq1} \\
&=&\frac{\sqrt{-2\xi }}{\pi }{\int_{r_{p}}^{r_{a}}}\dfrac{\sqrt{\left(
r-r_{p}\right) \left( r_{a}-r\right) }}{r}dr\text{ with\ }\left\{ 
\begin{array}{l}
r_{p}+r_{a}=-\frac{\mu }{\xi } \\ 
r_{p}r_{a}=-\frac{\Lambda ^{2}}{2\xi }%
\end{array}%
\right.   \label{AKEeq2}
\end{eqnarray}%
as in~\eqref{eq:RAke}. The radial period and the azimuthal precession are just partial derivatives
of the radial action according to~\eqref{djde} and~\eqref{djdl}:
\begin{equation*}
\frac{\partial \mathcal{A}_{r}}{\partial \xi }=\frac{\tau _{r}}{2\pi }\text{
and\ }\frac{\partial \mathcal{A}_{r}}{\partial \Lambda }=-\frac{\Delta
\varphi }{2\pi }=-n_{\varphi }.
\end{equation*}%
For a negative energy, the Kepler orbit is an ellipse with semi-major axis $a=%
\frac{1}{2}\left( r_{a}+r_{p}\right)$, where $r_{a}$ and $r_{p}$ are
respectively the apoastron and the periastron of the trajectory (hence $%
r_{a}\geq r_{p}$). For this Keplerian ellipse we trivially have  $\Delta \varphi =2\pi $
and then $n_{\varphi }=1$. By integration, one gets in this case%
\begin{equation*}
\frac{\partial \mathcal{A}_{r}^{\mathrm{ke}}}{\partial \Lambda }=-1\
\implies \ \mathcal{A}_{r}^{\mathrm{ke}}=-\Lambda +f\left( \xi \right) \text{%
.}
\end{equation*}%
The unknown function $f\left( \xi \right) $ could be expressed in terms of
the radial period through the relation%
\begin{equation*}
\tau _{r}=2\pi \frac{\partial \mathcal{A}_{r}^{\mathrm{ke}}}{\partial \xi }%
=2\pi f^{\prime }\left( \xi \right) .
\end{equation*}
From the classical Kepler's third law, we have $\tau _{r}=\frac{\pi \mu }{\sqrt{2}\left( -\xi
\right) ^{3/2}}$, which gives
\begin{equation}
f\left( \xi \right) =\frac{\mu }{2\sqrt{2}}\int \left( -\xi \right)
^{-3/2}d\xi +c=\frac{\mu }{\sqrt{-2\xi }}+c\ \implies \mathcal{A}_{r}^{%
\mathrm{ke}}=\frac{\mu }{\sqrt{-2\xi }}-\Lambda +c,   \label{justeavant}
\end{equation}
where $c$ is a constant.
On the one hand, for a circular Keplerian orbit we have $%
r_{a}=r_{p}$, so that $\mathcal{A}_{r}^{\mathrm{ke}}$ given by~\eqref{AKEeq2} vanishes in this case.  On the other hand, 
a circular Keplerian orbit is
characterized by $\Lambda =\frac{\mu }{\sqrt{-2\xi }}$. Combining these two
remarks in $\left( \text{\ref{justeavant}}\right) $ gives $c=0$. Plugging this
result into $\left( \text{\ref{AKEeq2}}\right)$, one obtains%
\begin{equation}
\mathcal{I}_{1}\left( r_{p},r_{a}\right) =\frac{\pi }{\sqrt{-2\xi }}\mathcal{%
A}_{r}^{\mathrm{ke}}=\frac{\pi }{2}\left( \frac{\mu }{\left( -\xi \right) }-%
\frac{2 \Lambda}{\sqrt{-2\xi }}\right), 
\label{intermdddd}
\end{equation}%
where we recognize the values of the sum and the product
of $r_{a}$ and $r_{p}$ given by~\eqref{AKEeq2}. Hence,
\begin{equation*}
\mathcal{I}_{1}\left( r_p,r_a\right) =\tfrac{\pi }{2}\left(
r_p+r_a-2\sqrt{r_pr_a}\right) .
\end{equation*}%
Since the above formula holds for any arbitrary positive numbers $r_p \leq r_a$,
we deduce the explicit expression of $\mathcal{I}_{1}$ given in the lemma.

In the same way, given $\tau_r=\frac{\pi}{\omega}$ and $n_\varphi=\frac{1}{2}$ to compute the radial action
with~\eqref{djde} and~\eqref{djdl}, the
proof can be done similarly for a harmonic potential.

\subsection{Proof for the expression of $\mathcal{I}_{2}$}

The result for  $\mathcal{I}_{2}\left( u_{1},u_{2}\right) $ simply comes from
the relation%
\begin{equation*}
\frac{2\left( u-1\right) }{u\left( u-2\right) }=\frac{1}{u}+\frac{1}{u-2}
\end{equation*}%
from which we have
\begin{equation}
2\mathcal{I}_{2}\left( u_{1},u_{2}\right) =\mathcal{I}_{1}\left(
u_{1},u_{2}\right) +\int_{u_{1}}^{u_{2}}\frac{\sqrt{\left( u-u_{1}\right)
\left( u_{2}-u\right) }}{u-2} du.  \label{lastintegral}
\end{equation}%
Two cases are of interest:

\begin{enumerate}
\item If $2<u_{1}<u_{2}$, then plugging $v=u-2$ into the last integral of 
\eqref{lastintegral}, we get\ $2\mathcal{I}_{2}\left(
u_{1},u_{2}\right) =\mathcal{I}_{1}\left( u_{1},u_{2}\right) +\mathcal{I}%
_{1}\left( u_{1}-2,u_{2}-2\right) $ which gives%
\begin{equation*}
\mathcal{I}_{2}\left( u_{1},u_{2}\right) =\frac{\pi }{2}\left[ u_{1}+u_{2}-%
\sqrt{u_1 u_{2}}-\sqrt{\left( u_{1}-2\right) \left( u_{2}-2\right) }-2%
\right] .
\end{equation*}

\item If $0<u_{1}<u_{2}<2$, then plugging $v=2-u$ into the last integral of 
\eqref{lastintegral}, we get\ $2\mathcal{I}_{2}\left(
u_{1},u_{2}\right) =\mathcal{I}_{1}\left( u_{1},u_{2}\right) -\mathcal{I}%
_{1}\left( 2-u_{1},2-u_{2}\right) $ which gives%
\begin{equation*}
\mathcal{I}_{2}\left( u_{1},u_{2}\right) =\frac{\pi }{2}\left[ u_{1}+u_{2}-%
\sqrt{u_{1}u_{2}}+\sqrt{\left( u_{1}-2\right) \left( u_{2}-2\right) }-2%
\right] .
\end{equation*}
\end{enumerate}

This completes the proof for $\mathcal{I}_{2}$.

\bigskip

\textbf{Acknowledgment} This work is supported by the ``IDI 2015" project
funded by the IDEX Paris-Saclay, ANR-11-IDEX-0003-02. JP especially thanks
Jean-Baptiste Fouvry for helpful discussions about Bertrand's theorem.
ASP especially thanks Alain Albouy for his great remarks on Bertrand's theorem
and for sharing his deep historical knowledge. The authors are grateful to Faisal Amlani 
for his detailed copy-editing of the paper and thank the referees of the article for their 
helpful comments and fruitful suggestions.

\end{document}